\newtheorem{theorem}{Theorem}
\newtheorem{proposition}{Proposition}
\newtheorem{lemma}{Lemma}
\newtheorem{claim}{Claim}
\newenvironment{proof}[1][Proof]{\noindent\textbf{#1.} }{\ \rule{0.5em}{0.5em}}
\renewcommand\tagform@[1]{\maketag@@@{\ignorespaces#1\unskip\@@italiccorr}}
\renewcommand\theequation{(\oldtheequation)}
\newcommand{\ba}{{\mathbf a}}
\newcommand{\bb}{{\mathbf b}}
\newcommand{\bm}{{\mathbf m}}
\newcommand{\bp}{{\mathbf p}}
\newcommand{\bq}{{\mathbf q}}
\newcommand{\bv}{{\mathbf v}}
\newcommand{\bw}{{\mathbf w}}
\newcommand{\bx}{{\mathbf x}}
\newcommand{\bz}{{\mathbf z}}
\newcommand{\bA}{{\mathbf A}}
\newcommand{\bB}{{\mathbf B}}
\newcommand{\bC}{{\mathbf C}}
\newcommand{\bF}{{\mathbf F}}
\newcommand{\bG}{{\mathbf G}}
\newcommand{\bH}{{\mathbf H}}
\newcommand{\bI}{{\mathbf I}}
\newcommand{\bK}{{\mathbf K}}
\newcommand{\bM}{{\mathbf M}}
\newcommand{\bU}{{\mathbf U}}
\newcommand{\bX}{{\mathbf X}}
\newcommand{\bBbar}{\overline{\bB}}
\newcommand{\xbar}{\overline{\bx}}
\newcommand{\xop}{\bx^*}
\newcommand{\calB}{\mathcal{B}}
\newcommand{\R}{{\mathbb R}}
\newcommand{\cB}{{\mathcal B}}
\newcommand{\nn}{\nonumber}
\DeclareMathOperator{\E}{\mathrm{E}}
\DeclareMathOperator{\Var}{\mathrm{Var}}
\newcommand*{\trans}{\mathsf{T}}
\newcommand{\zerodisplayskips}{%
  \setlength{\abovedisplayskip}{0.05 in}
  \setlength{\belowdisplayskip}{0.05 in}
  \setlength{\abovedisplayshortskip}{0.05 in}
  \setlength{\belowdisplayshortskip}{0.05 in}}
\appto{\normalsize}{\zerodisplayskips}
\appto{\small}{\zerodisplayskips}
\appto{\footnotesize}{\zerodisplayskips}
\definecolor{darkspringgreen}{rgb}{0.09, 0.45, 0.27} 
\definecolor{darkgray}{rgb}{0.66, 0.66, 0.66}
\begin{document}
  

\title{Robust Intervention in Networks\thanks{We have benefited from conversations with Wonki Cho, Sung-Ha Hwang, Donggyu Kim, Duk Gyoo Kim, Jeong-Yoo Kim, Youngwoo Koh, Matthew Kovach, Omer Tamuz, and participants at several faculty seminars and conferences. Author names are listed in alphabetical order, and all the authors contributed equally to this paper.}
}


\author{
Daeyoung Jeong \and Tongseok Lim \and Euncheol Shin\thanks{Daeyoung Jeong: School of Economics, Yonsei University, Seoul, Republic of Korea. Email: {\tt daeyoung.jeong@gmail.com}; Tongseok Lim: Mitchell E. Daniels, Jr. School of Business, Purdue University, IN, USA. Email: {\tt lim336@purdue.edu}; Eunchol Shin: KAIST College of Business, Seoul, Republic of Korea. Email: {\tt eshin.econ@kaist.ac.kr}.}
}


\maketitle

\sloppy 

\singlespacing

\vspace{-0.3 in}

\begin{abstract}


In economic settings such as learning, social behavior, and financial contagion, agents interact through interdependent networks. This paper examines how a decision maker (DM) can design an optimal intervention strategy under network uncertainty, modeled as a zero-sum game against an adversarial ``Nature'' that reconfigures the network within an uncertainty set. Using duality, we characterize the DM's unique robust intervention and identify the worst-case network structure, which exhibits a rank-1 property, concentrating risk along the intervention strategy. We analyze the costs of robustness, distinguishing between global and local uncertainty, and examine the role of higher-order uncertainties in shaping intervention outcomes. Our findings highlight key trade-offs between maximizing influence and mitigating uncertainty, offering insights into robust decision-making. This framework has applications in policy design, economic regulation, and strategic interventions in dynamic networks, ensuring their resilience against uncertainty in network structures.
\end{abstract}

\vspace{-0.3 in}

\strut



\strut

\textbf{Keywords:} adversarial Nature; duality; network uncertainty; robust optimization

\onehalfspacing



\section{Introduction} \label{sec:introduction}


Central authorities must distribute limited resources across complex and uncertain networks, where decisions are interconnected and dynamic. Network theory can assist in the analysis of these systems, but the network structure is often only partially known, complicating decision-making. Once resources are allocated, participants may engage in actions such as trading, sharing, or forming collaborations, which can lead to unpredictable outcomes. Although recent literature has focused on intervention in complex networks \citep[e.g.,][]{Galeottietal:2020:ECMA,Galeottietal:2024b:WP,JeongShin2024, LiandTan:2025:WP, PariseandOzdaglar:2023:ECMA, Sunetal:IER:2023}, the role of uncertainty has not been fully addressed. Understanding how uncertainty affects intervention decisions is essential for improving the effectiveness and robustness of resource distribution plans.

In this paper, we address the challenge of designing robust intervention strategies in uncertain networks, where the structure and inter-dependencies are only partially known. Using a robust optimization framework, we model the problem as a zero-sum game between a decision maker (henceforth, DM) and an adversarial ``Nature.'' The DM seeks to allocate resources effectively to guide network outcomes toward specific targets, while Nature embodies uncertainty in the network, manipulating its structure to maximize the DM's objective function in the least favorable way. By analyzing the worst-case scenarios that emerge under this framework, we characterize the DM's optimal intervention strategy, which accounts for both the mean influence of the network and the risks introduced by uncertainty. 

Consider the allocation of medical supplies during a pandemic or the injection of liquidity during a financial crisis---both scenarios involve central authorities distributing limited resources across interconnected and dynamic networks. During a pandemic, authorities must allocate essential medical resources, such as vaccines, ventilators, or protective equipment, to various regions with specific targets, such as vaccinating a set percentage of the population to achieve herd immunity or ensuring that hospitals are adequately equipped to handle surges in patients.\footnote{\cite{cunningham2021pharmacies} examines logistical challenges in COVID-19 vaccine distribution, highlighting delays and resource misalignments that parallel the network uncertainties in this paper. 
} Similarly, during a financial crisis, central banks and funding bodies provide liquidity to commercial banks or allocate resources to institutions to achieve specific objectives, such as stabilizing credit markets or encouraging innovation through research funding.\footnote{\cite{liang2018lessons} highlights lessons from previous financial crises, emphasizing the role of central authorities in addressing systemic risks and uncertainties, which parallels the decision-making challenges under ambiguity discussed in our paper.}

However, in both cases, the interconnected nature of the underlying networks creates significant uncertainties. Regions or institutions often trade, share, or collaborate to address local fluctuations in needs or to capitalize on unexpected partnerships. For instance, during a pandemic, patients might be transferred between hospitals, or vaccines might be reallocated across regions experiencing sudden outbreaks. In financial networks, banks might redistribute liquidity through interbank lending or adjust their portfolios based on market conditions. These interactions, while potentially beneficial, can also lead to unpredictable spillover effects and deviations from the original allocation plans.

Such uncertainties pose significant challenges to the effectiveness of resource distribution. Unforeseen trading patterns, unexpected collaborations, and fluctuating demands can create imbalances, leaving some regions or institutions under-resourced while others receive excess supplies. To address these challenges, central authorities must incorporate potential uncertainties into their planning and adopt robust intervention strategies that ensure effective outcomes, even in the face of unpredictable changes and complex interdependencies.

These scenarios illustrate a fundamental problem faced by central authorities: how to intervene in uncertain networks to guide resource distribution toward specific goals. After the initial allocation, agents may engage in trading or collaboration, creating unpredictable outcomes. Central authorities need to incorporate potential network uncertainties into their strategies to ensure that final outcomes align with intended targets. In particular, they need intervention strategies that are robust in the face of such uncertainties, ensuring that final outcomes align with their objectives even when the network behaves unexpectedly. 

To tackle the challenge of robust intervention in uncertain networks, we frame the decision-making problem as a zero-sum game between a DM and an adversarial Nature. At the core of this analysis is the DM's objective function, which features a quadratic form that captures the interaction between the intervention strategy and the network’s uncertain structure. Nature's role is to manipulate the network by maximizing this quadratic term, effectively amplifying the uncertainty in the most detrimental way for the DM. Therefore, from the DM's point of view, by considering the worst-case scenarios resulting from these uncertainties, the DM devises intervention strategies that remain effective regardless of how the adversarial Nature manipulates the network formation process (\autoref{theorem:duality}).

We find that under regularity conditions, there is a unique worst-case scenario where Nature optimizes the correlation between agents' outcomes based on the signs of the intervention vector entries chosen by the DM (\autoref{thm:rank-1}). Importantly, Nature's chosen matrix exhibits a rank-1 property, meaning that uncertainty is concentrated in a single principal component aligned with the DM's intervention vector. This allows Nature to amplify the uncertainty in a specific direction, maximizing the potential harm to the DM's objective. The worst-case scenario is thus characterized by Nature selecting a principal component that focuses all uncertainty on the most damaging aspect of the DM's intervention. 

The consideration of network uncertainty and robust intervention introduces new insights into optimal intervention strategies. Previous models have focused on maximizing mean influence by targeting central agents, often overlooking the risks posed by uncertainty. Our analysis highlights the importance of balancing influence and uncertainty for effective interventions. In Section~\ref{sec:definitions}, we define the cost of uncertainty, distinguishing between global uncertainty, which concentrates risk along the intervention strategy, and local uncertainty, which propagates through network interactions. Section~\ref{subsec:roleofuncertainty} illustrates these effects using a two-agent network, showing that when uncertainty at a high-influence agent exceeds a threshold, the DM optimally reallocates resources to a lower-influence but more stable agent. This result highlights a key trade-off: while targeting high-influence agents is beneficial, excessive uncertainty can undermine intervention effectiveness, emphasizing the need for robust strategies that account for both influence and risk.

In Section~\ref{sec:applicationsandextensions}, we extend our analysis in two key directions. First, we incorporate higher-order interactions, which arise in examples such as network games, supply chains, and financial contagion models. Using a second-order approximation, we show that the rank-1 property of the worst-case scenario persists even when network effects propagate beyond direct interactions, ensuring that intervention strategies remain predictable. Second, we examine robust intervention strategies under partial information, focusing on scenarios where a new agent joins an existing network. This reflects real-world challenges, such as allocating resources in evolving social, research, or economic networks when interactions with new participants remain uncertain. We show that even under incomplete information, a unique worst-case scenario emerges, optimally balancing the DM's knowledge of the existing network with the uncertainty introduced by the new agent.

Taken together, these findings provide a generalizable framework for robust intervention in uncertain networks, offering insights applicable to policy-making, economic regulation, and resource allocation in dynamic environments. Our results emphasize that optimal interventions must not only target high-influence agents but also anticipate and mitigate uncertainty, ensuring more stable and resilient decision-making in complex network settings.

\subsection{Related Literature}\label{subsec:literature}

The current paper is broadly related to three strands of literature: network interventions, strategic interactions in networks, and robust mechanism design.

\vskip+1em

\noindent \textbf{Intervention in networks.} In the recent growing literature on intervention in networks, there are two approaches: (i) intervening in the network structure \citep[e.g.,][]{BDellalena:EER:2024, Sunetal:IER:2023} and (ii) intervening in incentives within a given network structure \citep[e.g.,][]{Belhajetal:GEB:2023, Galeottietal:2020:ECMA, Galeottietal:2024b:WP, JeongShin2024, LiandTan:2025:WP, PariseandOzdaglar:2023:ECMA}. Our paper relates to both strands of this literature. Specifically, we study how a DM can intervene in agents' strategic incentives while accounting for uncertainty in the realized network, which arises from a game-like interaction between the DM and an adversarial Nature. Under certain constraints, the network structure is determined by Nature, an adversarial player whose objective opposes that of the DM. \citet{Galeottietal:2024b:WP} also examine robust interventions to improve market efficiency in the context of oligopolistic market competition. Their notion of robustness focuses on achieving a certain property with high probability. In contrast, our approach emphasizes worst-case scenario optimization, aligning more closely with the robust mechanism design literature \citep[e.g.,][]{BirgemanandMorris:ECMA:2005, Carroll:AER:2015}.

\vskip+1em

\noindent \textbf{Strategic interactions in networks.} Intervention models are applicable to various frameworks, such as public goods games \citep[e.g.,][]{Allouch:JET:2015,BramoulleandKranton:JET:2007,GaleottiandGoyal:RAND:2009} and social learning models \citep[e.g.,][]{DeGroot1974reaching, DeMarzoetal:2003:QJE, GolubJackson2010, GolubandJackson:2012:QJE}. One particular area of the literature related to our research focuses on network games with uncertainty. Previous studies \citep[e.g.,][]{Chaudurietal;WP;2023,Galeotti:REStud:2010,Shin:MSS:2021} examine uncertainty on the agents' side, where agents make equilibrium decisions based on incomplete information about the underlying networks. In contrast, the current paper considers uncertainty on the DM's side; the DM, who intervenes in agents' behavior, has limited information about the underlying networks. 

\vskip+1em

\noindent \textbf{Robust mechanism design.} Our study relates to the growing literature on robust mechanism design. \citet{BirgemanandMorris:ECMA:2005} develop a robust implementation framework that achieves equilibrium under minimal assumptions about agents' knowledge. In a principal-agent model, \citet{Carroll:AER:2015} extends this approach by relaxing distributional assumptions and designing mechanisms that perform optimally under worst-case scenarios of informational uncertainty. Similar worst-case scenario and $\max \min$ approaches have been extensively applied not only in principal-agent settings \citep[e.g.,][]{Frankel:AER:2014, Garrett:GEB:2014, Kambhampati:TE:2024} but also in auction settings \citep[e.g.,][]{BrooksandDu:ECMA:2024,BrooksandDu:WP:2021, che2022distributionallyrobustoptimalauction, HeandLi:JET:2022, Kocyigit:2019:MS} and industrial organization theory \citep[e.g.,][]{GuoandShmaya:2025:AER}. 

To the best of our knowledge, the current paper is the first to study the robustness of a DM's intervention when there is uncertainty about the relevant network structure.\footnote{Several studies highlighted that network formation processes are often influenced by the surrounding environment. For instance, during COVID-19, \citet{Zhang2020} demonstrate that contact patterns between age groups shifted due to social distancing measures, creating uncertainty around susceptibility to infection and the dynamics of COVID-19. Similarly, \citet{Kangetal:2023:WP} present that the pandemic altered traffic patterns across New York City's five boroughs, as social distancing disrupted connectivity. In the context of adolescent friendship networks, \citet{Choietal:WP:2024} find that homophilistic friendship formation patterns change when educational pedagogy undergoes changes.} As in the previous robust mechanism design literature, we analyze the DM's intervention strategy in a network setting to ensure ex-ante optimal outcomes under worst-case scenarios. In our model, due to the quadratic objective function of the DM, the worst-case scenario is represented by the correlation structure, which is in line with, for example, \citet{CremerandMcLean:ECMA:1988}, \citet{HeandLi:JET:2022}, \citet{Myerson:MOR:1981}. In terms of modeling uncertainty, \citet{che2022distributionallyrobustoptimalauction} and \citet{Kocyigit:2019:MS} consider distributional robustness; in contrast, the current model approaches uncertainty as a matrix completion problem.




\section{Model}
\label{sec:model}


\noindent \textbf{Notation.} Throughout this paper, we represent each vector as a column vector. For a given vector $\ba$, we denote its $i$'th entry by $a_i$. For any two vectors $\ba$ and $\bb$, their inner product is denoted by $\langle \ba, \bb \rangle$, and their outer product is denoted by $\ba \otimes \bb$. $\| \ba \|$ represents the vector norm induced by the canonical inner product, specifically the $l^2$ norm. $s(a_i)$ represents the sign of each non-zero entry $a_i$ of vector $\ba$. Matrices are denoted in boldface, and for a matrix $\bA$, we use the following notation: $\bA_i$ represents its $i$th row (as a vector), $\bA^j$ represents its $j$th column (as a vector), $\bA_{ij}$ denotes the element in the $i$th row and $j$th column, and $\bA^\trans$ denotes its transpose. $\| \bA \|$ is the matrix norm induced by the vector norm, which itself is induced by the canonical inner product. $\| \bA \|_{\bF}$ denotes the Frobenius norm.\footnote{Since $\|\bA\|$ corresponds to the spectral norm, if $\bA$ is symmetric and positive semi-definite, then we have $\| \bA \| = \lambda_{\max} \leq \sum_{i=1}^n \lambda_i = \| \bA \|_{\bF}$, where $\lambda_{\max} \geq 0$ is the largest eigenvalue of $\bA$, and $\lambda_i \geq 0$ is the $i$th largest eigenvalue of $\bA$. Equality holds if and only if $\bA$ is a rank-1 matrix.} As long as there is no confusion, a vector can be treated as a type of matrix. For instance, using the definition of transpose, the outer product of two vectors $\ba$ and $\bb$ is represented as $\ba \otimes \bb = \ba \bb^\trans$. 

\vskip+1em

\noindent \textbf{Network, uncertainty, and intervention.} Let $N = \{ 1,2, \ldots, n \}$ represent the set of agents. The relationships and interactions among these agents are encapsulated by an $n \times n$ matrix $\bG \in \R^{n \times n}$, referred to as the \textit{influence network}. For a given allocation vector $\bx \in \R^n$, the resulting outcome vector is computed as $\bG \bx$. Each element $\bG_{ij}$ in the influence network characterizes the influence of agent $j$'s allocation on agent $i$'s outcome: a positive $\bG_{ij}$ indicates a beneficial effect, while a negative $\bG_{ij}$ suggests a detrimental one. Thus, $\bG_{ij} x_j$ quantifies the degree to which agent $j$’s allocation influences agent $i$'s outcome. Our analysis does not rely on any specific symmetry in $\bG$. Depending on the intervention context, $\bG$ may be assumed to be symmetric, as in network games \citep[e.g.,][]{Galeottietal:2020:ECMA}, or asymmetric, as in social learning on networks \citep[e.g.,][]{JeongShin2024}.

A DM is tasked with allocating limited resources to agents by selecting the allocation vector $\bx$. While the specific problem faced by the DM will be defined later, we first address the \textit{uncertainty} the DM encounters. The influence network $\bG$ is not fully known to the DM, as it is modeled as a random matrix. It is postulated that the elements of $\bG$, namely $\bG_{ij}$, are \textit{correlated random variables}. The DM is informed of the mean $\bm_{ij}$ and the variance $\bv_{ij}^2$ with $\bv_{ij} > 0$ for each $\bG_{ij}$ but lacks information about the covariance between them for all $i, j \in N$.

\vskip+1em

\noindent \textbf{Adversarial Nature and sequence of decisions.} As in the standard robust optimization problem, the decision-making framework can be illustrated by a sequence of staged decisions, encapsulating interactions with an adversarial entity termed Nature.\footnote{Another interpretation is the existence of multiple priors due to the limited information regarding the network statistics. For a detailed discussion and applications of robust intervention problems, see \cite{Ben-Tal:2009:Book} and references therein.} In the first stage, the DM selects the allocation vector $\bx$ to make $\bG \bx$ as close as possible to the target outcome $\bz$, while accounting for the direct cost of deviating from the reference allocation vector $\bx^0$. 

Subsequently, in the second stage, an adversarial player, \textit{Nature}, ``intervenes'' by choosing the influence network $\bG$ among the agents. Nature's primary objective is to act in direct opposition to the DM's interests, by manipulating the realization of the underlying network among the agents. 

Finally in the third stage, interactions among the agents occur within the established network parameters, ultimately leading to the realization of an allocation outcome. Since the DM acts as the first mover, they must anticipate Nature's potential actions when formulating their strategy. As will be explained in the next section, Nature's action set, called the uncertainty set, is compact due to the known mean and variance of $\bG_{ij}$ for $i,j \in N$.


Therefore, the DM's robust optimization problem is formally defined as follows:
\begin{align}\label{eqn:problem0}
\begin{aligned}
\min_{\bx \in \R^n} \,\max_{\{ {\rm Cov}(\bG_i \bG_j) \}_{i,j \in N}} &\quad \frac{1}{2} \left( \E \left[ \| \bG \bx - \bz \|^2 \right] + \| \bC^{\frac{1}{2}} (\bx - \bx^0) \|^2 \right)  \\
\text{subject to} &\quad \E \left[ \bG_{ij} \right] = \bm_{ij} \quad \text{and} \quad \Var[\bG_{ij}] = \bv_{ij}^2 \quad \text{for all $i,j \in N$},
\end{aligned}
\end{align}
where vector $\bx = (x_1, \dots , x_n)^\trans \in \R^n$ represents the DM's choice of allocation vector across the agents, $\bz = (z_1, \dots , z_n)^\trans\in \R^n$ is the target outcome vector, $\bx^0$ is a reference allocation vector, and $\bC$ is a symmetric positive semi-definite matrix, without loss of generality.\footnote{Without loss of generality, $\bC$ is assumed to be symmetric because the DM's objective function is quadratic. The mathematical expectation in the objective function is multiplied by $\frac{1}{2}$, following the conventional normalization for quadratic objective functions \citep[e.g.,][]{Galeottietal:2020:ECMA, Galeottietal:2024b:WP}.} The first term $\mathbb{E} \left[ \| \mathbf{G} \bx - \mathbf{z} \|^2 \right]$ measures the cost of having an outcome that deviates from the target vector $\mathbf{z}$. The second term $\| \bC^{\frac{1}{2}} (\bx - \bx^0) \|^2$ represents the cost of choosing an allocation that deviates from the reference allocation vector $\bx^0$. The equations specified under the constraints in the optimization problem represent the statistical information---the mean, $ \E \left[ \bG_{ij} \right]$, and variance, $\Var[\bG_{ij}]$---provided to the DM. A detailed interpretation and the implications of the cost function and information structure are collated in Section \ref{subsec:model_implications}.




\section{Analysis}\label{sec:analysis}

\subsection{Decomposition and Assumptions}

\noindent \textbf{Decomposition.} To isolate the distinct aspects of the objective function and derive theoretical insights more efficiently, we decompose it into its component parts. For each pair of agents $i$ and $j$, $\bU_{ij} = \bG_{ij} - \bm_{ij}$ represents the deviation of the influence from its mean. By construction, the mean of $\bU_{ij}$ is zero, and its variance is $\bv_{ij}^2$.\footnote{Formally, $\E[ \bU_{ij}] = 0$ and $\Var[\bU_{ij}] = \E[\bU_{ij}^2] = \Var[\bG_{ij}] = \bv_{ij}^2$ for all $i, j \in N$.} Let $\bm_i = \E [ \bG_i^\trans]$ be the vector of mean influences toward agent $i$, and let $\bU_i = \bG_i^\trans - \bm_i$ represent the vector of deviations of the influences toward agent $i$. Then, the expected value of the squared distance between the allocation outcome to agent $i$ and the target allocation is
\begin{align}\label{eqn:expectationofagenti}
\E[ | \bG_i \bx - z_i |^2 ] &= \sum_{j=1}^n \sum_{k=1}^n x_j \left( \bm_{ij} \bm_{ik} + \E [\bU_{ij} \bU_{ik}] \right) x_k - 2 z_i \sum_{l=1}^n \bm_{il} x_l + z_i^2  \nn  \\
&= \langle \bx, \bM_i \bx \rangle + \langle \bx, \bB_i \bx \rangle - 2 \langle \psi_i, \bx \rangle + z_i^2,
\end{align}
where $\bM_i = \bm_i \otimes \bm_i$, $\bB_i = \E[ \bU_i^\trans \bU_i]$ is the covariance matrix of the influence deviations from the means toward agent $i$, and $\psi_i = z_i \bm_i$ encapsulates the $z_i$-weighted mean influences of the other agents on agent $i$'s outcome. Since $\E[ \| \bG \bx - \bz \|^2 ] = \sum_{i=1}^n \E[ | \bG_i \bx - z_i |^2 ]$, by summing up expression \eqref{eqn:expectationofagenti}, we obtain
\begin{align}\label{eqn:plannercost1}
\E[ \| \bG \bx - \bz \|^2 ] &= \langle \bx, \bM \bx \rangle + \langle \bx, \bB \bx \rangle - 2 \langle \psi , \bx \rangle + \|\bz\|^2,
\end{align}
where $\bM = \sum_{i=1}^n \bM_i$ represents the \textit{aggregated mean network influence}, reflecting the cumulative average effects across all agents. Similarly, $\bB = \sum_{i=1}^n \bB_i$ represents the \textit{aggregated uncertainty of network influence}, encapsulating the collective variability in agent interactions. Finally, $\psi = \sum_{i=1}^n \psi_i = \sum_{i=1}^n z_i \bm_i$ is the $\bz$-weighted sum of mean network influence across the agents, effectively quantifying the targeted influence alignment with the desired outcomes.

Equation~\eqref{eqn:plannercost1} decomposes the effects of changing allocation vector $\bx$ into three distinct channels. The first term, $\langle \bx, \bM \bx \rangle$, encapsulates the effect of $\bx$ through the mean of the network influence structure to the objective function, highlighting how $\bx$ contributes to the predictable aspects of agent interactions. The second term, $\langle \bx, \bB \bx \rangle$ quantifies the impact arising from the inherent uncertainty within the network influence structure, addressing the variability in agent responses that cannot be precisely predicted. The final term, $2 \langle \psi , \bx \rangle$, reflects the direct effect of $\bx$ in aligning the actual outcomes more closely with the target outcomes, emphasizing the strategic alignment of resources. Among these components, the second term uniquely introduces uncertainty through the network. Consequently, the DM must carefully account for all possible realizations of $\bB$, which are determined by an uncertainty set. This uncertainty set, formally defined as $\cB$ in the following paragraph, encapsulates Nature's choices and determines the scope of adversarial scenarios the DM must address.

\vskip+1em

\noindent \textbf{Uncertainty set.} For each agent $i \in N$, let $\cB_i$ denote a set of matrices $\bB_i$ that are symmetric, positive semi-definite,\footnote{This stems from the setup $\bB_i = \E[ \bU_i^\trans \bU_i]$.}  and have the diagonal elements $(\bB_i)_{jj} = \bv_{ij}^2$ for all $j \in N$. This set $\cB_i$, referred to as the \textit{individual uncertainty set}, forms a convex, compact subset within the space of all $n \times n$ real matrices.\footnote{Using the Frobenius norm notation, it also follows that $\| \bB_i \|_{\bF} = \sum_{j=1}^n \bv_{ij}^2$ for all $\bB_i \in \mathcal{B}_i$. Note that for each $i \in N$, $\cB_i$ is contained in a finite-dimensional Euclidean space. Furthermore, each $\cB_i$ is also bounded as $| (\bB_i)_{jk} |  \leq \bv_{ij} \bv_{ik}$ for all $j,k \in N$ by the Cauchy–Schwarz inequality. This condition implies that $| {\rm Cov}(\bG_{ij}, \bG_{ik}) | \leq \bv_{ij} \bv_{ik}$ for all $j,k \in N$.  It is worth mentioning that this boundedness does not imply positive semi-definiteness. Verifying convexity and closedness is straightforward.} The aggregate of these sets is the Minkowski sum of the individual uncertainty sets, defined by $\cB = \{ \bB \in \R^{n \times n} \, \vert \, \bB = \sum_{i=1}^n \bB_i \, \text{for some $\bB_i \in \cB_i$} \}$. This set represents all possible aggregated uncertainties that can be chosen by Nature and is referred to as the ``uncertainty set'' in the literature on robust optimization \citep{Ben-Tal:2009:Book}.\footnote{
Our consideration of the uncertainty set in the network economics context differs from that in the robust mechanism design literature. For instance, in \citet{BirgemanandMorris:ECMA:2005}, the uncertainty set pertains to agents' type spaces, encapsulating all possible payoff types and beliefs about others' types. In \citet{Carroll:AER:2015}, it relates to the agent's technology set, including possible actions represented as output distributions and costs. In \citet{HeandLi:JET:2022}, it arises from the auctioneer's limited information about the correlation structure of bidders' valuations.} 

\vskip+1em

\noindent \textbf{Assumptions.} In order to enhance the clarity of our analysis and ensure the uniqueness of the DM's robust optimization problem \eqref{eqn:problem0}, we impose two key properties throughout the paper:
\begin{itemize}
    \item[(i)] \textbf{\textsf{Property A.}} The aggregated mean network influence $\bM$ has full rank (independent responsiveness condition).
    \item[(ii)] \textbf{\textsf{Property B.}} The solution to the DM's robust optimization problem \eqref{eqn:problem0} contains no zero entries (non-negligence condition). 
\end{itemize}

The two properties have the following economic interpretations. First, note that \textsf{Property A} (independent responsiveness condition) holds if and only if $\{ \bm_i \}_{i=1}^n$ is linearly independent.\footnote{Mathematically, the following are equivalent: (i) $\bM = \sum_{i=1}^n \bm_i \otimes \bm_i$ has full rank, (ii) $\{ \bm_i \}_{i=1}^n$ is linearly independent, and (iii) $\bM$ is a positive definite matrix.} Since $\bm_i$ represents the mean influence toward agent $i$, the full rank property implies that the mean influence to any agent in the network is not a linear combination of the mean influences toward other agents. \textsf{Property A} holds generically; if $\bM$ does not satisfy \textsf{Property A}, then an arbitrarily small perturbation to $\bM$ will be sufficient to make it hold. This property ensures that the DM's robust intervention problem \eqref{eqn:problem0} has a unique solution. On the other hand, \textsf{Property B} (non-negligence condition) ensures that every agent receives a nonzero allocation under the DM’s optimal intervention, preventing any agent from being overlooked. We will show that \textsf{Property B} is sufficient to guarantee a \textit{unique} worst-case scenario, the solution to the inner maximization problem in  \eqref{eqn:problem0} with respect to the DM's optimal intervention.


\textsf{Property B} can be explicitly stated in terms of the model primitives, the function $\displaystyle g(\bx) = \max_{\bB \in \calB} \tfrac{1}{2} \langle \bx, \bB \bx \rangle$, and the set $Z = \{ (\bM + \bC)\bx + \partial g (\bx) \, \vert \, \bx \text{ has a zero entry} \} \subsetneq \R^n$, where $\partial g (\bx)$ is the set of subgradients of $g$ at $\bx$. Specifically, it turns out that \textsf{Property B} holds if and only if $\psi^0 + \psi \in \R^n \setminus Z$, where $\psi^0 = \bC \bx^0$.\footnote{See Online Appendix for proof. In the Online Appendix, we provide additional sufficient conditions under which \textsf{Property B} holds, along with a graphical example of the two-agent case.} This property is satisfied for most parameter values, particularly when $\bC$ is sufficiently large in terms of its smallest eigenvalue.\footnote{For example, if $\bC = c \bI$ for some $c \geq 0$, it suffices to assume that $c$ is sufficiently large, a condition often assumed in the literature on network intervention \citep[e.g.,][]{Galeottietal:2020:ECMA, JeongShin2024}.} Another instance in which \textsf{Property B} holds is when \(\bM\) is large relative to the variance estimates \(\bv_{ij}^2\). In the context of influence network data, this corresponds to the scenario where the DM’s estimate of \(\bM\) is statistically significant, thereby reducing the relative magnitude of the remaining uncertainty. 

We note that the set $Z$ is defined by a set of linear inequalities, making its complement an open set. The existence of a non-empty interior, along with the uniqueness of the DM’s solution, enables comparative static analysis around the solution.


\subsection{Characterization of Unique Robust Intervention}

By employing the decomposition and assumptions from the previous subsection, we fully characterize the unique robust intervention of the DM in the presence of uncertainty in the network structure.

\vskip+1em

\noindent \textbf{Existence of unique robust intervention.} We rewrite the DM's objective function in terms of $\bx$ and $\bB$ using a function $f: \R^n \times \cB \rightarrow \R$ defined as:
\begin{align}\label{eqn:f(x,B)_main}
f(\bx, \bB) = \frac{1}{2} \big( \langle \bx, \bM \bx \rangle + \langle \bx, \bB \bx \rangle + \langle \bx, \bC \bx \rangle - 2 \langle \psi^0 + \psi, \bx \rangle + \underbrace{\| \bz \|^2 + \| \bC^{\frac{1}{2}} \bx^0 \|^2}_{\text{constant}} \big), 
\end{align} 
where $\| \bC^{\frac{1}{2}} (\bx - \bx^0)\|^2 = \langle \bx, \bC \bx \rangle - 2 \langle \psi^0, \bx \rangle + \| \bC^{\frac{1}{2}} \bx^0 \|^2$ and $\psi^0 = \bC \bx^0$. The function $f(\bx, \bB)$ is strictly convex in $\bx$ because the matrix $\bM$ is positive definite by \textsf{Property A}, while the matrices $\bB$ and $\bC$ are positive semi-definite. Furthermore, $f(\bx, \bB)$ is linear in each individual uncertainty $\bB_i$ and hence in the aggregated uncertainty $\bB$. These characteristics facilitate the formulation of the optimization problems faced by the DM and the adversarial Nature. 

Utilizing expression \eqref{eqn:f(x,B)_main}, we can formulate the DM's robust optimization problem \eqref{eqn:problem0} and the corresponding dual problem for the adversarial Nature, structured as dual robust optimization problems:
\vskip+1em
\begin{minipage}[ht]{0.9\textwidth}
\centering
\begin{minipage}[ht]{.4\textwidth}
\centering{\textbf{DM's primal problem}} 
\begin{align}\label{eqn:DM}\tag{(PP)}
\begin{aligned}
\min_{\bx \in \R^n} \max_{\bB \in \cB} \ &f(\bx, \bB). 
\end{aligned}
\end{align}
\end{minipage}
\hspace{0.2 in}
\begin{minipage}[ht]{.4\textwidth}
\centering{\textbf{Nature's dual problem}} 
\begin{align}\label{eqn:Nature}\tag{(DP)}
\begin{aligned}
\max_{\bB \in \cB} \min_{\bx \in \R^n} \ &f(\bx, \bB). 
\end{aligned}
\end{align}
\end{minipage}
\end{minipage}

\vskip+1em

Equipped with the duality between the two problems \citep{Neumann:1928:MathAnn}, we can find a unique solution to the DM's primal problem \eqref{eqn:DM}, in relation to Nature's dual problem \eqref{eqn:Nature}:


\begin{theorem}[Duality and Uniqueness]\label{theorem:duality}  
The following duality holds:
\begin{align}\label{duality}
\min_{\bx \in \R^n} \, \max_{\bB \in \cB} f(\bx, \bB) = \max_{\bB \in \cB}  \, \min_{\bx \in \R^n} f(\bx, \bB).
\end{align}
Furthermore, \eqref{eqn:DM} and \eqref{eqn:Nature} each have a unique solution, denoted by $\xop$ and $\bB^*$, which are related through the equation $(\bM + \bB^* + \bC) \bx^* = \psi^0 + \psi$.
\end{theorem}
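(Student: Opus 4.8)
The plan is to apply a minimax theorem to establish \eqref{duality}, then extract the unique saddle point via the convexity/concavity structure, and finally obtain the stated optimality condition from the first-order conditions of the inner problems. First I would verify the hypotheses of Sion's minimax theorem (or von Neumann's, as cited): the function $f(\bx, \bB)$ is convex (indeed strictly convex) in $\bx$ for each fixed $\bB$ since $\bM$ is positive definite by \textsf{Property A} and $\bB, \bC$ are positive semi-definite; it is linear, hence concave, in $\bB$ for each fixed $\bx$; and the constraint set $\cB$ is convex and compact (it is a Minkowski sum of the convex compact sets $\cB_i$, as established in the discussion of the uncertainty set). The one subtlety is that the $\bx$-domain $\R^n$ is not compact, so I would either restrict to a large closed ball --- legitimate because strict convexity in $\bx$ together with the positive-definite leading term $\bM$ forces $f(\bx, \bB) \to \infty$ as $\|\bx\| \to \infty$ uniformly over the compact set $\cB$, so the minimization is effectively over a bounded set --- or invoke a version of the minimax theorem that only requires one side to be compact plus coercivity. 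This yields \eqref{duality}.

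Next I would establish existence and uniqueness of the saddle point $(\xop, \bB^*)$. Existence of an optimal $\bB^*$ for \eqref{eqn:Nature} follows from compactness of $\cB$ and continuity of $\bB \mapsto \min_{\bx} f(\bx,\bB)$ (the inner min is attained and the value is continuous, even concave, in $\bB$). Given $\bB^*$, the inner minimization $\min_{\bx} f(\bx, \bB^*)$ has a unique minimizer $\xop$ because $f(\cdot, \bB^*)$ is strictly convex and coercive; this $\xop$ is characterized by the first-order condition $\nabla_{\bx} f(\xop, \bB^*) = \bzero$, i.e. $(\bM + \bB^* + \bC)\xop = \psi^0 + \psi$, which is exactly the claimed relation. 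Standard saddle-point theory then gives that $\xop$ also solves \eqref{eqn:DM}. For uniqueness of $\xop$ as the solution to \eqref{eqn:DM}: any solution $\bar\bx$ of \eqref{eqn:DM} pairs with any solution $\bar\bB$ of \eqref{eqn:Nature} to form a saddle point, so $\bar\bx$ minimizes $f(\cdot, \bar\bB)$, and by strict convexity this minimizer is unique; but all solutions of \eqref{eqn:Nature} must give the same minimizer (else we could compare values), so $\bar\bx = \xop$.

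The main obstacle is the uniqueness of $\bB^*$, since $f$ is only linear --- not strictly concave --- in $\bB$, so minimax arguments alone do not pin it down. This is where \textsf{Property B} must enter. The idea is that at the saddle point, $\bB^*$ must be a maximizer of the linear functional $\bB \mapsto \tfrac{1}{2}\langle \xop, \bB \xop \rangle$ over $\cB$ (all other terms of $f$ are independent of $\bB$); equivalently, writing $\bB = \sum_i \bB_i$, each $\bB_i^*$ maximizes $\langle \xop, \bB_i \xop \rangle$ over $\cB_i$. Because $\cB_i$ consists of positive semi-definite matrices with prescribed diagonal $\bv_{ij}^2$, and $\xop$ has no zero entries by \textsf{Property B}, the maximizer of this linear functional over $\cB_i$ is unique --- it is the rank-one matrix built from the signs of the entries of $\xop$ (this is precisely the content foreshadowed in \autoref{thm:rank-1}, and I would either cite that characterization or reproduce the short argument: among PSD matrices with fixed diagonal, $\langle \xop, \bB_i \xop\rangle = \sum_{j,k} x^*_j x^*_k (\bB_i)_{jk}$ is maximized by saturating the Cauchy–Schwarz bound $(\bB_i)_{jk} = s(x^*_j)s(x^*_k)\bv_{ij}\bv_{ik}$, and strict positivity of all $|x^*_j|$ makes this the unique optimizer). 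Hence $\bB^*$ is unique. I would close by noting that the linear system $(\bM + \bB^* + \bC)\xop = \psi^0+\psi$ has the unique solution $\xop$ since $\bM + \bB^* + \bC$ is positive definite (being $\bM \succ 0$ plus PSD terms), confirming consistency of the characterization.
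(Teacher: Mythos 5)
Your proposal is correct, but it takes a genuinely different route through the second half of the argument than the paper does. For the duality itself the two are essentially equivalent (you invoke Sion's theorem plus a coercivity/ball-restriction argument; the paper cites a non-compact minimax theorem and notes its hypothesis holds trivially since $\cB$ is compact). Where you diverge is in extracting the solution pair: the paper proceeds by backward induction on \eqref{eqn:Nature}, substituting the DM's best response $\bx_{BR}(\bB)=(\bM+\bB+\bC)^{-1}(\psi^0+\psi)$ to obtain the reduced dual objective $-\langle \psi^0+\psi,(\bM+\bB+\bC)^{-1}(\psi^0+\psi)\rangle$, proving its concavity by a matrix-calculus computation (\autoref{lem:convex}), and then using the first-order optimality condition to show that dual optimality of $\bB^*$ is equivalent to $(\bx_{BR}(\bB^*),\bB^*)$ being a Nash equilibrium, from which $\xop=\bx_{BR}(\bB^*)$ and the relation $(\bM+\bB^*+\bC)\xop=\psi^0+\psi$ follow; notably, the paper's proof of \autoref{theorem:duality} does not itself establish uniqueness of $\bB^*$, which is deferred to \autoref{thm:rank-1}. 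You instead use the standard saddle-point facts (under the duality equality, any primal solution paired with any dual solution is a saddle point; strict convexity in $\bx$ then pins down $\xop$ and the first-order condition gives the relation), and you prove uniqueness of $\bB^*$ inside the proof by the \textsf{Property B}-driven extremal argument: each $\bB_i^*$ must maximize the linear functional $\langle\xop,\bB_i\xop\rangle$ over $\cB_i$, and saturating the Cauchy--Schwarz bounds on the off-diagonals with all $|x_j^*|>0$ forces the unique rank-1 maximizer. Your route is shorter and avoids the convexity lemma entirely, at the price of importing (and re-proving, to avoid circularity) the extremal computation that the paper packages as \autoref{thm:rank-1}; the paper's route buys the explicit reduced dual objective and the Nash-equilibrium/backward-induction interpretation it emphasizes in the text. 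Two small points you should make explicit if writing this up: the step ``$\xop$ also solves \eqref{eqn:DM}'' needs attainment of the primal (continuity, coercivity, and strict convexity of $\bx\mapsto\max_{\bB\in\cB}f(\bx,\bB)$, which your coercivity observation already supplies), and in the uniqueness-of-$\bB^*$ step you should verify that the saturated matrix is positive semi-definite (it is rank-1, hence in $\cB_i$), exactly as in the paper's proof of \autoref{thm:rank-1}.
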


Given DM's intervention strategy $\bx$, let $\bB_{BR}(\bx)$ denote Nature's aggregated best response defined as a solution to the inner optimization problem $\displaystyle \max_{\bB \in \cB} f(\bx, \bB)$ in \eqref{eqn:DM}. Similarly, let $\bx_{BR}(\bB)$ denote DM's best response given Nature's choice $\bB$. The first part of \autoref{theorem:duality} establishes that the optimal values of both the original and the dual problems are equal; that is, $f(\xop,\bB_{BR}(\xop)) = f(\bx_{BR}(\bB^*),\bB^*)$.\footnote{This best-response relationship provides another characterization of the DM's solution as an action profile of a game between the DM and Nature, in which DM tries to minimize $f(\bx, \bB)$ but Nature tries to maximize $f(\bx, \bB)$ \citep[e.g.,][]{Neumann:Book:2007}. Our proof in \autoref{sec:appendix:proofs} builds on this interpretation.} 

However, the duality \eqref{duality} does not necessarily imply that Nature's best response $\bB_{BR}(\xop)$ with respect to $\xop$ directly solve Nature's dual problem \eqref{eqn:Nature}. In other words, solving the dual problem \eqref{eqn:Nature} using the backward induction does not necessarily resolve the primal problem \eqref{eqn:DM}. The second part of \autoref{theorem:duality} shows that under \textsf{Property A} and \textsf{Property B}, both \eqref{eqn:DM} and \eqref{eqn:Nature} have unique solutions that are mutually best responses. This result is expressed by the equation $(\bM + \bB^* + \bC) \bx^* = \psi^0 + \psi$.

The equation $(\bM + \bB^* + \bC) \bx^* = \psi^0 + \psi$ offers an interpretation of the solution in terms of the model parameters. First, $\xop$ is adjusted to align with the mean influences represented by $\bM$. Second, $\xop$ takes into account the risk by responding optimally to Nature's adversarial choice $\bB^*$. Third, $\xop$ remains cost-effective, considering the cost structure given by $\bC$. The combined influence of the initial allocation $\psi^0$ and the target allocation $\psi$ guides $\xop$ towards a balance between the current state of intervention and its desired outcome.



\vskip+1em

\noindent \textbf{Characterization of the unique worst-case scenario.} We now characterize a unique worst-case scenario as Nature's optimal choice and determine the corresponding best response of the DM. In the DM's objective function \eqref{eqn:f(x,B)_main}, the only term that depends on Nature's choice $\bB$ is $\langle \bx, \bB \bx \rangle$. Thus, in the dual problem \eqref{eqn:Nature}, Nature tries to maximize $\langle \bx_{BR}(\bB), \bB\, \bx_{BR}(\bB) \rangle$. In fact, Nature can equivalently solve \eqref{eqn:Nature} by maximizing $\langle \xop, \bB \xop \rangle$, that is, $\bB^*$ is characterized as $\bB^* = \bB_{BR}(\xop)$,  as shown in \autoref{sec:appendix:proofs}. Consequently, the properties of $\bB^*$ depend on the properties of $\xop$. 

Indeed, if $\xop$ does not contain any zero entry by \textsf{Property B}, it turns out that there exists a unique worst-case scenario (i.e., Nature's most adversarial choice) for the DM's choice. To understand why, consider an example with two agents in which Nature maximizes $\langle \xop, \bB \xop \rangle = \langle \xop, \bB_1 \xop \rangle + \langle \xop, \bB_2 \xop \rangle$. The first term expresses the uncertainty generated by influences toward agent 1's outcome, while the second term represents the uncertainty generated by influences toward agent 2's outcome. We focus on Nature's optimal choice of $\bB_1$, as the choice of $\bB_2$ can be similarly understood.

Note that $\bB_1$ is a symmetric positive semi-definite matrix, with its diagonal entries being $\bv_{11}^2$ and $\bv_{12}^2$, and its off-diagonal entry $\rho_1 = {\rm Cov}(\bG_{11}, \bG_{12})$ is undetermined, bounded by $| \rho_1 | \leq \bv_{11} \bv_{12}$. The term $\langle \bx^*, \bB_1 \bx^* \rangle$ reaches its maximum at the extreme value of $\rho_1$; consequently, $\rho_1 = \bv_{11} \bv_{12} \text{ or } -\bv_{11} \bv_{12}$. In addition, by the spectral theorem \citep{Meyer:Book:2010}, there are two non-negative eigenvalues, $\lambda_1$ and $\lambda_2$, and corresponding unit-length eigenvectors, $\bw_1$ and $\bw_2$, such that $\bB_1 = \lambda_1 (\bw_1 \otimes \bw_1) + \lambda_2 (\bw_2 \otimes \bw_2)$. Interestingly, when $\rho_1 = \bv_{11} \bv_{12} \text{ or } -\bv_{11} \bv_{12}$, the first eigenvalue is $\| \bB_1 \|_{\bF}^2 = \bv_{11}^2 + \bv_{12}^2$, which is constant for all $\bB_1 \in \mathcal{B}_1$. Therefore, $\bB_{BR}(\xop) = (\bv_{11}^2 + \bv_{12}^2) (\bw_1 \otimes \bw_1)$. 

To determine the sign of $\rho_1$, consider the entries of $\xop$. Since $\xop$ has no zero entry, two cases arise: (1) the entries of $\xop$ have the same signs, or (2) the entries have opposite signs. In the first case, the covariance of $\bG_{11}$ and $\bG_{12}$ must be positive, as a negative covariance would create a hedging effect, which is suboptimal for Nature. Similarly,  in the second case, where the signs are opposite, the covariance of $\bG_{11}$ and $\bG_{12}$ must be negative. Thus, $\rho_{1} = s(\bx_1^*) s(\bx_2^*) \bv_{11} \bv_{12}$, which leads to $\bw_1 = \frac{\bq_1}{\|\bq_1\|}$, where $\bq_1 = ( s(\bx_1^*) \bv_{11}, s(\bx_2^*) \bv_{12})^\trans$ and is located in the same orthant with $\xop$. 

The above analysis shows that Nature's optimal strategy concentrates uncertainty in the first principal component, aligned with the DM's intervention vector (in the sense that they are in the same orthant). By doing so, Nature maximizes the uncertainty effect relative to the DM's choice. Since the analysis for agent 1 is independent of that for agent 2, we obtain a unique closed-form expression for $\bB_i^* = \sigma_i^2 (\bw_i \otimes \bw_i)$, where $\sigma_i^2 = \sum_{j=1}^2 \bv_{ij}^2$ and $\bw_i = \frac{\bq_i}{\|\bq_i\|}$ with $\bq_i = (s(x_1^*) \bv_{i1}, s(x_2^*) \bv_{i2})^\trans$ for $i = 1,2$.

The analysis is closely related to the orthogonal decomposition techniques commonly used in the literature on intervention in networks \citep[e.g.,][]{Galeottietal:2020:ECMA, Galeottietal:2024b:WP, JeongShin2024}. In those models, an orthogonal decomposition tracks the effects of the DM's choice across multiple dimensions, with the network given as a model parameter, so the principal components of the underlying network are fixed. However, in the current model, for each agent $i$, the unique principal component with a non-zero eigenvalue, $\bq_i$, is optimally chosen by Nature in the dual problem, with an objective function opposing that of the DM. Consequently, the adversarial Nature focuses on a particular principal component and concentrates all its effects on each agent $i$ through this choice of principal component. 

\begin{theorem}[Unique Worst-Case Scenario]\label{thm:rank-1}
There exists a unique solution to Nature's dual problem \eqref{eqn:Nature}, $\bB^* = \sum_{i=1}^n \bB_i^*$ in which $\bB_i^* = \sigma_i^2 (\bw_i \otimes \bw_i)$ with $\sigma_i^2 = \sum_{j=1}^n \bv_{ij}^2$, $\bw_i = \frac{\bq_i}{\|\bq_i\|}$, and $\bq_i = (s(\bx_1^*) \bv_{i1}, \dots, s(\bx_n^*) \bv_{in})^\trans$ for all $i \in N$. Consequently, the unique solution to problem \eqref{eqn:DM} is expressed by the following equation:
\begin{align}\label{eqn:xop_sol}
\xop = \left( \bM + \sum_{i=1}^n \sigma_i^2 \left( \frac{\bq_i}{\|\bq_i\|} \otimes \frac{\bq_i}{\|\bq_i\|} \right) + \bC \right)^{-1} (\psi^0 + \psi).    
\end{align}
\end{theorem}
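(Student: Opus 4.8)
The plan is to build directly on \autoref{theorem:duality}, which already delivers the existence and uniqueness of $\xop$ and $\bB^*$ together with the relation $(\bM+\bB^*+\bC)\xop=\psi^0+\psi$, and to reduce the characterization of $\bB^*$ to a separable, per-agent linear maximization. As noted in the text (and established in \autoref{sec:appendix:proofs}), $\bB^*$ may be taken to solve $\max_{\bB\in\cB}\langle\xop,\bB\xop\rangle$, since $\langle\bx,\bB\bx\rangle$ is the only $\bB$-dependent term of $f(\bx,\bB)$. Because $\cB$ is the Minkowski sum of the $\cB_i$ and $\langle\xop,\bB\xop\rangle=\sum_{i=1}^n\langle\xop,\bB_i\xop\rangle$, I would first observe that $\max_{\bB\in\cB}\langle\xop,\bB\xop\rangle=\sum_{i=1}^n\max_{\bB_i\in\cB_i}\langle\xop,\bB_i\xop\rangle$, so it suffices to identify, for each $i$, the maximizers of $\bB_i\mapsto\langle\xop,\bB_i\xop\rangle$ over $\cB_i$ and then argue that the aggregated maximizer over $\cB$ is unique.

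For a fixed $i$, write $\langle\xop,\bB_i\xop\rangle=\sum_{j}(x_j^*)^2\bv_{ij}^2+\sum_{j\neq k}x_j^*x_k^*(\bB_i)_{jk}$, where the diagonal part is constant on $\cB_i$. Using the Cauchy--Schwarz bound $|(\bB_i)_{jk}|\le\bv_{ij}\bv_{ik}$, valid for every symmetric positive semi-definite $\bB_i$ with those diagonal entries (as recorded in the excerpt), term-by-term estimation gives $\langle\xop,\bB_i\xop\rangle\le\big(\sum_{j}|x_j^*|\bv_{ij}\big)^2$. Now set $\bq_i=(s(x_1^*)\bv_{i1},\dots,s(x_n^*)\bv_{in})^\trans$, which is well defined precisely because \textsf{Property B} rules out zero entries of $\xop$. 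The rank-1 matrix $\bq_i\otimes\bq_i$ is positive semi-definite with diagonal entries $\bv_{ij}^2$, hence lies in $\cB_i$, and $\langle\xop,(\bq_i\otimes\bq_i)\xop\rangle=\langle\xop,\bq_i\rangle^2=\big(\sum_j|x_j^*|\bv_{ij}\big)^2$ meets the bound, so $\bq_i\otimes\bq_i$ is a maximizer. Since $\|\bq_i\|^2=\sum_j\bv_{ij}^2=\sigma_i^2$, we have $\bq_i\otimes\bq_i=\sigma_i^2(\bw_i\otimes\bw_i)$ with $\bw_i=\bq_i/\|\bq_i\|$, the stated closed form.

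Uniqueness is where \textsf{Property B} does the essential work, and it is the step I expect to require the most care. Equality in $\langle\xop,\bB_i\xop\rangle\le\big(\sum_j|x_j^*|\bv_{ij}\big)^2$ forces each summand $x_j^*x_k^*(\bB_i)_{jk}$ to attain its upper bound $|x_j^*||x_k^*|\bv_{ij}\bv_{ik}$; since every $x_j^*\neq0$, dividing yields $(\bB_i)_{jk}=s(x_j^*)s(x_k^*)\bv_{ij}\bv_{ik}=(\bq_i)_j(\bq_i)_k$ for all $j,k$, so $\bB_i=\bq_i\otimes\bq_i$ is the unique maximizer in $\cB_i$. (If some $x_j^*$ were zero, the off-diagonal entries of $\bB_i$ in that row and column would not enter the objective, and uniqueness would fail.) Consequently, for any $\bB\in\cB$, using a decomposition $\bB=\sum_i\bB_i$ shows that $\langle\xop,\bB\xop\rangle$ equals the optimal value only if each $\bB_i=\bq_i\otimes\bq_i$, whence $\bB=\sum_{i=1}^n\sigma_i^2(\bw_i\otimes\bw_i)$; thus this matrix is the unique solution of \eqref{eqn:Nature}.

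Finally, substitute $\bB^*=\sum_{i=1}^n\sigma_i^2(\bw_i\otimes\bw_i)$ into $(\bM+\bB^*+\bC)\xop=\psi^0+\psi$ from \autoref{theorem:duality}. The matrix $\bM+\bB^*+\bC$ is positive definite---$\bM$ by \textsf{Property A}, with $\bB^*$ and $\bC$ positive semi-definite---hence invertible, and solving for $\xop$ gives \eqref{eqn:xop_sol}. Note that \eqref{eqn:xop_sol} is an implicit characterization, since $\bq_i$ depends on the signs of the entries of $\xop$, but \autoref{theorem:duality} already guarantees this fixed point is unique, so no separate fixed-point argument is needed. The only genuine subtleties are the feasibility check $\bq_i\otimes\bq_i\in\cB_i$ and the sign argument forcing uniqueness, both of which hinge on \textsf{Property B}.
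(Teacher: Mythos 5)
Your proof is correct and follows essentially the same route as the paper's: reduce to the per-agent maximization of $\langle \xop, \bB_i \xop \rangle$ over $\cB_i$, use the Cauchy--Schwarz bound $|(\bB_i)_{jk}| \le \bv_{ij}\bv_{ik}$ to force extreme off-diagonal entries (with uniqueness coming from the non-zero entries of $\xop$ under \textsf{Property B}), identify the maximizer as the rank-1 matrix $\bq_i \otimes \bq_i \in \cB_i$, and substitute into the relation $(\bM+\bB^*+\bC)\xop = \psi^0+\psi$ from \autoref{theorem:duality}. The only differences are cosmetic---you get positive semi-definiteness and the rank-1 property for free from the outer-product form rather than checking row proportionality as the paper does, and you omit the paper's supplementary argument that the construction remains consistent with a symmetric $\bG$, which is not part of the stated theorem.
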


\autoref{thm:rank-1} provides an algorithmic approach toward finding the robust solution by applying the principle of backward induction. First, consider the interiors of all orthants in $\R^n$. For example, if $n=2$, there are four open quadrants to consider. Second, since Nature's optimal choice in the dual problem \eqref{eqn:Nature} depends only on the signs of the entries of $\xop$, determine the possible values of Nature's choice for each orthant. For example, if $n=2$, identify four possible values of $\bq_i = (s_{i1} \bv_{i1}, s_{i2} \bv_{i2})^\trans$, $i=1,2$, corresponding to each orthant represented by $(s_{i1}, s_{i2}) \in \{ +1, -1 \}^2$. Then calculate the corresponding $\xop$ in expression \eqref{eqn:xop_sol} and check whether $\xop$ remains in the same orthant. If so, that $\xop$ is the unique robust intervention solution. Otherwise, try another orthant and repeat the process, which will end in a finite number of steps.

We conclude by noting that the random influence matrix $\bG$ can be assumed symmetric in \autoref{thm:rank-1}. One might think that constructing the covariance matrix $\bB_i$ for the links toward agent $i$ could influence the construction of $\bB_j$ for the links toward another agent $j$, and that due to the symmetry of $\bG$, $\bB_j$ might not be a rank-1 matrix. However, this is not the case, as shown in the proof of \autoref{thm:rank-1}. 

\section{Comparative Analysis and Model Implications}

\subsection{Cost of Uncertainty}\label{sec:definitions}

We now investigate how changes in the degree of uncertainty affect the performance of the robust intervention. In our model, uncertainty is introduced by an adversarial Nature under specific constraints. Since the magnitude of the uncertainty constraint is bounded by the variance of the entries in the influence matrix $\bG$, we examine how changes in these variances influence the performance of the robust intervention---a concept we refer to as the \textit{cost of uncertainty}.

There are two possible measures of the cost of uncertainty: (i) global uncertainty and (ii) local uncertainty. The cost of global uncertainty reflects the system-wide impact of uncertainty across the entire network. The DM's uncertainty about the network structure is captured by the aggregated uncertainty matrix $\bB^* = \sum_{i=1}^n \bB_i^*$, which represents the worst-case scenario chosen by Nature. This aggregation means that $\bB^*$ encompasses the combined effect of all uncertainty sources rather than a single localized variation. Since $\bB^*$ is positioned at the boundary of Nature's uncertainty set $\calB$, any expansion of this set alters the DM’s objective function. Thus, by applying the envelope theorem, we define the cost of global uncertainty as the partial derivative of the DM's objective function with respect to $\bB^*$, evaluated at $\bx^*$.

Regarding the cost of local uncertainty, we observe that at the boundary of the uncertainty set, the level of correlation is maximized to match the level of variance of the associated component, measured by $\| \bB_i \|_\textbf{F}$ for each agent $i$. Since $\| \bB_i \|_\textbf{F}$ represents the sum of the variances of the links toward agent $i$, the relevant aggregated uncertainty level for the DM is the sum of these variances. Consequently, we can measure the cost of local uncerntainty as the partial derivative of the DM's objective function with respect to $\bv_{ij}^2$, evaluated at the robust intervention $\bx^*$. This approach measures the change in the objective function due to the variance of a particular link in the network, while the cost of global uncertainty captures the impact of simultaneous changes in all the variances. 

\vskip+1em

\noindent \textbf{Cost of global uncertainty.} When the objective function is perturbed by a change in $\bB$ in the direction of $\Delta \bB$, the directional change is given by ${\rm Trace} \left( \frac{\partial f(\bx_{BR}(\bB), \bB)}{\partial \bB}^\trans \Delta \bB \right)$. Using standard matrix calculus, we find that $\frac{\partial f(\bx_{BR}(\bB), \bB)}{\partial \bB}\big|_{\bB = \bB^*} = \xop \otimes \xop$ by the envelope theorem, which is positive semi-definite and rank-1. 
This indicates that perturbations in the function $f(\bx_{BR}(\bB), \bB)$ are highly directional, with the function being most sensitive to changes in $\bB$ that align with the direction of $\bx$. Consequently, at the robust intervention solution $\bx_{BR}(\bB^*) = \xop$, we observe that the directional change is maximized when $\Delta \bB$ aligns with the direction of the robust intervention $\bx^*$.


\begin{proposition}[Cost of Global  Uncertainty]\label{prp:vofi}
The cost of global uncertainty is $\xop \otimes \xop$, which is positive semi-definite and rank-1.\footnote{\textsf{Property B} is necessary to ensure that the derivative is well-defined.}
\end{proposition}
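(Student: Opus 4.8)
The plan is to identify the cost of global uncertainty with the $\bB$-gradient of the lower value function $V(\bB):=\min_{\bx\in\R^n}f(\bx,\bB)=f(\bx_{BR}(\bB),\bB)$, evaluated at the worst-case network $\bB^*$, and to evaluate that gradient via the envelope theorem. First I would record that, by \textsf{Property A}, $\bM$ is positive definite, hence $\bM+\bB+\bC$ is positive definite (so invertible) for every $\bB,\bC\succeq 0$; consequently the inner minimization has the unique solution $\bx_{BR}(\bB)=(\bM+\bB+\bC)^{-1}(\psi^0+\psi)$ (as in \autoref{theorem:duality}), which is a smooth function of $\bB$ on the open set where $\bM+\bB+\bC$ is invertible — a set containing all of $\calB$. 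Since $f$ is a polynomial in $(\bx,\bB)$ by \eqref{eqn:f(x,B)_main}, the composite $V(\bB)=f(\bx_{BR}(\bB),\bB)$ is smooth on a neighborhood of $\bB^*$, so $\nabla_\bB V(\bB^*)$ is a well-defined object; this is exactly the quantity named "cost of global uncertainty."

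The computation then proceeds in two short steps. First, apply the chain rule to $V(\bB)=f(\bx_{BR}(\bB),\bB)$ and use the first-order condition $\nabla_\bx f(\bx_{BR}(\bB),\bB)=\bzero$ to kill the indirect channel through $\bx_{BR}$; this is the envelope-theorem step and leaves $\nabla_\bB V(\bB)=\nabla_\bB f(\bx,\bB)\big|_{\bx=\bx_{BR}(\bB)}$. Second, note that the only $\bB$-dependent term of $f$ in \eqref{eqn:f(x,B)_main} is $\tfrac12\langle\bx,\bB\bx\rangle=\tfrac12\,\mathrm{Trace}\!\big(\bB\,(\bx\otimes\bx)\big)$, so a matrix-calculus differentiation of this single term (the factor $\tfrac12$ being absorbed by differentiation with respect to the symmetric argument $\bB$) yields $\nabla_\bB f(\bx,\bB)=\bx\otimes\bx$; evaluating at $\bx=\bx_{BR}(\bB^*)=\xop$ (using $\bx_{BR}(\bB^*)=\xop$ from \autoref{theorem:duality}–\autoref{thm:rank-1}) gives $\nabla_\bB V(\bB^*)=\xop\otimes\xop$. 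A useful cross-check, which I would include, substitutes $\bx_{BR}(\bB)$ back into $f$ to get $V(\bB)=-\tfrac12\big\langle\psi^0+\psi,(\bM+\bB+\bC)^{-1}(\psi^0+\psi)\big\rangle+\text{const}$ and differentiates using $\partial_t(\bA+t\bH)^{-1}\big|_{0}=-\bA^{-1}\bH\bA^{-1}$ with $\bA=\bM+\bB+\bC$, which reproduces $\xop\otimes\xop$ as the gradient.

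It remains to read off the two asserted properties from the outer-product form. Positive semi-definiteness is immediate: for any $\bv\in\R^n$, $\langle\bv,(\xop\otimes\xop)\bv\rangle=\langle\xop,\bv\rangle^2\ge 0$. For the rank, $\xop\otimes\xop=\xop(\xop)^\trans$ has column space $\mathrm{span}\{\xop\}$, so it has rank exactly $1$ precisely when $\xop\neq\bzero$; this is where \textsf{Property B} enters (via \autoref{thm:rank-1}, $\xop$ has no zero entry, hence $\xop\neq\bzero$), and it is also what makes $\xop$ and $\bB^*$ — the point at which the derivative is evaluated — unique and hence well-defined. I expect the main obstacle to be not the algebra but the justification around the envelope step: one must confirm that $\bx_{BR}(\cdot)$ is differentiable and $f$ smooth so that the indirect term genuinely vanishes (handled by \textsf{Property A}), and one must be clear about the interpretational point that $\bB^*$ lies on the boundary $\partial\calB$ — so the relevant object is the derivative of the globally smooth map $\bB\mapsto f(\bx_{BR}(\bB),\bB)$ rather than a one-sided directional derivative constrained to $\calB$. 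With that understanding, the envelope theorem applies verbatim, and the "directional'' reading follows since $\mathrm{Trace}\!\big((\xop\otimes\xop)^\trans\Delta\bB\big)=\langle\xop,\Delta\bB\,\xop\rangle$ is maximized, per unit $\|\Delta\bB\|$, by $\Delta\bB\propto\xop\otimes\xop$.
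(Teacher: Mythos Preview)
Your proposal is correct and follows the same envelope-theorem approach as the paper: the indirect effect through $\bx_{BR}$ vanishes by the first-order condition, and the direct $\bB$-derivative of $\tfrac12\langle\bx,\bB\bx\rangle$ yields $\xop\otimes\xop$. Your write-up is in fact more careful than the paper's one-line computation---you justify differentiability via \textsf{Property A}, invoke \textsf{Property B} for the rank-$1$ claim, and include the cross-check through the closed form of $V(\bB)$---but the core argument is identical.
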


The proposition suggests that the economic impact of global uncertainty is entirely concentrated in the direction of robust intervention, $\xop$. This means that the DM faces the highest risk precisely where they have allocated resources, making the intervention particularly vulnerable to uncertainty in this specific direction. Since the cost of global uncertainty is rank-1, Nature's worst-case response strategically amplifies risk only along the chosen intervention strategy, rather than spreading it uniformly across all agents. Consequently, even a small increase in uncertainty along $\xop$ can disproportionately affect outcomes, compelling the DM to internalize this risk when designing robust interventions. This highlights a fundamental trade-off: optimizing for mean influence versus hedging against uncertainty-driven disruptions.
 

\vskip+1em

\noindent \textbf{Cost of local uncertainty.} We now investigate the cost of local uncertainty. First, note that $(\bB_i^*)_{jk} = \bv_{ij} \bv_{ik} s(x_j^*) s(x_k^*)$ for any $i,j,k \in N$, which represents the covariance of agent $j$'s and agent $k$'s allocations toward agent $i$'s outcome at the worst-case scenario determined by Nature. Thus, when $\bv_{ij}$ changes for some $i,j \in N$, it affects not only the variance of the directional link from agent $j$ to agent $i$, but also all the covariances that are associated with the two agents. Specifically, recall that adversarial Nature's choice maximizes the covariance of the agents within the constraint of uncertainty. For a given agent $i$, an increase in the variance of the link from agent $j$'s allocation to agent $i$'s outcome raises the upper bound of the covariance between this link and all other links from the other agents to $i$, including the link from $j$ to $i$. Consequently, the partial derivative of the DM's objective function with respect to $\bv_{ij}$ is given by $2 \sum_{k=1}^n \bv_{ik} |x_j^*| |x_k^*|$, which is strictly positive if \textsf{Property B} holds, ensuring the unique existence of the derivative, as with the cost of global uncertainty. 

Unlike the cost of global uncertainty, where a change in $\bB$ is multi-directional, the cost of local uncertainty is strictly positive due to the unidirectional change of $\bv_{ij}$ (i.e., only in an increasing direction). 

\begin{proposition}[Cost of Local Uncertainty]\label{prp:vofli}
The cost of local uncertainty for the link from agent $j$ to agent $i$ is $2 \sum_{k=1}^n \bv_{ik} \, |x_j^*| |x_k^*|$.
\end{proposition}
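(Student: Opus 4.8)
The plan is to obtain a closed form for the optimal value of \eqref{eqn:problem0} as a function of the variance profile $\bv = (\bv_{ij})_{i,j\in N}$ and then differentiate it via the envelope theorem. Let $V(\bv)$ denote the optimal value of the (un-normalized) objective $\E[\|\bG\bx-\bz\|^2]+\|\bC^{\frac{1}{2}}(\bx-\bx^0)\|^2 = 2f(\bx,\bB)$, in line with the normalization used for the cost of global uncertainty in \autoref{prp:vofi}. By \autoref{theorem:duality} and \autoref{thm:rank-1}, $V(\bv) = 2f(\xop,\bB^*)$ with $(\bB_i^*)_{jk} = s(x_j^*)s(x_k^*)\,\bv_{ij}\bv_{ik}$; the key algebraic fact is $x_j^*\, s(x_j^*) = |x_j^*|$, which gives
\[
\langle \xop,\bB^*\xop\rangle \;=\; \sum_{i=1}^n\sum_{j,k} x_j^* x_k^*\, s(x_j^*)\,s(x_k^*)\,\bv_{ij}\bv_{ik} \;=\; \sum_{i=1}^n\Big(\sum_{j=1}^n |x_j^*|\,\bv_{ij}\Big)^{2}.
\]
Substituting into \eqref{eqn:f(x,B)_main}, $V(\bv) = \langle\xop,(\bM+\bC)\xop\rangle - 2\langle\psi^0+\psi,\xop\rangle + \text{const} + \sum_{i}\big(\sum_{j}|x_j^*|\bv_{ij}\big)^2$, where $\xop = \xop(\bv)$ is the unique robust intervention.

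For the envelope step I would write $V(\bv) = \min_{\bx\in\R^n}\widetilde\Phi(\bx,\bv)$, where $\widetilde\Phi(\bx,\bv) = \langle\bx,(\bM+\bC)\bx\rangle - 2\langle\psi^0+\psi,\bx\rangle + \text{const} + 2g(\bx)$ and, because $\cB$ is the Minkowski sum of the $\cB_i$ and $|(\bB_i)_{jk}|\le\bv_{ij}\bv_{ik}$ for every symmetric positive semi-definite $\bB_i\in\cB_i$ (Cauchy--Schwarz on its diagonal), $2g(\bx) = \max_{\bB\in\cB}\langle\bx,\bB\bx\rangle = \sum_i\big(\sum_j|x_j|\bv_{ij}\big)^2$, with the maximum attained at the rank-1 matrix of \autoref{thm:rank-1}. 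By \textsf{Property A}, $\widetilde\Phi(\cdot,\bv)$ is strictly convex and coercive, so $\xop(\bv)$ is its unique minimizer and, by Berge's maximum theorem, depends continuously on $\bv$; by \textsf{Property B} at the reference profile, $\xop(\bv)$ then has no zero entry for all $\bv$ nearby, so the sign pattern $(s(x_1^*),\dots,s(x_n^*))$ is locally constant and $g$ --- hence $\widetilde\Phi$ --- is jointly $C^1$ (indeed polynomial) near $(\xop(\bv),\bv)$ with an interior minimizer. The envelope theorem therefore gives $\partial V/\partial \bv_{ij} = (\partial\widetilde\Phi/\partial\bv_{ij})\big|_{\bx = \xop(\bv)}$.

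The remaining computation is immediate: holding $\xop$ fixed, only the $i$-th summand of $2g$ contains $\bv_{ij}$, and
\[
\frac{\partial}{\partial\bv_{ij}}\Big(\sum_{k=1}^n |x_k^*|\,\bv_{ik}\Big)^{2} \;=\; 2\Big(\sum_{k=1}^n |x_k^*|\,\bv_{ik}\Big)|x_j^*| \;=\; 2\sum_{k=1}^n \bv_{ik}\,|x_j^*|\,|x_k^*|,
\]
which is the asserted cost of local uncertainty. It is strictly positive --- so it is genuinely a (one-sided) derivative in the direction of increasing $\bv_{ij}$ --- because \textsf{Property B} gives $|x_j^*| > 0$, and hence the $k=j$ term alone contributes $2\bv_{ij}|x_j^*|^2 > 0$.

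The main obstacle is not the algebra but justifying the envelope/differentiability step. A priori $V$ is only directionally differentiable: the absolute values $|x_j|$ in $g$ create kinks, so a naive Danskin/envelope argument fails when $\xop$ lies on the boundary between orthants. It goes through here precisely because \autoref{theorem:duality} and \autoref{thm:rank-1}, together with \textsf{Property A} and \textsf{Property B}, deliver a \emph{unique} robust intervention $\xop$ with \emph{no zero entries} and a \emph{unique} worst-case $\bB^*$: interiority makes the sign pattern --- and with it the closed form of $\bB^*$ and of $g$ near $\xop$ --- locally constant and smooth. This is the content of the footnote to \autoref{prp:vofi} that \textsf{Property B} is needed for the derivative to be well-defined. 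Once interiority is established, differentiating under the outer $\min$ is legitimate by strict convexity and the implicit function theorem, and the statement follows from the one-line calculation above.
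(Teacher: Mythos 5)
Your proposal is correct and follows essentially the same route as the paper: substitute the rank-1 worst case $\bB^*$ from \autoref{thm:rank-1} so that the uncertainty term becomes $\sum_{i}\big(\sum_{j}\bv_{ij}|x_j^*|\big)^2$, hold $\xop$ fixed by the envelope theorem, and differentiate with respect to $\bv_{ij}$. The only difference is that you spell out the envelope/differentiability justification (local constancy of the sign pattern under \textsf{Property B}) that the paper invokes only implicitly, which is a welcome but not substantively different addition.
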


The proposition suggests that local uncertainty affects not only a single link but also the entire network through covariance interactions. Unlike global uncertainty, which perturbs all variances in a structured way, local uncertainty increases a specific link’s variance while amplifying its covariances with other links. This means that uncertainty at one agent influences both its own effectiveness and the overall robustness of the intervention. At the robust solution $\xop$, the cost of local uncertainty is strictly positive, as any increase in variance propagates worst-case risk across multiple connections. Since Nature maximizes these covariances within the uncertainty constraint, the DM must weigh the trade-off between targeting high-influence agents and mitigating localized uncertainty that could destabilize the intervention strategy.

\subsection{Balancing Uncertainty and Influence: Two-Agent Model}
\label{subsec:roleofuncertainty}


In this subsection, we examine a two-agent network example to provide concrete insight into the two costs of uncertainty introduced in Section~\ref{sec:definitions}. This example illustrates how the DM balances minimizing uncertainty costs and targeting an agent with higher mean influence. It also demonstrates how global and local uncertainty shape the optimal intervention strategy by determining when uncertainty at one agent outweighs the benefits of its mean influence. In particular, the analysis introduces an uncertainty threshold, showing that once uncertainty exceeds this level, reallocating resources to the less uncertain agent with lower mean influence becomes optimal. Through this example, we explore how uncertainty propagates through the network and how the DM adjusts intervention accordingly.

Consider the network of two agents illustrated in \autoref{fig:example_two_nodes1}. For simplicity, we parameterize the network such that $\bm_{11} = \bm_{21} = m > \bm_{22} = \bm_{12} = 1$, so that any allocation to agent 1 generates a strictly greater mean externality than one to agent 2. To capture the uncertainty in the agents' influences, we set $\bv_{11} = \bv_{21} = v > \bv_{22} = \bv_{12} = 1$, implying that an allocation to agent 1 generates higher uncertainty compared to agent 2. The two agents are otherwise identical, with $\bx^0 = \bf{0}$, $\bz = (1,1)^\trans$, and $\bC = c \bI$ for some sufficiently large $c > 0$. These parameters allow us to explore the trade-off between agent 1's higher mean externality and greater uncertainty.

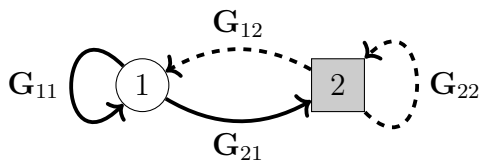
\begin{figure}[ht]
\centering
\vskip-1em
\begin{tikzpicture}
[scale=0.65, state/.style={circle, fill = white, draw=black!80, inner sep = 0.02 in, line width = 0.015 in}]
\node[circle, draw=black, inner sep=0pt, minimum size = 20 pt] (n1) at (0,0)  {1};
\node[rectangle, draw=black, inner sep=0pt, minimum size = 20 pt, fill=black!20] (n2) at (4,0)  {2};
\path[line width = 0.02 in, ->] (n1) edge [out=135,in=-135,distance=2 cm] node [anchor = east]{$\bG_{11}$} (n1);
\path[line width = 0.02 in, ->, dashed] (n2) edge [bend right] node [anchor = south]{$\bG_{12}$} (n1);
\path[line width = 0.02 in, ->, dashed] (n2) edge [out=-45,in=45,distance=2 cm] node [anchor = west]{$\bG_{22}$} (n2);
\path[line width = 0.02 in, ->] (n1) edge [bend right] node [anchor = north]{$\bG_{21}$} (n2);
\end{tikzpicture}
\vskip-1em
\caption{Example of two-agent network. Each arrow represents the degree of influence of the allocation given to the starting agent to the heading agent. Since $\bm_{11} = \bm_{21} > \bm_{12} = \bm_{22}$, thick arrows represent a higher mean influence than dashed arrows.}\label{fig:example_two_nodes1}
\end{figure} 

We calculate the aggregated mean influence and uncertainty matrices $\bM$ and $\bB$:
\begin{align*}
\bM =  2
\begin{bmatrix}
m^2 & m \\
m & 1 
\end{bmatrix}
\quad \text{and} \quad \bB 
= 2
\begin{bmatrix}
v^2              & \frac{\rho_1 + \rho_2}{2} \\
\frac{\rho_1 + \rho_2}{2}  &      1
\end{bmatrix},
\end{align*}
where $\rho_1 \in [-v,v]$ is the covariance of $\bG_{11}$ and $\bG_{12}$, and $\rho_2 \in [-v, v]$ is the covariance of $\bG_{21}$ and $\bG_{22}$. \textsf{Property A} is satisfied because $m > 1$. \textsf{Property B} holds as $(\psi^0 + \psi) = 2(m,1)^\trans \notin Z = \{ (\bM + \bC)\bx + \partial g (\bx) \, \vert \, \bx \text{ has a zero entry} \}$ as $c > 0$ is sufficiently large.

A natural guess for the orthant in which $\bx^*$ is contained is the first quadrant. Then, by \autoref{thm:rank-1}, Nature concentrates all the uncertainty in the first principal component, aligned with the DM's intervention vector in the first quadrant as
\begin{align*}
\bB^* 
= 2
\begin{bmatrix}
v^2    & v \\
v      &      1
\end{bmatrix}
= 2 \left( \begin{bmatrix}
v \\ 
1
\end{bmatrix} \otimes
\begin{bmatrix}
v \\ 
1
\end{bmatrix}  \right).
\end{align*}
Consequently, we calculate the optimal robust intervention solution $\bx^*$ as
\begin{align}\label{eqn:xop_example}
\bx^* &\!=\! \left( \bM \!+ \!\bB^* \!+\! c \bI \right)^{-1} (\psi^0 \!+\! \psi) 
\!=\!
\begin{bmatrix}
m^2 \!+\! v^2 \!+\! \frac{c}{2} & m \!+\! v \\
m+v     & 2 + \frac{c}{2} 
\end{bmatrix}^{-1}
\begin{bmatrix}
m \\
1
\end{bmatrix}
\!=\!
\frac{1}{\triangle}
\begin{bmatrix}
(1 \!+\! \frac{c}{2})m \!-\! v \\
v^2 \!+\! \frac{c}{2} \!-\! mv
\end{bmatrix},
\end{align}
where $\triangle > 0$ is the determinant of $\bM + \bB^* + c \bI$. The DM allocates more resources to agent 1 (i.e., $x_1^* \geq x_2^*$) if and only if the uncertainty associated with agent 1 is smaller than a threshold $\overline{v}(m)$ that is calculated as
\begin{align*}
\overline{v}(m) = \frac{1}{2} \left( (m-1) + \sqrt{2 c (m - 1) + (m + 1)^2 } \right).
\end{align*}
The threshold $\overline{v}(m)$ is strictly increasing and concave in $m$.

The above results illustrate the trade-off between leveraging higher mean influence and managing the risks of increased variance. The increasing threshold $\overline{v}(m)$ implies that as agent 1 becomes more influential in terms of individual mean externality, a higher level of uncertainty can still justify allocating more resources to that agent. In other words, the benefit of targeting agent 1 (higher mean influence) grows as $m$ increases, making it more acceptable to tolerate higher uncertainty in the influence of that agent. The concavity of the threshold implies that the rate at which the threshold uncertainty increases diminishes as $m$ grows larger. Initially, small increases in $m$ allow for significant increases in acceptable uncertainty, but further increments in $m$ result in smaller increases in the acceptable level of uncertainty. The following proposition summarizes the result:
\begin{proposition}\label{prp:threshold_v}
There exists a threshold uncertainty level $\overline{v}(m)$ such that $x_1^* \geq x_2^*$ if and only if $v \leq \overline{v}(m)$. $\overline{v}(m)$ is strictly increasing and concave in $m$.
\end{proposition}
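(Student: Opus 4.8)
The plan is to work directly from the closed-form expression for $\bx^*$ in \eqref{eqn:xop_example} and reduce both claims to elementary algebra. First I would verify that the natural guess---$\bx^*$ lies in the first quadrant---is self-consistent under the hypothesis that $c$ is sufficiently large; indeed from \eqref{eqn:xop_example}, $x_1^* = \frac{1}{\triangle}\big((1+\tfrac{c}{2})m - v\big)$ and $x_2^* = \frac{1}{\triangle}\big(v^2 + \tfrac{c}{2} - mv\big)$, and since $v$ is fixed while $c \to \infty$, both numerators are eventually positive and $\triangle > 0$. This licenses the use of the rank-1 form of $\bB^*$ that produced \eqref{eqn:xop_example} in the first place, so \autoref{thm:rank-1} applies on the nose.

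Next I would derive the threshold. Since $\triangle > 0$, the inequality $x_1^* \geq x_2^*$ is equivalent to $(1+\tfrac{c}{2})m - v \geq v^2 + \tfrac{c}{2} - mv$, i.e.\ to $v^2 - (m-1)v - \big((1+\tfrac{c}{2})m - \tfrac{c}{2}\big) \leq 0$. Viewing the left side as a quadratic in $v$ with positive leading coefficient, its larger root is
\begin{align*}
\overline{v}(m) = \frac{1}{2}\left( (m-1) + \sqrt{(m-1)^2 + 4\big((1+\tfrac{c}{2})m - \tfrac{c}{2}\big)} \right),
\end{align*}
and one checks that $(m-1)^2 + 4(1+\tfrac{c}{2})m - 2c = (m+1)^2 + 2c(m-1)$, matching the stated formula. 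The discriminant is nonnegative precisely because $m > 1$ (so $2c(m-1) \geq 0$), hence $\overline{v}(m)$ is real and the quadratic-in-$v$ inequality holds iff $v$ lies between the two roots; since the smaller root is $\leq 0 < v$ when $m>1$ and $c>0$ (the product of the roots is $\tfrac{c}{2} - (1+\tfrac{c}{2})m = -(m-1) - \tfrac{c}{2}m < 0$), the inequality $x_1^* \geq x_2^*$ reduces to $v \leq \overline{v}(m)$.

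For monotonicity and concavity, I would treat $\overline{v}$ as a function of $m$ on $(1,\infty)$ and differentiate. Writing $\overline{v}(m) = \tfrac12\big((m-1) + \sqrt{D(m)}\big)$ with $D(m) = (m+1)^2 + 2c(m-1)$, we have $D'(m) = 2(m+1) + 2c > 0$ and $D''(m) = 2 > 0$, and $D(m) > 0$ on the relevant range. Then $\overline{v}'(m) = \tfrac12\big(1 + \tfrac{D'(m)}{2\sqrt{D(m)}}\big) > 0$ gives strict monotonicity immediately. For concavity, $\sqrt{D}$ is concave iff $2D D'' \leq (D')^2$; substituting, $2D'' D = 4\big((m+1)^2 + 2c(m-1)\big)$ while $(D')^2 = 4(m+1+c)^2 = 4\big((m+1)^2 + 2c(m+1) + c^2\big)$, so the required inequality becomes $2c(m-1) \leq 2c(m+1) + c^2$, i.e.\ $0 \leq 4c + c^2$, which holds for all $c \geq 0$. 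Hence $\sqrt{D(m)}$ is concave, and since $(m-1)$ is linear, $\overline{v}(m)$ is concave.

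I do not anticipate a serious obstacle; the only point requiring a little care is confirming that the first-quadrant guess is valid so that formula \eqref{eqn:xop_example}---and therefore the whole computation---is the genuine robust solution rather than a candidate from the wrong orthant. This is handled by the ``sufficiently large $c$'' hypothesis together with \textsf{Property B}, which was already checked in the text. Everything else is the bookkeeping above: rearranging a linear inequality into a quadratic in $v$, reading off the larger root, and two derivative-sign checks that both collapse to $4c + c^2 \geq 0$.
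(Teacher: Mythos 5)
Your proposal is correct and follows essentially the same route as the paper's proof: compute $\bx^*$ in closed form under the first-quadrant guess, reduce $x_1^* \geq x_2^*$ to a quadratic inequality in $v$, identify the larger root as $\overline{v}(m)$, and establish monotonicity and concavity by derivative-sign checks (the paper differentiates $\overline{v}$ directly, while you use the equivalent criterion $2DD'' \leq (D')^2$ for $\sqrt{D}$). One harmless slip: the product of the roots simplifies to $-m - \tfrac{c}{2}(m-1)$, not $-(m-1) - \tfrac{c}{2}m$, but it is negative either way, so your conclusion that the smaller root is nonpositive stands.
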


The economic insight from \autoref{prp:threshold_v} offers an important extension to the literature on intervention in networks. In models such as \citet{Galeottietal:2020:ECMA} and \citet{JeongShin2024}, the common recommendation is to target agents with high centrality, as their mean influence is crucial for maximizing externalities. However, the current model introduces a new dimension by incorporating uncertainty into the decision-making process. When the DM's objective accounts not only for mean influence but also for the associated risks and robustness, targeting highly central agents may no longer be optimal. This broadens the existing literature by highlighting the trade-off between leveraging high mean externalities and managing uncertainty---an aspect that previous models have often overlooked. 

We close this subsection by linking \autoref{prp:threshold_v} to the two-agent example and its relationship with \autoref{prp:vofi} and \autoref{prp:vofli}. The cost of global uncertainty (\autoref{prp:vofi}) is rank-1, meaning risk is concentrated along the intervention strategy $\xop$. As the variance of agent 1's influence ($v$) increases, it amplifies worst-case risk, making the DM's allocation highly sensitive to uncertainty. The cost of local uncertainty (\autoref{prp:vofli}) further compounds this effect by propagating risk through network interactions, as uncertainty in one agent affects linked agents via covariance. In the two-agent example, these effects reinforce each other: global uncertainty increases direct risk exposure along $\xop$, while local uncertainty amplifies instability by spreading correlated risks across agents. When $v$ surpasses the threshold $\overline{v}(m)$, these combined uncertainty costs outweigh agent 1's higher mean influence, prompting the DM to shift resources to agent 2 for a more stable intervention outcome.

This distinction between global and local uncertainty is crucial for designing robust interventions. While global uncertainty concentrates risk in the direction of the intervention strategy $\bx^*$, local uncertainty propagates through the network, creating correlated disruptions that the DM must mitigate. Accounting for both types of uncertainty ensures a more resilient allocation strategy.




\subsection{Model Implications}
\label{subsec:model_implications}

\noindent \textbf{Uncertainty in the network structure.} The optimized value of the DM's objective function, evaluated at the optimal solution, quantifies the guaranteed outcome of the original objective under the ``worst-case'' scenario chosen by Nature. A key contribution of this paper is the comprehensive characterization of the DM's solution to the robust intervention problem, along with the corresponding worst-case scenarios induced by Nature. This characterization advances the literature by extending previous analyses \citep[e.g.,][]{Belhajetal:GEB:2023, Galeottietal:2020:ECMA, Galeottietal:2024b:WP, JeongShin2024}, which assumes a fully known influence network entry $\bG_{ij}$, to cases where the DM has access only to the first and second moments of $\bG_{ij}$. 

There are several possible interpretations of the influence network $\bG$. For instance, $\bG$ may represent the exchange or redistribution of allocated resources among agents. In the context of food stamps or coupons, citizens might exchange these benefits informally based on their individual needs. Such exchanges, though beneficial for addressing localized shortages, may not be fully observed or known to the distributor. This uncertainty forces the distributor to adopt a robust optimization approach, accounting for potential discrepancies in the network of exchanges. By considering this uncertainty, the distributor can ensure that, even after unobserved exchanges occur, the final distribution of food stamps more closely aligns with a target distribution.

A similar challenge emerges in the context of distributing medical resources during a pandemic, such as facial masks, vaccines, or protective equipment. A social planner such as government or health authority allocates these scarce resources to regions or populations based on observed needs, but individuals or institutions may redistribute them informally or through local agreements. For instance, a region facing an unexpected surge in demand might reallocate vaccines or masks from neighboring areas, leading to deviations from the original distribution plan. Such redistribution, while addressing immediate needs, introduces uncertainty into the overall allocation strategy of the social planner. To account for this, a robust optimization approach becomes essential, allowing the distributor to design an allocation strategy that ensures the final distribution of medical resources, after such redistributions, still aligns with public health objectives, such as achieving herd immunity or maintaining adequate protective equipment coverage across regions.

Another example arises in game-theoretic contexts, such as public goods games on networks \citep[e.g.,][]{Galeottietal:2020:ECMA}. In these games, agents decide their contributions to a public good, with each agent's optimal contribution—or best-response action—depending on their neighbors' contributions, the network structure, and their individual endowments. The DM can influence agents' endowments to align their best-response actions with a desired target profile, such as an efficient or equitable outcome. However, in networks with non-reciprocal relationships, such as unequal mutual benefits or parasitism \citep[e.g.,][]{Bayeretal:JET:2023}, agents' myopic best-response dynamics may fail to converge to a Nash equilibrium. In such cases, the DM's goal is to guide these dynamics toward stability and a target behavior.

Influence network $\bG$ can also be interpreted in the context of viral marketing and social learning \citep[e.g.,][]{JeongShin2024}. In social networks, consumers exchange opinions about products through their interactions, and firms invest substantial resources to influence this opinion formation process. This process is often modeled as a DeGroot learning process \citep[e.g.,][]{DeGroot1974reaching, DeMarzoetal:2003:QJE, GolubJackson2010, GolubandJackson:2012:QJE}, where each agent's opinion is a weighted average of their neighbors' opinions. The DM may intervene in this process by influencing agents' opinions, for instance, through free samples, discounts or targeted advertising. When the DM has limited information about the correlations among consumer connections, they may adopt a robust optimization approach to ensure effective interventions that achieve desired outcomes, even under network uncertainty.

\vskip+1em

\noindent \textbf{Cost structure of the model.} Here, we discuss the cost structure of the model represented by the second term in the DM's robust optimization problem \eqref{eqn:problem0}, $\| \bC^{\frac{1}{2}} (\bx - \bx^0) \|^2$. This term $| \bC^{\frac{1}{2}} (\bx - \bx^0) |^2$ can represent various situations depending on the context. For example, in a budget allocation problem, one could set $\bC = c \bp \bp^\trans$ with some $c > 0$ and $\bp \in \R_{++}^n$ with $\| \bp \|^2 = 1$ for normalization. In this case, the expression simplifies to $\| \bC^{\frac{1}{2}} (\bx - \bx^0) \|^2 = c | \langle \mathbf{p}, \bx - \bx^0 \rangle |^2$. This constraint reflects the cost of choosing an allocation that deviates from a benchmark budget line passing the status quo allocation vector $\bx^0$, as illustrated in \autoref{fig:two_costs}-(a).\footnote{This cost structure represents a scenario where the DM, as a budget allocator, starts with a benchmark budget line passing through the status quo allocation $\bx^0$. Allocations that lie on this benchmark line do not incur additional costs. However, if the allocation deviates from the line in either the northeast direction (spending more than the given budget) or the southwest direction (under utilizing the budget and wasting part of it) in \autoref{fig:two_costs}-(a), the DM incurs a cost proportional to the extent of the deviation, measured by $c |\langle \mathbf{p}, \bx - \bx^0 \rangle |^2$. } 

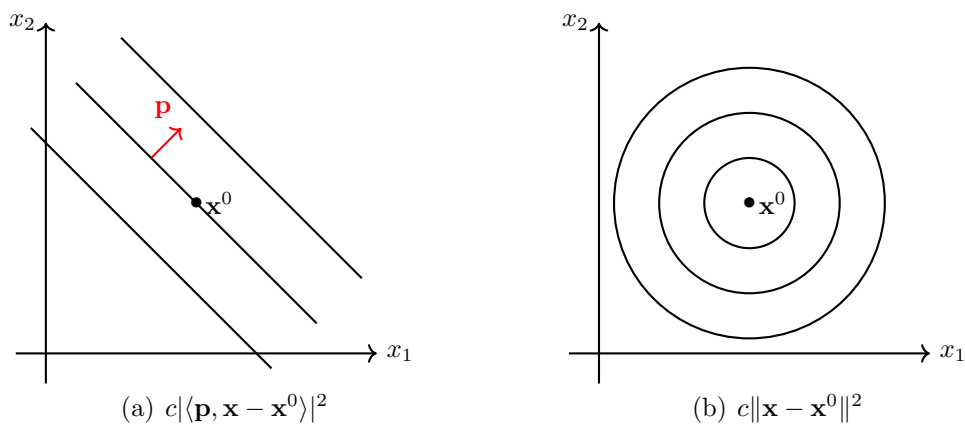
\begin{figure}[ht]
\centering
\footnotesize
\subfigure[$c | \langle \bp, \bx - \bx^0 \rangle |^2$]{
\begin{tikzpicture}[scale = 0.20]

\draw[->, thick] (-12,-10) -- (12,-10) node[anchor = west] {$x_1$};
\draw[->, thick] (-10,-12) -- (-10,12) node[anchor = east] {$x_2$};

\draw[-, thick] (-8,8) -- (8,-8) node[anchor = west] {};
\draw[-, thick] (-5,11) -- (11,-5) node[anchor = west] {};
\draw[-, thick] (-11,5) -- (5,-11) node[anchor = west] {};

\draw (0, 0) node {\textbullet};
\draw (0, 0) node[anchor = west] {$\mathbf{\bx}^0$};

\draw[->, thick, red] (-3, 3) -- (-1, 5) node[anchor = south east] {$\bp$};

\end{tikzpicture}
}
\hspace{0.5 in}
\subfigure[$c \| \bx - \bx^0 \|^2$]{
\begin{tikzpicture}[scale = 0.20]
\draw[->, thick] (-12,-10) -- (12,-10) node[anchor = west] {$x_1$};
\draw[->, thick] (-10,-12) -- (-10,12) node[anchor = east] {$x_2$};

\filldraw[color=black!, fill=none, thick] (0,0) circle (3);
\filldraw[color=black!, fill=none, thick] (0,0) circle (6);
\filldraw[color=black!, fill=none, thick] (0,0) circle (9);

\draw (0, 0) node {\textbullet};
\draw (0, 0) node[anchor = west] {$\mathbf{\bx}^0$};

\end{tikzpicture}
}
\caption{Illustration of the two cost functions}\label{fig:two_costs}
\end{figure}

Another example involves setting $\bC = c \bI$ for some $c > 0$, leading to $\| \bC^{\frac{1}{2}} (\bx - \bx^0) \|^2 = c \| \bx - \bx^0 \|^2$. This represents the cost associated with deviating from the status quo allocation vector $\bx^0$, as illustrated in \autoref{fig:two_costs}-(b). Of course, a combination of these two constraints is possible.\footnote{For example, $c_1 | \langle \mathbf{p}, \bx - \bx^0 \rangle |^2 + c_2 \| \bx - \bx^0 \|^2$ for some $c_1$ and $c_2 > 0$ represents a possible combination of the two constraints of different types.} 


\section{Extensions}
\label{sec:applicationsandextensions}



\subsection{Robust Intervention with Network Expansion}

We now extend the model to account for the introduction of a new agent into an existing network, focusing on how the DM can determine a robust intervention despite the additional uncertainties introduced by this inclusion. Specifically, a new agent joins a network of $n$ existing agents. The DM has full knowledge of the covariance matrix among the existing $n$ agents (i.e., ${\rm Cov}(\bG_{ij}, \bG_{ik})$ for each $i,j,k \in \{1,\dots,n\}$), but has limited or no knowledge of the covariance between the new agent and the existing agents. For instance, consider a scenario where a new researcher joins an existing collaboration network. The DM must decide how to allocate the research budget among the $n+1$ agents. While the DM understands how the $n$ existing members interact with one another, based on research history and outputs, the covariances between the existing members and the new member remain (partially) uncertain.

As an example, consider a network consisting of two existing members (agent 1 and agent 2) and one new agent (agent 3). For simplicity, we assume that the variance of the link is one for all agents; that is, all the diagonal entries of each $\bBbar_i$ are equal to one for $i = 1,2,3$. Consequently, each entry of $\bBbar_i$ matrix represents a correlation coefficient of the influences toward agent $i$. We assume that agent 1's influence on their own outcome is uncorrelated with agent 2's influence on agent 1's outcome (i.e., $\left( \bB_1 \right)_{12} = 0$). In contrast, we assume that agent 1's influence on agent 2's outcome is correlated with agent 2's influence on agent 2's outcome (i.e., $\left( \bB_2 \right)_{12} = \rho \in [-1, 1]$), and the exact correlation coefficient $\rho$ is known to the DM. Finally, we suppose that all the other entries are unknown to the DM. Under these conditions, the following matrices illustrate the network uncertainty:
\begin{align*}
\overline{\bB}_1 = 
\kbordermatrix{
       & 1        & 2        & 3 \\
1      & 1        & 0        & ?   \\
2      & 0        & 1        & ?   \\
3      & ? & ? & 1   \\
},\quad 
\overline{\bB}_2 = 
\kbordermatrix{
       & 1        & 2        & 3 \\
1      & 1        & \rho        & ?   \\
2      & \rho        & 1        & ?   \\
3      & ? & ? & 1   \\
},
\quad \text{and} \quad
\overline{\bB}_3 = 
\kbordermatrix{
       & 1        & 2        & 3 \\
1      & 1        & ?        & ?   \\
2      & ?        & 1        & ?   \\
3      & ? & ? & 1   \\
}.
\end{align*}
\begin{tikzpicture}[remember picture,overlay]
\draw[blue, thick] (2.60,0.85) -- (3.70,0.85) -- (3.70,2.1)--(2.60,2.1)--(2.60,0.85);
\draw[red, ultra thick, dashed] (2.60, 0.2) -- (4.30,0.2) -- (4.30,2.1)--(3.80,2.1)--(3.80,0.75)--(2.60,0.75)--(2.60,0.2);
\node[blue] at (2.5, 2.35) [anchor = center]{$\bB_1$};
\draw[blue, thick] (6.90,0.85) -- (8.00,0.85) -- (8.00,2.1)--(6.90,2.1)--(6.90,0.85);
\draw[red, ultra thick, dashed] (6.90, 0.2) -- (8.55,0.2) -- (8.55,2.1)--(8.10,2.1)--(8.10,0.75)--(6.90,0.75)--(6.90,0.2);
\node[blue] at (6.75, 2.35) [anchor = center]{$\bB_2$};
\draw[red, ultra thick, dashed] (12.35,0.2)--(13.95,0.2)--(13.95,2.1)--(12.35,2.1)--(12.35,0.2);
\end{tikzpicture}

There are two notable features in the examples of $\bBbar_3$ and the other two matrices. First, for any agent $i = 1, 2, 3$ in the network, all entries representing the correlation coefficient between the influence of the new agent $3$ on $i$ and the influence of another existing agent $j = 1, 2$ on $i$ are undetermined; that is, $(\bBbar_i)_{j3}$ is undetermined. In the illustrated matrices, this feature is reflected by all the entries in the third row or column being undetermined and represented by the question mark (i.e., `$?$'). This highlights the first source of uncertainty introduced by the network expansion: the DM lacks information on how the new agent’s influence on a given agent correlates with the influence exerted by other existing agents on that agent.

Second, the correlation coefficients of the existing members' influences on agents are asymmetric. Specifically, the influences of existing members on existing agents $1$ and $2$ are fully known. This is illustrated by the fact that, for agent $i = 1, 2$, the principal submatrix $\bB_i$ of $\bBbar_i$, highlighted by a blue solid-line box in the illustrated matrices, has all its entries determined. In contrast, the influences of the existing agents on the new agent $3$ are unknown. This highlights the second source of uncertainty introduced by the network expansion: the DM lacks information on how the existing agents' influences on the new agent are correlated with one another.

Consequently, all the off-diagonal entries in $\bBbar_3$ are undetermined and represented by the question mark in the illustrated matrices. This implies that, in the worst-case scenario, Nature's choice of $\bBbar_3$ presents the rank-1 structure, as established in \autoref{thm:rank-1}. Therefore, the remaining analysis focuses on characterizing undetermined entries in $\bBbar_i$ for $i = 1, 2$ within an appropriate uncertainty set. 

To model the DM's partial information, we introduce the following notation. Let ${\rm PD}_{k}$ and ${\rm PSD}_{k}$ denote the sets of all $k \times k$ real symmetric positive definite and positive semi-definite matrices, respectively. Without loss of generality, assume that agent $n+1$ is the new agent joining the existing network of $n$ agents. Let $\bB_i \in {\rm PSD}_n$ for $1 \leq i \leq n$ represent the covariance matrix among the existing $n$ agents, and let $\bBbar_i \in {\rm PSD}_{n+1}$ for $1 \leq i \leq n+1$ denote the covariance matrix of all $n+1$ agents. 
Since we are modeling the DM's partial information, we require that for each agent $i = 1, \dots, n+1$, $\bB_i$ is the $n \times n$ principal submatrix of $\bBbar_i$, obtained by selecting the first $n$ rows and columns of $\bBbar_i$. On the other hand, let $b_i = \left( \overline{\bB}_i \right)_{(n+1) \, (n+1)} > 0 $ represent the variance of the link $\bG_{i (n+1)}$ for $i = 1, \dots, n+1$. Given $\bB \in {\rm PSD}_n$ and $b > 0$, we define $\overline{\cB}_{\bB, b}^{\rm PSD}$ as the set of all extended matrices of $\bB$ with its last $(n+1, n+1)$ entry fixed as $b$, forming a new uncertainty set. Specifically, we define $\overline{\cB}_{\bB,b}^{\rm PSD} = \overline{\cB}_{\bB,b} \cap {\rm PSD}_{n+1}$, where
\begin{align}
\overline{\cB}_{\bB,b} &\!=\! \{ \overline{\bB} \!\in\! \R^{(n+1) \times (n+1)} \, | \,\overline{\bB} \text{ is symmetric, }  \overline{\bB}_{ij} \!=\! \bB_{ij} \text{ for all } 1 \!\le\! i,j \!\le\! n, \overline{\bB}_{(n+1)(n+1)} \!=\! b \}. \nn
\end{align}
We denote by $\xbar \in \R^{n+1}$ the intervention of the DM. 

\autoref{thm:partial_info} establishes the uniqueness of Nature's best response with respect to the DM's intervention $\xbar$, that is, the uniqueness of the worst-case scenario.\footnote{For expositional simplicity, we assume here that the DM possesses no knowledge of the new agent $n+1$'s interaction with the existing $n$ agents. As such, \autoref{thm:partial_info} is a special case of \autoref{thm:general_theorem}, in which the DM is allowed to have ``partial'' knowledge of the new agent $n+1$'s interaction with the existing $n$ agents. See \autoref{sec:appendix:proofs} for more details.}

\begin{theorem}[Uniqueness of the Worst-Case Scenario]\label{thm:partial_info}
If $\bB \in {\rm PD}_n$ and $\xbar$ has no zero entry, the worst-case scenario $\bBbar^* \in \overline{\cB}_{\bB,b}^{\rm PSD}$ that maximizes $\langle \xbar, \overline{\bB} \xbar \rangle$ is unique.
\end{theorem}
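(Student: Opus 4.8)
The plan is to reduce the inner maximization to maximizing a nonzero linear functional over a solid ellipsoid, where uniqueness is classical. First I would parametrize $\overline{\cB}_{\bB,b}^{\rm PSD}$: every $\overline{\bB}\in\overline{\cB}_{\bB,b}$ is determined by the vector $\bc\in\R^n$ of its last-column off-diagonal entries through the block form
\[
\overline{\bB}=\begin{bmatrix}\bB & \bc\\ \bc^\trans & b\end{bmatrix}.
\]
Since $\bB\in{\rm PD}_n$ by hypothesis, the Schur complement criterion gives $\overline{\bB}\in{\rm PSD}_{n+1}$ if and only if $b-\bc^\trans\bB^{-1}\bc\ge 0$. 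Hence $\overline{\cB}_{\bB,b}^{\rm PSD}$ is affinely homeomorphic to the solid ellipsoid $\mathcal{E}=\{\bc\in\R^n:\bc^\trans\bB^{-1}\bc\le b\}$, which is convex and compact, with nonempty interior because $b>0$. (This is exactly where positive definiteness of $\bB$ enters, and where the geometry differs from \autoref{thm:rank-1}: there the entire matrix was free and the extreme point sat at a face of the PSD cone, whereas here only a border vector is free and the feasible region is a smooth ellipsoid.)

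Next, writing $\xbar=(\bx^\trans,x_{n+1})^\trans$ with $\bx\in\R^n$, I would expand
\[
\langle\xbar,\overline{\bB}\,\xbar\rangle=\langle\bx,\bB\bx\rangle+2x_{n+1}\langle\bc,\bx\rangle+b\,x_{n+1}^2,
\]
so the only term depending on the free parameter $\bc$ is the linear form $\bc\mapsto 2x_{n+1}\langle\bc,\bx\rangle$. Because $\xbar$ has no zero entry, $x_{n+1}\neq 0$ and $\bx\neq\bzero$, so this linear form is not identically zero, and the problem becomes: maximize a nonzero linear functional over $\mathcal{E}$. I would then make the substitution $\bc=\bB^{1/2}\bu$ (legitimate since $\bB\succ 0$), which turns $\mathcal{E}$ into the Euclidean ball $\{\|\bu\|\le\sqrt b\}$ and the objective into $2x_{n+1}\langle\bB^{1/2}\bx,\bu\rangle$. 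Since $\bB^{1/2}\bx\neq\bzero$, the equality case of the Cauchy–Schwarz inequality shows the maximum is attained at the single point $\bu^\ast=\sqrt b\,s(x_{n+1})\,\bB^{1/2}\bx/\|\bB^{1/2}\bx\|$, equivalently
\[
\bc^\ast=s(x_{n+1})\sqrt{\tfrac{b}{\langle\bx,\bB\bx\rangle}}\;\bB\bx,\qquad \langle\bx,\bB\bx\rangle>0,
\]
the positivity of the denominator coming from $\bB\succ 0$ and $\bx\neq\bzero$. Pulling $\bc^\ast$ back through the homeomorphism yields a unique maximizer $\overline{\bB}^\ast\in\overline{\cB}_{\bB,b}^{\rm PSD}$; existence is immediate since $\overline{\cB}_{\bB,b}^{\rm PSD}$ is compact and $\overline{\bB}\mapsto\langle\xbar,\overline{\bB}\,\xbar\rangle$ is continuous.

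The argument is essentially routine, so I do not expect a serious obstacle; the two points needing care are (i) invoking the Schur complement correctly to recognize the $\bc$-feasible set as a full-dimensional ellipsoid rather than a lower-dimensional or unbounded set — this uses $\bB\in{\rm PD}_n$ and $b>0$ crucially — and (ii) verifying that the no-zero-entry hypothesis on $\xbar$ is precisely what excludes the degenerate cases $x_{n+1}=0$ and $\bx=\bzero$, in either of which the objective is independent of $\bc$ and uniqueness fails. I expect (i) to be the main conceptual step and (ii) to be a one-line remark. If one wanted to avoid the explicit change of variables, an equivalent route is to observe that the boundary of $\mathcal{E}$ contains no line segment (strict convexity, since $\bB^{-1}\succ 0$), so a nonzero linear functional is maximized there at a unique extreme point.
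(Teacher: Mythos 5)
Your proof is correct, and it reaches the conclusion by a route that differs from the paper's in how uniqueness over the set of feasible border vectors is established. The paper proves the more general \autoref{thm:general_theorem}, in which some of the entries $\overline{\bB}_{i,n+1}$ may already be known to the DM; its key step is a lemma showing that the set of feasible border vectors is strictly convex and compact, proved not via the Schur complement but via Sylvester's criterion plus a null-space argument (if the midpoint of two feasible extensions were singular, the two matrices would share a null vector, forcing the two border vectors to coincide), after which uniqueness follows because a nonzero linear functional on a slice of a strictly convex compact set has a unique maximizer. You instead treat the case $I=\emptyset$ directly: the Schur complement identifies the feasible set as the solid ellipsoid $\{\bc\in\R^n : \bc^\trans\bB^{-1}\bc\le b\}$, and after the change of variables $\bc=\bB^{1/2}\bu$ the equality case of Cauchy--Schwarz pins down the maximizer uniquely and in closed form. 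What your argument buys is the explicit worst-case border vector $\bc^*=s(x_{n+1})\sqrt{b/\langle\bx,\bB\bx\rangle}\,\bB\bx$, which the paper never writes down; what the paper's lemma buys is that it applies verbatim to slices of the feasible set, exactly what is needed for the partial-information generalization (the paper's footnote remarks that this is why it does not simply invoke the Schur complement). Your two ``points needing care'' are handled correctly: $\bB\in{\rm PD}_n$ and $b>0$ do make the ellipsoid full-dimensional, and the no-zero-entry hypothesis is used only through $x_{n+1}\neq 0$ and $\bx\neq\bzero$, which matches the weaker nondegeneracy condition imposed in \autoref{thm:general_theorem}.
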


To provide intuition behind \autoref{thm:partial_info}, let us revisit the example network of two existing members (agent 1 and agent 2) and one new agent (agent 3). By linearity, we have $\langle \xbar, \bBbar \xbar \rangle = \langle \xbar, \bBbar_1 \xbar \rangle + \langle \xbar, \bBbar_2 \xbar \rangle + \langle \xbar, \bBbar_3 \xbar \rangle$. Since none of the entries of $\xbar$ is zero, \autoref{thm:rank-1} shows the unique $\bBbar_3^*$ is given by $\bBbar_3^* = \sigma_3^2 (\bw_3 \otimes \bw_3)$ with $\sigma_3^2 = \sum_{j=1}^3 \bv_{3j}^2 = 3$, $\bw_3 = \frac{\bq_3}{\|\bq_3\|}$, and $\bq_3 = (s(\overline{x}_1), s(\overline{x}_2), s(\overline{x}_3))^\trans$. As a result, $\bBbar_3^*$ is a rank-1 matrix.

On the other hand, the uniqueness of the worst-case scenario for the existing agents $i=1,2$ stems from the strict convexity of the uncertainty set, which contrasts with the reasoning behind the uniqueness of \( \bBbar_3 \). To determine $\bBbar_1^*$, observe that $\langle \xbar, \bBbar_1 \xbar \rangle = 2\overline{x}_3  \left( (\bBbar_1)_{13} \overline{x}_1 + (\bBbar_1)_{23} \overline{x}_2 \right) + {\rm constant}$, where $(\bBbar_1)_{13} = {\rm Corr}(\bG_{11}, \bG_{13}) \in [-1, 1]$ and $(\bBbar_1)_{23} = {\rm Corr}(\bG_{12}, \bG_{13}) \in [-1, 1]$. The dashed square in \autoref{fig:partial_info}-(a) represents these restrictions of the possible values of $(\bBbar_1)_{13}$ and $(\bBbar_1)_{23}$. Additionally, Nature's choice of uncertainty is constrained by the requirement that $\bBbar_1$ must be positive semi-definite, which holds if and only if ${\rm det}(\bBbar_1) \geq 0$ because $\bB_1 \in {\rm PD}_2$ and $(\bBbar_1)_{33} >0$. The Schur Complement theorem \citep{Meyer:Book:2010} states that ${\rm det}(\bBbar_1) \geq 0$ if and only if $(\bBbar_1)_{33} - ((\bBbar_1)_{13}, (\bBbar_1)_{23}) \bB_1^{-1} ((\bBbar_1)_{13}, (\bBbar_1)_{23})^\trans \geq 0$, which is equivalent to $(\bBbar_1)_{13}^2 + (\bBbar_1)_{23}^2 \leq 1$, represented by the gray unit disk in \autoref{fig:partial_info}-(a) contained in the dashed square. Consequently, the uncertainty set is strictly convex and compact.\footnote{Our proof in \autoref{sec:appendix:proofs} addresses a more general condition, and the strict convexity of the uncertainty set is not a direct consequence of the Schur Complement theorem.} The worst-case scenario for agent $1$ must therefore be chosen within this uncertainty set.

\begin{figure}[ht]
\centering
\footnotesize
\subfigure[uncertainty set for agent 1]{
\begin{tikzpicture}[scale=0.54]

\draw[dashed, thick] (2,-2)--(2,2)--(-2,2)--(-2,-2)--(2,-2) {};

\draw (0,2) node[anchor = south east] {$1$};
\draw (0,-2) node[anchor = north east] {$-1$};
\draw (2,0) node[anchor = north west] {$1$};
\draw (-2,0) node[anchor = north east] {$-1$};

\draw[line width = 0.01 in, black!80, dashed, fill = gray!25] (0,0) circle [radius = 2];
\draw[black,thick] (0,0) circle (2);

\draw[thick,->,black] (-2.5,0)--(2.5,0) node[right] {$(\bBbar_1)_{13}$}; 
\draw[thick,->,black] (0,-2.5)--(0,2.5) node[above] {$(\bBbar_1)_{23}$}; 

\draw[ultra thick, blue, ->] (2/1.414, 2/1.414) -- (2/1.414+0.8, 2/1.414+0.8) node[anchor = west] {};
\draw[ultra thick, blue] (2/1.414, 2/1.414) -- (2/1.414+0.2,2/1.414-0.2) -- (2/1.414+0.4,2/1.414) -- (2/1.414+0.2,2/1.414+0.2);
\draw[ultra thick, blue] (2/1.414, 2/1.414) -- (2/1.414-1, 2/1.414+1);
\draw[ultra thick, blue] (2/1.414, 2/1.414) -- (2/1.414+1, 2/1.414-1);
\end{tikzpicture}
}
\hspace{0.02 in}
\subfigure[uncertainty set for agent 2 if $\rho = -0.5$]{
\begin{tikzpicture}[scale=0.54]

\draw[dashed, thick] (2,-2)--(2,2)--(-2,2)--(-2,-2)--(2,-2) {};

\draw (0,2) node[anchor = south east] {$1$};
\draw (0,-2) node[anchor = north east] {$-1$};
\draw (2,0) node[anchor = north west] {$1$};
\draw (-2,0) node[anchor = north east] {$-1$};

\filldraw[rotate around={45:(0,0)}, color=black!80, fill = gray!25, thick] (0,0) ellipse ({2*sqrt(3/2)} and {2*sqrt(1/2)});

\draw[thick,->,black] (-2.5,0)--(2.5,0) node[right] {$(\bBbar_2)_{13}$}; 
\draw[thick,->,black] (0,-2.5)--(0,2.5) node[above] {$(\bBbar_2)_{23}$}; 

\draw[ultra thick, blue, ->] (1.75, 1.75) -- (1.75+1, 1.75+1) node[anchor = west] {};
\draw[ultra thick, blue] (1.75, 1.75) -- (1.75+0.2,1.75-0.2) -- (1.75+0.4,1.75) -- (1.75+0.2,1.75+0.2);
\draw[ultra thick, blue] (1.75, 1.75) -- (1.75-1, 1.75+1);
\draw[ultra thick, blue] (1.75, 1.75) -- (1.75+1, 1.75-1);
\end{tikzpicture}
}
\hspace{0.02 in}
\subfigure[uncertainty set for agent 2 if $\rho = 0.5$]{
\begin{tikzpicture}[scale=0.54]

\draw[dashed, thick] (2,-2)--(2,2)--(-2,2)--(-2,-2)--(2,-2) {};

\draw (0,2) node[anchor = south east] {$1$};
\draw (0,-2) node[anchor = north east] {$-1$};
\draw (2,0) node[anchor = north west] {$1$};
\draw (-2,0) node[anchor = north east] {$-1$};

\filldraw[rotate around={135:(0,0)}, color=black!80, fill = gray!25, thick] (0,0) ellipse ({2*sqrt(3/2)} and {2*sqrt(1/2)});

\draw[thick,->,black] (-2.5,0)--(2.5,0) node[right] {$(\bBbar_2)_{13}$}; 
\draw[thick,->,black] (0,-2.5)--(0,2.5) node[above] {$(\bBbar_2)_{23}$}; 

\draw[ultra thick, blue, ->] (1, 1) -- (1+0.8, 1+0.8) node[anchor = west] {};
\draw[ultra thick, blue] (1, 1) -- (1+0.2,1-0.2) -- (1+0.4,1) -- (1+0.2,1+0.2);
\draw[ultra thick, blue] (1, 1) -- (1-1, 1+1);
\draw[ultra thick, blue] (1, 1) -- (1+1, 1-1);

\end{tikzpicture}
}
\caption{Illustration of the worst-case scenario. In each figure, the gradient of Nature's objective function at the unique worst-case scenario is orthogonal to the uncertainty set denoted by the gray elliptical disk.}\label{fig:partial_info}
\end{figure}
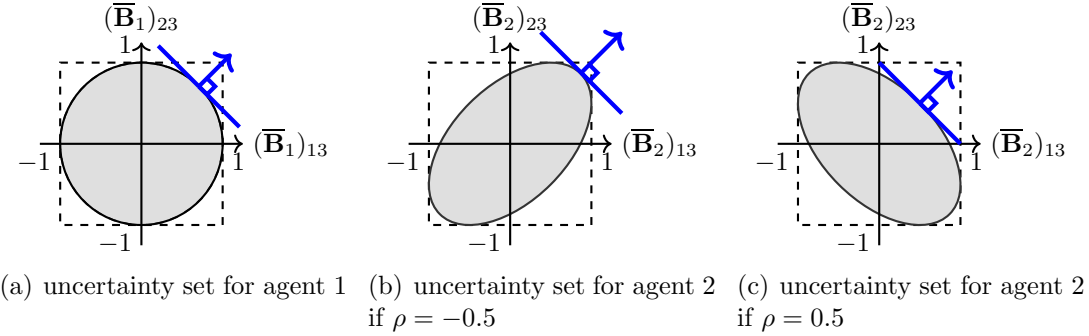

Since the uncertainty set is strictly convex, the gradient of $\langle \xbar, \bBbar_1 \xbar \rangle$ with respect to $\big((\bBbar_1)_{13}, (\bBbar_1)_{23} \big)$ must be orthogonal to the boundary of the uncertainty set. The assumption of non-zero entries in $\xbar$ implies that the gradient (represented by the blue arrow in \autoref{fig:partial_info}) is not vanishing. Consequently, there is a unique optimal choice $\bBbar_1^*$, in which $\nabla \langle \xbar, \overline{\bB}_1 \xbar \rangle \big\vert_{\bBbar_1 = \bBbar_1^*} = 2\overline{x}_3 (\overline{x}_1, \overline{x}_2)$ is orthogonal to the boundary of the uncertainty set as illustrated in \autoref{fig:partial_info}-(a). At the optimal choice, a trade-off arises among the correlation coefficients due to the strict convexity of the uncertainty set, which results from the partial correlation information constraint $\bB_1$.

The worst-case scenario for agent $2$, $\bBbar_2^*$, is also uniquely determined by the strict convexity of the uncertainty set. However, it is worth examining how the shape of uncertainty set changes due to the presence of the correlation among the influences toward agent $2$. The dashed square in \autoref{fig:partial_info}-(b) represents the constraints that $(\bBbar_2)_{13} = {\rm Corr}(\bG_{21}, \bG_{23}) \in [-1, 1]$ and $(\bBbar_2)_{23} = {\rm Corr}(\bG_{22}, \bG_{23}) \in [-1, 1]$. Again, Nature's choice of uncertainty faces another restriction: $\bBbar_2$ must be positive semi-definite, which holds if and only if, by the Schur Complement theorem,
\begin{align*}
\begin{bmatrix}
(\bBbar_2)_{13}  & (\bBbar_2)_{23}
\end{bmatrix}
\begin{bmatrix}
1     & -\rho \\
-\rho & 1 
\end{bmatrix}
\begin{bmatrix}
(\bBbar_2)_{13} \\
(\bBbar_2)_{23}
\end{bmatrix}
\leq 1-\rho^2.
\end{align*}
 The pairs of $\big((\bBbar_2)_{13}, (\bBbar_2)_{23}\big)$ satisfying the above inequality forms an elliptical disk as illustrated in \autoref{fig:partial_info}-(b),(c). For each $\rho \in [-1, 1]$, there are two invariant principal components (i.e., eigenvectors) $(1,1)^\trans$ and $(1, -1)^\trans$, with corresponding non-negative eigenvalues $1 - \rho$ and $1+\rho$, respectively. The uncertainty set for $\rho = -0.5$ is shown as the rotated gray elliptical disk in \autoref{fig:partial_info}-(b), and the uncertainty set for $\rho = 0.5$ is illustrated in \autoref{fig:partial_info}-(c). In both figures, the uncertainty sets are strictly convex.

In summary, the strict convexity of the uncertainty sets yields a unique worst-case scenario $\bBbar_i^*$ for each $i=1,2,3$, and moreover, $\bBbar_2^*$ can vary depending on the correlation coefficient $\rho$. If all the entries of $\overline{\bx}$ are positive, the gradient of Nature's objective function at the worst-case scenario is orthogonal to the uncertainty set and lies in the first quadrant. As a result, the uncertainty set yields lower values of $(\bBbar_2^*)_{13}$ and $(\bBbar_2^*)_{23}$ as the correlation $\rho$ increases, as shown in \autoref{fig:partial_info}-(c). On the other hand, if $\overline{\bx}$ includes a negative entry (e.g., $\overline{\bx}_1 > 0$ and $\overline{\bx}_2 < 0$), the worst-case scenario values for $(\bBbar_2)_{13}$ and $(\bBbar_2)_{23}$ may increase in magnitude as $\rho$ increases, as in \autoref{fig:partial_info}-(b).






\subsection{Higher-Order Interactions}


So far, we have addressed the robust intervention problem in the context of one-shot interactions among agents. However, in many studies in network economics, the focus shifts to settings where agents interact repeatedly or infinitely, leading to long-run equilibrium outcomes. Such scenarios are commonly analyzed using the equilibrium representation $(\bI - \delta \bG)^{-1}$, where $\delta > 0$ captures the strength of the network effect.

If the spectral radius of $\delta \bG$ is less than one---ensured when $\delta$ is sufficiently small—the equilibrium can be expressed as a power series: $(\bI - \delta \bG)^{-1} = \sum_{k=0}^\infty (\delta \bG)^k$. In this representation, $\bG^k_{ij}$ represents the weighted sum of all walks of length $k$ from agent $j$ to agent $i$, capturing the $k$-th order influence of agent $j$'s allocation on agent $i$'s outcome.\footnote{A walk of length $k$ from node $j$ to $i$ is a sequence of nodes $(i_0, i_1, i_2, \dots, i_k)$, such that $i_0 = j$, $i_k = i$, and nodes in the sequence need not be distinct \citep{Jackson:Book:2010}.} This power series representation aggregates the influence dynamics across all possible walks within the network.

The critical question, then, is how the DM can incorporate these higher-order influences and the associated higher-order uncertainties into the design of a robust intervention. By accounting for these dynamics, the DM can address the challenges of designing effective strategies in networks where agents' interactions propagate over time and across multiple pathways. Motivated by this, we now consider an extension of our model to incorporate higher-order uncertainty under additional assumptions. 

First, we assume that with a second-order approximation of $(\bI - \delta \bG)^{-1} \approx (\bI + \delta \bG + \delta^2 \bG^2)$, the DM minimizes the following objective:
\begin{align} 
\E \left[ \| (\bI + \delta \bG + \delta^2 \bG^2) \bx - \bz \|^2 \right]. 
\end{align} 
Second, we assume $\bG_{ii} = 0$ for all $i \in N$. This assumption is often valid in contexts such as network games or supply chain network models, where the zero-order interaction term $\bI$ captures an agent's self-influence after normalization. Third, we assume statistical independence between the rows of \( \bG \), meaning that the influence on an agent is independent of the influence on any other agents. For instance, in a social learning model, agent $i$'s weight on agent $k$ is independent of agent $j$'s weight on $k$. 

Under these assumptions, the adversarial Nature's choice of the worst-case scenario is characterized by a unique rank-1 covariance matrix $\bB_i^*$ for each agent $i$. To simplify exposition, we assume that $\bm_{ij} = \E [ \bG_{ij} ] = 0$ for all $i,j \in N$, allowing us to focus solely on the impact of uncertainty.\footnote{The proof of \autoref{prp:higher_order} does not rely on this zero mean influence of the agents.} Under this assumption, the expected squared distance between agent $i$'s final outcome and the target outcome $z_i$ is 
\begin{align}\label{eqn:higher-order}
\E \left[ | (\bI + \delta \bG + \delta^2 \bG^2)_i \bx - z_i |^2 \right]\! =\! x_i^2 \!+\! \delta \langle \bx, \bB_i \bx \rangle \!+\! \sum_{k=1}^n (\delta \bv_{ik})^2 \langle \bx, \bB_k \bx \rangle \!+\! z_i^2 \!-\! 2 z_i x_i.
\end{align}
The term $\delta \langle \bx, \bB_i \bx \rangle$ in expression \eqref{eqn:higher-order} arises from $(\delta \bG)_i \bx$, capturing the uncertainty generated by the first-order interactions among the agents with respect to agent $i$. 

Another term, $\sum_{k=1}^n (\delta \bv_{ik})^2 \langle \bx, \bB_k \bx \rangle$, accounts for the second-order interactions affecting agent $i$'s final outcome. For these second-order interactions, consider pairs of agents $s$ and $k$ with intermediate agents $s'$ and $k'$ directly influencing agent $i$. The second-order interaction from $s$ to $i$ and from $k$ to $i$ are represented by $\bG_{is'}\bG_{s's}$ and $\bG_{ik'}\bG_{k'k}$, respectively. If $s' \neq k'$, the correlation between these second-order interactions is zero due to the independence in link formation among the agents' interactions, captured by the computation $\E \left[\bG_{is'}\bG_{s's}  \bG_{ik'}\bG_{k'k}\right] = \E \left[\bG_{is'}  \bG_{ik'}\right] \E \left[ \bG_{s's}\right] \E \left[\bG_{k'k} \right] = 0$.

Consequently, the second order effect arises only when there is a common intermediate agent (i.e., $k' = s'$). For each intermediate agent, the correlations among walks of length 2 that share agent $k'$ as the common intermediate agent are amplified by the discounted variance $(\delta \bv_{ik'})^2$. For example, in \autoref{fig:layer}, the solid blue arrows directed toward intermediate agent $2$ influence agent $1$'s outcome and are independent of the other arrows directed toward a distinct agent, such as the dashed green arrows leading to intermediate agent $3$. Then, the impact of the blue arrows on agent $1$ is weighted by the discounted variance term $\delta^2 \bv_{12}^2$, while the impact of the green arrows on agent $1$ is weighted by the discounted variance term $\delta^2 \bv_{13}^2$.    

\begin{figure}[ht]
\centering
\begin{tikzpicture}
[scale=0.45, state/.style={circle, fill = white, draw=black!80, inner sep = 0.02 in, line width = 0.015 in}]

\footnotesize{
\node[rectangle, draw=black, inner sep=0pt, minimum size = 15 pt, fill=blue!10] (a1) at (-10,4)  {$x_1$};
\node[rectangle, draw=black, inner sep=0pt, minimum size = 15 pt, fill=blue!10] (a2) at (-10,2)  {$x_2$};
\node[rectangle, draw=black, inner sep=0pt, minimum size = 15 pt, fill=blue!10] (a3) at (-10,0)  {$x_3$};
\node         (a4) at (-10,-1.5) {$\vdots$};
\node[rectangle, draw=black, inner sep=0pt, minimum size = 15 pt, fill=blue!10] (a5) at (-10,-4)  {$x_n$};

\node[circle, draw=black, inner sep=0pt, minimum size = 15 pt] (b1) at (-5,4)  {1};
\node[circle, draw=black, inner sep=0pt, minimum size = 15 pt] (b2) at (-5,2)  {2};
\node[circle, draw=black, inner sep=0pt, minimum size = 15 pt] (b3) at (-5,0)  {3};
\node                                                          (b4) at (-5,-1.5) {$\vdots$};
\node[circle, draw=black, inner sep=0pt, minimum size = 15 pt] (b5) at (-5,-4)  {$n$};

\node[circle, draw=black, inner sep=0pt, minimum size = 15 pt] (c1) at (0,4)  {1};
\node[circle, draw=black, inner sep=0pt, minimum size = 15 pt] (c2) at (0,2)  {2};
\node[circle, draw=black, inner sep=0pt, minimum size = 15 pt] (c3) at (0,0)  {3};
\node                                                                   (c4) at (5,-1.5) {$\vdots$};
\node[circle, draw=black, inner sep=0pt, minimum size = 15 pt] (c5) at (0,-4)  {$n$};

\node[circle, draw=black, inner sep=0pt, minimum size = 15 pt] (d1) at (5,4)  {$1$};
\node[circle, draw=black, inner sep=0pt, minimum size = 15 pt] (d2) at (5,2)  {$2$};
\node[circle, draw=black, inner sep=0pt, minimum size = 15 pt] (d3) at (5,0)  {$3$};
\node          (d4) at (5,-1.5) {$\vdots$};
\node[circle, draw=black, inner sep=0pt, minimum size = 15 pt] (d5) at (5,-4)  {$n$};

\foreach \from/\to in {a1/b1,a2/b2,a3/b3,a5/b5}
\draw [line width = 0.01 in, ->, thick] (\from) -- (\to);

\foreach \from/\to in {b1/c1,b2/c1,b3/c1,b5/c1}
\draw [line width = 0.01 in, ->, dotted,red] (\from) -- (\to);
\foreach \from/\to in {b1/c2,b2/c2,b3/c2,b5/c2}
\draw [line width = 0.01 in, ->, blue] (\from) -- (\to);
\foreach \from/\to in {b1/c3,b2/c3,b3/c3,b5/c3}
\draw [line width = 0.01 in, ->, dashed,green] (\from) -- (\to);
\foreach \from/\to in {b1/c5,b2/c5,b3/c5,b5/c5}
\draw [line width = 0.01 in, ->, dashdotted,orange] (\from) -- (\to);

\foreach \from/\to in {c1/d1}
\draw [line width = 0.01 in, ->, dotted,red] (\from) -- (\to);
\foreach \from/\to in {c2/d1}
\draw [line width = 0.01 in, ->, blue] (\from) -- (\to);
\foreach \from/\to in {c3/d1}
\draw [line width = 0.01 in, ->, dashed,green] (\from) -- (\to);
\foreach \from/\to in {c5/d1}
\draw [line width = 0.01 in, ->, dashdotted,orange] (\from) -- (\to);

\foreach \from/\to in {c1/d1}
\draw [line width = 0.01 in, ->, dotted,red] (\from) -- (\to);
\foreach \from/\to in {c2/d1}
\draw [line width = 0.01 in, ->, blue] (\from) -- (\to);
\foreach \from/\to in {c3/d1}
\draw [line width = 0.01 in, ->, dashed,green] (\from) -- (\to);
\foreach \from/\to in {c5/d1}
\draw [line width = 0.01 in, ->, dashdotted,orange] (\from) -- (\to);

\node at (7.2,4) [anchor = west] {$x_1^2 + \delta \langle \bx, \bB_1 \bx \rangle + \sum_{k=1}^n (\delta \bv_{1k})^2 \langle \bx, \bB_k \bx \rangle$};

\node at (7.2,2) [anchor = west] {$x_2^2 + \delta \langle \bx, \bB_2 \bx \rangle + \sum_{k=1}^n (\delta \bv_{2k})^2 \langle \bx, \bB_k \bx \rangle$};

\node at (7.2,0) [anchor = west] {$x_3^2 + \delta \langle \bx, \bB_3 \bx \rangle + \sum_{k=1}^n (\delta \bv_{3k})^2 \langle \bx, \bB_k \bx \rangle$};

\node at (7.2,-4) [anchor = west] {$x_n^2 + \delta \langle \bx, \bB_n \bx \rangle + \sum_{k=1}^n (\delta \bv_{nk})^2 \langle \bx, \bB_k \bx \rangle$};

\node at (-10,5.9)  {allocation};
\node at (-10,5.2)  {vector};
\node at (-5,5.9)  {direct};
\node at (-5,5.2)  {influence};
\node at (0,5.9)  {indirect};
\node at (0,5.2)  {influence};
\node at (5,5.9)  {final};
\node at (5,5.2)  {outcome};
\node at (11.5,5.9)  {$\E[$squared distance to};
\node at (12.5,5.2)  {the target outcome$]$};

\draw [dashed, gray, ultra thick] (-2,-5) -- (6.7,-5) -- (6.7,6.5) -- (-2,6.5) -- (-2,-5);
\node at (5,-5) [anchor = north] {\textbf{independence of}};
\node at (5,-5.6) [anchor = north] {\textbf{link formation}};

}

\end{tikzpicture}
\caption{Illustration of second-order uncertainty generation in network. For simplicity, here we assume that $\bz = \mathbf{0}$.}\label{fig:layer}
\end{figure}
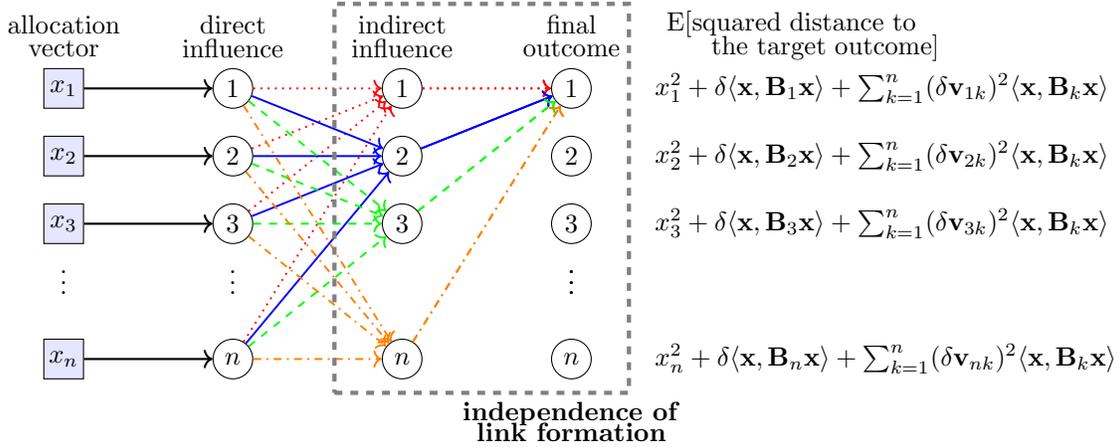

By summing the squared distances for all agents, we obtain 
\begin{align}\label{eqn:second_order}
\E \left[ \| (\bI + \delta \bG + \delta^2 \bG^2) \bx - \bz \|^2 \right] = \langle \bx, \bx \rangle + \sum_{i=1}^n w_i \langle \bx, \bB_i \bx \rangle - 2 \langle \bz , \bx \rangle + \langle \bz, \bz \rangle, 
\end{align} 
where $w_i = \delta + \delta^2 \sum_{k=1}^n \bv_{ki}^2 > 0$. The above expression remains linear in each \( \bB_i \), similar to expression \eqref{eqn:f(x,B)_main} in the main model without higher-order considerations. As detailed in \autoref{thm:rank-1}, this implies that the adversarial Nature’s selection of the worst-case scenario satisfies the rank-1 property for all agents. The following proposition summarizes this result.

\begin{proposition}\label{prp:higher_order}
With the higher-order consideration with the objective function \eqref{eqn:higher-order}, there exists a unique worst-case scenario $\bB_i^*$ for each agent $i \in N$, and it is rank-1.
\end{proposition}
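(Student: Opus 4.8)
The plan is to recognize that the aggregate higher-order objective \eqref{eqn:second_order} is, up to the immaterial overall factor $\tfrac12$ and a harmless per-agent rescaling of Nature's variables, exactly an instance of the template $f(\bx,\bB)$ in \eqref{eqn:f(x,B)_main}; \autoref{theorem:duality} and \autoref{thm:rank-1} then apply verbatim, and the only real work is to set up the reduction and check that every hypothesis is transported.

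Starting point: summing the per-agent objective \eqref{eqn:higher-order} over $i\in N$ yields \eqref{eqn:second_order}, namely $\E\!\left[\|(\bI+\delta\bG+\delta^2\bG^2)\bx-\bz\|^2\right]=\langle\bx,\bx\rangle+\sum_{i=1}^n w_i\langle\bx,\bB_i\bx\rangle-2\langle\bz,\bx\rangle+\langle\bz,\bz\rangle$ with $w_i=\delta+\delta^2\sum_{k=1}^n\bv_{ki}^2$. The cancellation of all cross terms between the zeroth-, first-, and second-order blocks (so that \eqref{eqn:higher-order} takes this clean form) uses $\bG_{ii}=0$, the independence of the rows of $\bG$, and the normalization $\bm_{ij}=0$: every surviving monomial in those cross expectations contains a diagonal factor $\bG_{ii}$ or an isolated off-row factor with zero mean. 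Since $\delta>0$, we have $w_i\ge\delta>0$ for every $i$. I would present this computation only to the extent needed to see the cancellations, since it is the structural form that matters.

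Now substitute $\widetilde{\bB}_i:=w_i\bB_i$ for $i\in N$. Because $w_i>0$, the map $\bB_i\mapsto w_i\bB_i$ is a linear bijection preserving symmetry, positive semi-definiteness, and the diagonal constraints, so as $\bB_i$ ranges over the individual uncertainty set $\cB_i$ (diagonal entries $\bv_{ij}^2$) the matrix $\widetilde{\bB}_i$ ranges over the individual uncertainty set for the rescaled variances $w_i\bv_{ij}^2$, which is again convex and compact; writing $\widetilde{\bB}=\sum_i\widetilde{\bB}_i$ over the associated Minkowski-sum set, $\tfrac12$ times the right-hand side of \eqref{eqn:second_order} equals $\tfrac12\big(\langle\bx,\bI\bx\rangle+\langle\bx,\widetilde{\bB}\bx\rangle-2\langle\bz,\bx\rangle+\|\bz\|^2\big)$, i.e.\ $f(\bx,\widetilde{\bB})$ in the sense of \eqref{eqn:f(x,B)_main} with $\bM\to\bI$, $\bC\to\bzero$, and $\psi^0+\psi\to\bz$. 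Since $\bM=\bI$ has full rank, the analogue of \textsf{Property A} is automatic; I carry over the analogue of \textsf{Property B} — that the robust solution has no zero entry, equivalently that $\bz$ avoids the corresponding lower-dimensional exceptional set — as the maintained non-negligence assumption, exactly as in the main model. \autoref{theorem:duality} then yields a unique saddle point $(\xop,\widetilde{\bB}^*)$, so the worst-case scenario against the DM's intervention is unique, and \autoref{thm:rank-1} gives $\widetilde{\bB}_i^*=\widetilde{\bq}_i\otimes\widetilde{\bq}_i$ with $\widetilde{\bq}_i=\sqrt{w_i}\,\bq_i$ and $\bq_i=(s(x_1^*)\bv_{i1},\dots,s(x_n^*)\bv_{in})^\trans$. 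Undoing the rescaling, $\bB_i^*=\widetilde{\bB}_i^*/w_i=\bq_i\otimes\bq_i$, which is unique and rank-1, as claimed.

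I expect the only real obstacle to be bookkeeping rather than any new idea: one must verify that the rescaling $\bB_i\mapsto w_i\bB_i$ carries $\cB_i$ onto the uncertainty set of the rescaled variances and that compactness, convexity, the diagonal (variance) constraints, and non-negligence all survive the substitution, and keep track that $w_i>0$ (which follows from $\delta>0$); the degenerate entries $\bv_{ii}=0$ forced by $\bG_{ii}=0$ merely zero out the $i$-th row and column of $\bB_i^*$ and cause no difficulty. If one drops the zero-mean normalization, the same argument goes through verbatim with $\bI$ replaced by the deterministic quadratic form generated by $\bI+\delta\E[\bG]+\delta^2\E[\bG^2]$, now under the corresponding \textsf{Property A}; linearity in each $\bB_i$, and hence the rank-1 conclusion, is unaffected.
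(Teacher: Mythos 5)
Your proposal is correct and follows essentially the same route as the paper: derive the second-order expectation (the cross terms vanish by $\bG_{ii}=0$, row independence, and the zero-mean normalization), observe that the resulting objective is linear in each $\bB_i$ with positive weights $w_i$, and invoke \autoref{theorem:duality} and \autoref{thm:rank-1}; your rescaling $\widetilde{\bB}_i=w_i\bB_i$ is merely a clean way of making that invocation literal (the paper applies the theorems directly, since the positive scalars $w_i$ do not change Nature's agent-by-agent maximizers), and your $\bB_i^*=\bq_i\otimes\bq_i$ agrees with the paper's $\sigma_i^2(\bw_i\otimes\bw_i)$. One caveat: your closing remark that, without the zero-mean normalization, the argument goes through verbatim with $\bI$ replaced by the deterministic quadratic form of $\bI+\delta\E[\bG]+\delta^2\E[\bG^2]$ is too quick. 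In the paper's general computation the means do not just add a deterministic quadratic; the uncertainty then enters both through $\sum_i \alpha_i\langle\bx,\bB_i\bx\rangle$ (with $\alpha_i$ containing $\bm_{ik}^2$ terms) and through $\langle(\overline{\bG}+\bI)\bx,\,\bB_i(\overline{\bG}+\bI)\bx\rangle$, so the linear coefficient on $(\bB_i)_{jk}$ is $\alpha_i x_j x_k+\big((\overline{\bG}+\bI)\bx\big)_j\big((\overline{\bG}+\bI)\bx\big)_k$, and uniqueness with the rank-1 form requires the nondegeneracy condition that these coefficients are nonzero for all $j<k$ --- a strictly stronger requirement than $\xop$ having no zero entries, which is how the paper states it.
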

The uniqueness of the worst-case scenario, in turn, allows the DM to solve her robust optimal intervention problem by selecting her intervention based on the first-order condition under the worst-case scenario, in a manner alalogous to \autoref{thm:rank-1}.

We conclude this subsection by noting that for $k$-th order interaction considerations with $k \geq 3$, the linearity in the objective with respet to the uncertainties $(\bB_i)_{i \in N}$ no longer hold, even under the assumption of independent link formation. The reason is that, even with independent link formation, higher-order interactions among influences on an agent can emerge. For example, consider walks of length 3, such as $(3,2,1,2)$ and  $(4,2,1,2)$, which differ only in the initial agent. In this case, the interaction among the influences \( \bG_{23} \), \( \bG_{24} \) and \( \bG_{21} \) in the covariance term may introduce nonlinear effects. As a result, the rank-1 property and the uniqueness of the worst-case scenario may no longer hold. We leave the analysis of higher-order interactions and robust intervention design for future research.





\section{Conclusion}
\label{sec:conclusion}

This paper explores the problem of robust intervention in networks, focusing on a DM’s interaction with agents under uncertainty. We develop a framework that accounts for adversarial uncertainty, modeled as the strategic choices of Nature. By considering the worst-case scenario, the model provides a robust optimization strategy that ensures effective interventions even when the DM lacks precise information about the network's structure.

A central finding of our analysis is the characterization of the DM's unique robust intervention strategy and Nature's optimal adversarial response. We establish that under mild conditions, the worst-case scenario is uniquely determined, enabling the DM to optimize network interventions despite significant uncertainty. Furthermore, we distinguish between the costs of global and local uncertainty, showing that global uncertainty is concentrated in the direction of intervention while local uncertainty propagates through network interactions, amplifying risk. These insights highlight the fundamental trade-off between maximizing influence and mitigating uncertainty-driven disruptions.

We extend the model in two important directions. First, we incorporate higher-order interactions, demonstrating that the rank-1 property of the worst-case scenario persists even when network effects propagate beyond direct interactions. This result ensures that intervention strategies remain predictable in more complex network settings, such as financial contagion or multi-layered supply chains. Second, we examine robust intervention under partial information, focusing on cases where a new agent enters the network. Despite the DM's incomplete knowledge of the new agent’s interactions, we show that a unique worst-case scenario emerges, balancing the DM's existing knowledge with the uncertainty introduced by the new agent.

Our framework contributes to the literature on robust optimization, network games, and robust mechanism design by offering a structured approach for decision-making in uncertain environments. Future work could expand this model to dynamic settings where network structures evolve over time, or explore alternative intervention objectives and the corresponding adversarial responses from Nature.
Additionally, our approach can be extended to the graphon framework introduced by \citet{PariseandOzdaglar:2023:ECMA}, where the adjacency matrix is represented by a graphon (a kernel operator) $\mathcal{W}: [0,1]^2 \to \mathbb{R}$. The uncertainty set for the adversarial Nature can be similarly defined, and the rank-1 property of the worst-case scenario is preserved. This extension allows our model to incorporate continuous network structures, opening new directions for research on robust intervention in large-scale and dynamic networks.



\newpage


\singlespace

\bibliographystyle{apalike}
\bibliography{references} 


\newpage

\appendix

\onehalfspacing

\counterwithin{equation}{section} 
\renewcommand{\theequation}{(\thesection.\arabic{equation})} 


\setcounter{table}{0}
\renewcommand{\thetable}{A\arabic{table}}

\section{Proofs}
\label{sec:appendix:proofs}

 
\subsection*{Proof of \autoref{theorem:duality}}

\begin{proof} 
We first prove the duality \eqref{duality} in \autoref{theorem:duality}. To this end, we introduce the following non-compact minimax theorem \citep{MR638178}:  
\begin{theorem}\label{t2}
Let $X, Y$ be nonempty convex sets, each in a Hausdorff topological vector space, and let $f$ be a real-valued function defined on $X \times Y$ such that (a) for each $y \in Y$, $f(x,y)$ is lower semi-continuous and quasi-convex on $X$, and (b) for each $x \in X$, $f(x,y)$ is upper semi-continuous and quasi-concave on $Y$. Suppose that there exists a non-empty compact set $H \subset X$ and a non-empty compact convex set $K \subset Y$ such that
\begin{align}\label{aa}
\inf_{x\in X}\sup_{y\in Y}f(x,y)\leq\inf_{x\notin H}\max_{y\in K} f(x,y).
\end{align}
Then, it follows that
\begin{align}\label{bb}
\inf_{x\in X}\sup_{y\in Y}f(x,y)=\sup_{y\in Y}\inf_{x\in X}f(x,y).
\end{align}
\end{theorem}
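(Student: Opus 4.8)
The plan is to establish \eqref{bb} as two inequalities. The direction $\sup_{y\in Y}\inf_{x\in X} f \le \inf_{x\in X}\sup_{y\in Y} f$ is the standard weak-duality inequality and holds with no hypotheses, so the entire content lies in the reverse inequality. I would prove it by reducing the non-compact problem to the classical compact minimax theorem (Sion's theorem) on product sets whose $Y$-factor is compact, and then using the coercivity hypothesis \eqref{aa} to pass to the limit over an exhausting family of such compact factors. Throughout, write $v := \inf_{x\in X}\sup_{y\in Y} f(x,y)$.

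First I would introduce the family $\mathcal C$ of all compact convex sets $C$ with $K\subseteq C\subseteq Y$, directed by inclusion. This family is nonempty (it contains $K$), it is directed (the convex hull of the union of two compact convex sets is again compact and convex in a topological vector space), and it exhausts $Y$ in the sense that every $y\in Y$ lies in some $C\in\mathcal C$, namely $C=\operatorname{conv}(K\cup\{y\})$, which is compact because it is the continuous image of $K\times[0,1]$. For each such $C$, Sion's minimax theorem applies on $X\times C$: the sets $X$ and $C$ are convex, $C$ is compact, $f(\cdot,y)$ is lower semicontinuous and quasi-convex, and $f(x,\cdot)$ is upper semicontinuous and quasi-concave. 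It yields $\inf_{x\in X}\max_{y\in C} f(x,y)=\max_{y\in C}\inf_{x\in X} f(x,y)$. Since $\max_{y\in C}\inf_{x\in X} f\le \sup_{y\in Y}\inf_{x\in X} f$ for every $C$, taking the supremum over $\mathcal C$ gives
$$\sup_{C\in\mathcal C}\ \inf_{x\in X}\max_{y\in C} f(x,y)\ \le\ \sup_{y\in Y}\inf_{x\in X} f(x,y)\ \le\ v,$$
where the last step is weak duality.

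The heart of the argument is the reverse estimate $\sup_{C\in\mathcal C}\inf_{x\in X}\max_{y\in C} f \ge v$, and this is exactly where \eqref{aa} enters. I would argue by contradiction: if the left-hand side were some $v' < v$, then for every $C\in\mathcal C$ there is $x_C\in X$ with $\max_{y\in C} f(x_C,y) < v'$. Because $K\subseteq C$, this forces $\max_{y\in K} f(x_C,y) < v' < v$; but \eqref{aa} says $v\le \inf_{x\notin H}\max_{y\in K} f$, so $x_C$ cannot lie outside $H$, i.e. every $x_C\in H$. Since $H$ is compact, the net $(x_C)_{C\in\mathcal C}$ has a subnet converging to some $\bar x\in H$. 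Fixing any $y\in Y$, one has $C\ni y$ eventually along the subnet, hence eventually $f(x_C,y)\le \max_{y'\in C} f(x_C,y') < v'$; lower semicontinuity of $f(\cdot,y)$ then gives $f(\bar x,y)\le \liminf f(x_C,y)\le v'$. As $y\in Y$ was arbitrary, $\sup_{y\in Y} f(\bar x,y)\le v'$; but $\bar x\in X$, so $v=\inf_{x\in X}\sup_{y\in Y} f\le \sup_{y\in Y} f(\bar x,y)\le v'<v$, a contradiction. Hence $\sup_{C}\inf_{X}\max_{C} f\ge v$, and combined with the previous display this forces all three quantities to equal $v$, which is precisely \eqref{bb}.

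The step I expect to be the main obstacle is not any single estimate but the correct handling of non-compactness: making the reduction to Sion's theorem rigorous requires verifying that convex hulls such as $\operatorname{conv}(K\cup\{y\})$ and $\operatorname{conv}(C_1\cup C_2)$ are compact in an arbitrary Hausdorff topological vector space, and the limiting step requires the net/subnet formalism (ordinary sequences need not suffice outside the metrizable case) together with care that the property ``$C\ni y$ eventually'' is preserved when passing to a subnet. I would take Sion's compact minimax theorem as a known black box; if a self-contained treatment were wanted, that compact case is itself the technical core and would be proved by the usual level-set argument, intersecting the closed convex sets $\{x : f(x,y)\le c\}$ and invoking a \textsf{KKM}/Helly-type finite-intersection lemma.
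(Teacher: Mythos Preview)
Your proof is correct in substance, but note that the paper does not prove this statement at all: it is quoted verbatim as a known non-compact minimax theorem from the cited reference and then applied as a black box to establish the duality \eqref{duality}. So there is no ``paper's own proof'' to compare against; you have supplied what the paper simply imports.

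On the merits, your argument is the standard and clean way to obtain such a result: reduce to Sion's theorem on the compact factor $C$, exhaust $Y$ by compact convex sets containing $K$, and use the coercivity hypothesis \eqref{aa} to trap the approximate minimizers in the compact set $H$ so a subnet limit exists. One small imprecision: from $\sup_{C}\inf_{x}\max_{y\in C} f = v' < v$ you only get $\inf_{x}\max_{y\in C} f \le v'$ for each $C$, not strict inequality, so you cannot directly produce $x_C$ with $\max_{y\in C} f(x_C,y) < v'$. The fix is trivial---pick any $v''$ with $v' < v'' < v$ and run the argument with $v''$---but it is worth stating explicitly. Your remarks about compactness of $\operatorname{conv}(K\cup\{y\})$ and the need for nets rather than sequences are accurate and are exactly the points that require care in a general Hausdorff topological vector space.
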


Let $X = \R^n$ and $Y= \cB$ in our model. We can consider $K = Y = \cB$ because $\cB$ is compact and convex. Then, condition \eqref{aa} in the above minimax theorem holds for any compact $H \subset \R^n$. Consequently, property \eqref{bb} follows, which proves the first part of \autoref{theorem:duality}.

We now prove the second part of \autoref{theorem:duality}. We solve Nature's dual problem \eqref{eqn:Nature} by applying the backward induction principle. In the problem, Nature makes a decision, and then the DM decides. Recall that $f(\bx, \bB)$ can be rewritten as 
\begin{align}\label{eqn:plannercost}
f(\bx, \bB) &= \frac{1}{2} \left( \langle \bx, \bM \bx \rangle + \langle \bx, \bB \bx \rangle + \langle \bx, \bC \bx \rangle - 2 \langle \psi^0 + \psi , \bx \rangle + \| \bz \|^2 + \| \bC^{\frac{1}{2}} \bx^0 \|^2  \right) \nn \\
            &= \frac{1}{2} \left( \langle \bx, \bH \bx \rangle - 2  \langle \varphi, \bx \rangle + \| \bz \|^2  + \|\bC^{\frac{1}{2}} \bx^0 \|^2 \right),
\end{align}
where $\bH = \bM + \bB + \bC$ and $\varphi = \psi^0 + \psi$. Note that $f(\bx, \bB)$ is convex in $\bx$. Hence, for a given $\bB$, the DM's optimal choice of $\bx$ must solve the first-order condition, $\frac{\partial f(\bx, \bB)}{\partial \bx} = \bH \bx - \varphi = 0$. Hence, the best response of the DM, $\bx_{BR}(\bB)$, as a function of $\bB$, is $\bx_{BR}(\bB) = \bH^{-1} \varphi$. Given the best-response function, $\bx_{BR}(\bB)$, Nature's objective is to maximize the following with constraint $\bB \in \calB$:
\begin{align}\label{eqn:nature_problem}
f(\bx_{BR}(\bB), \bB)
&= \langle \bx_{BR}(\bB), \bH \bx_{BR}(\bB) \rangle - 2  \langle \varphi , \bx_{BR}(\bB) \rangle + \| \bz \|^2 + \|\bC^{\frac{1}{2}} \bx^0 \|^2  \nn \\
&= - \langle \varphi, \bH^{-1} \varphi \rangle + \| \bz \|^2 + \|\bC^{\frac{1}{2}} \bx^0 \|^2 . 
\end{align}
This shows Nature's objective is to minimize $\langle \varphi, \bH^{-1} \varphi \rangle$ under the constraint $\bB \in \calB$. 

We now state and prove that $\langle \varphi, \bH^{-1} \varphi \rangle$ is convex in $\bB$:
\begin{lemma}\label{lem:convex}
The map $\bB \in \otimes_{i=1}^n \cB_i \mapsto \langle \varphi, (\bM + \bB + \bC)^{-1} \varphi \rangle$ is convex.
\end{lemma}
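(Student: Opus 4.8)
The plan is to exhibit $\langle \varphi, (\bM+\bB+\bC)^{-1}\varphi\rangle$ as a pointwise supremum of functions that are each \emph{affine} in $\bB$; a supremum of affine functions is convex, so --- since the domain $\otimes_{i=1}^n \cB_i$ is a product of convex sets, hence convex, and the assignment $(\bB_i)_{i}\mapsto \bM+\sum_i\bB_i+\bC$ is affine --- the lemma follows by composing a convex function with an affine map. Before that, I would record well-definedness: by \textsf{Property A} the matrix $\bM$ is positive definite, while $\bB$ and $\bC$ are positive semi-definite, so $\bH := \bM+\bB+\bC$ is positive definite, hence invertible, for every admissible $\bB$, and the object in the lemma is defined on all of $\otimes_{i=1}^n\cB_i$.

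The key step is the variational identity, valid for any positive definite matrix $\bS$ and any $\varphi \in \R^n$:
\[
\langle \varphi, \bS^{-1}\varphi\rangle \;=\; \sup_{\bx \in \R^n}\bigl(\, 2\langle \varphi, \bx\rangle - \langle \bx, \bS\bx\rangle \,\bigr),
\]
obtained by minimizing the strictly convex quadratic $\bx \mapsto \langle \bx, \bS\bx\rangle - 2\langle\varphi,\bx\rangle$ over $\R^n$, whose unique minimizer is $\bx=\bS^{-1}\varphi$ with minimum value $-\langle\varphi,\bS^{-1}\varphi\rangle$ --- the same computation already carried out in \eqref{eqn:nature_problem}. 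Given admissible $\bB^{(0)}, \bB^{(1)}$ and $t\in[0,1]$, set $\bH^{(t)} = \bM+(1-t)\bB^{(0)}+t\bB^{(1)}+\bC$, which is affine in $t$. Applying the identity with $\bS=\bH^{(t)}$, for each fixed $\bx$ the map $t\mapsto 2\langle\varphi,\bx\rangle-\langle\bx,\bH^{(t)}\bx\rangle$ is affine, so its supremum over $\bx$, namely $t\mapsto\langle\varphi,(\bH^{(t)})^{-1}\varphi\rangle$, is convex on $[0,1]$. Since $\bB^{(0)},\bB^{(1)}$ were arbitrary, convexity on the whole domain follows.

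This argument is essentially routine; the only points needing care are (i) invoking \textsf{Property A} so that $\bH$ is invertible and the variational formula legitimately applies at every point of the domain, and (ii) observing that one does \emph{not} need the operator convexity of the matrix inverse $\bS\mapsto\bS^{-1}$ --- the scalar quantity $\langle\varphi,\bS^{-1}\varphi\rangle$ is handled directly by the supremum representation. I do not anticipate a substantive obstacle.
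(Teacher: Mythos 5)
Your proof is correct, and it takes a genuinely different route from the paper. The paper proves \autoref{lem:convex} by a second-derivative computation along the segment $\bK_t = \bM + (1-t)\bB + t\bB' + \bC$: using $\tfrac{d}{dt}\bK_t^{-1} = -\bK_t^{-1}\tfrac{d\bK_t}{dt}\bK_t^{-1}$ it arrives at $\tfrac{d^2}{dt^2}\langle \varphi, \bK_t^{-1}\varphi\rangle = 2\langle \rho, \bK_t^{-1}\rho\rangle \ge 0$ with $\rho = (\bB'-\bB)\bK_t^{-1}\varphi$, whereas you write $\langle \varphi, \bH^{-1}\varphi\rangle = \sup_{\bx}\bigl(2\langle\varphi,\bx\rangle - \langle \bx, \bH\bx\rangle\bigr)$ and observe that for each fixed $\bx$ the expression inside the supremum is affine in $\bB$ (it is $\mathrm{const}(\bx) - \langle\bx,\bB\bx\rangle$), so the supremum is convex, and composing with the affine map $(\bB_i)_i \mapsto \bM + \sum_i \bB_i + \bC$ on the convex domain finishes the argument. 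Both proofs are complete; your well-definedness step via \textsf{Property A} (so that $\bH$ is positive definite and the variational identity applies) is exactly the right check and matches the paper's standing assumptions. What your route buys is elementarity and transparency: no matrix calculus or differentiability bookkeeping is needed, and it exposes the structural reason for convexity --- the quantity is, up to terms affine in $\bB$, the negative of Nature's inner value $\min_{\bx} f(\bx,\bB)$, which is concave as an infimum of functions affine in $\bB$; this dovetails with the same quadratic minimization the paper performs in \eqref{eqn:nature_problem}. What the paper's route buys is quantitative curvature information: the explicit formula $2\langle\rho,\bK_t^{-1}\rho\rangle$ identifies precisely when convexity is strict along a direction (namely when $(\bB'-\bB)\bx_{BR}\neq \mathbf{0}$), which the sup-of-affine argument does not provide, though nothing in the paper's subsequent use of the lemma (the first-order optimality condition \eqref{eqn:KuhnTucker}) requires it.
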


\begin{proof}[Proof of \autoref{lem:convex}] 
Let $\bB, \bB' \in \otimes_{i=1}^n \cB_i$. For each $t \in [0,1]$, let $\bK_t = \bM + (1-t)\bB + t \bB' + \bC$. Then, it suffices to show that $\langle \varphi, \bK_t^{-1} \varphi \rangle$ is convex in $t$ (i.e., $\langle \varphi, \bK_t^{-1} \varphi \rangle \geq 0$). 

To calculate $\frac{d^2 \bK_t^{-1}}{\partial t^2}$, we first find that by the product rule,
\begin{align*}
0 = \tfrac{d {\bf I}} {d t} = \tfrac{\partial {\bK_t^{-1} \bK_t}} {d t} =  \tfrac{d {\bK_t^{-1}}} {d t} {\bK_t} +  {\bK_t^{-1}}\tfrac{d {\bK_t}} {d t},   
\end{align*}
which implies
\begin{align}\label{eqn:inversediff}
\tfrac{d \bK_t^{-1}}{d t} = - \bK_t^{-1} \tfrac{d \bK_t}{d t} \bK_t^{-1}.
\end{align}
By expression \eqref{eqn:inversediff}, we have
\begin{align*}
\tfrac{d \bK_t^{-1}}{dt}  &= - \bK_t^{-1} ( \bB' - \bB) \bK_t^{-1}, \\
\tfrac{d^2 \bK_t^{-1}}{dt^2} &= - \tfrac{d \bK_t^{-1}}{dt} (\bB' - \bB) \bK_t^{-1} - \bK_t^{-1} (\bB' - \bB)\tfrac{d \bK_t^{-1}}{dt} \\
&= 2 \bK_t^{-1} ( \bB' - \bB) \bK_t^{-1} ( \bB' - \bB) \bK_t^{-1}.
\end{align*}

Let $\rho = ( \bB' - \bB) \bK_t^{-1} \varphi$. Then, we have
\begin{align*}
\tfrac{d^2}{dt^2} \langle \varphi, \bK_t^{-1} \varphi \rangle &= 2 \langle \varphi, \bK_t^{-1} ( \bB' - \bB) \bK_t^{-1} ( \bB' - \bB) \bK_t^{-1} \varphi \rangle \\
&= 2 \langle ( \bB' - \bB) \bK_t^{-1} \varphi, \bK_t^{-1} ( \bB' - \bB) \bK_t^{-1} \varphi \rangle = 2 \langle \rho, \bK_t^{-1} \rho \rangle.
\end{align*}
Since $\bK_t$ is positive definite, $\bK_t^{-1}$ is also positive definite. Thus, $\langle \rho, \bK_t^{-1} \rho \rangle \geq 0$, and so we obtain the desired convexity.
\end{proof}

By the convexity in \autoref{lem:convex}, $\bB^* = \sum_{i=1}^n \bB^*_i$ is optimal for \eqref{eqn:Nature} if and only if 
\begin{align}\label{eqn:KuhnTucker}
 \frac{d}{dt}\bigg|_{t=0} \langle \varphi,( \bM + (1-t)\bB^* + t \bB + \bC )^{-1} \varphi \rangle \geq 0
\end{align}
for any $\bB = \sum_{i=1}^n \bB_i$ for some $\bB_i \in \cB_i$. Using \eqref{eqn:inversediff}, we find
\begin{align*}
&\frac{d}{dt}\bigg|_{t=0} \langle \varphi,(\bM + (1-t)\bB^* + t \bB + \bC )^{-1} \varphi \rangle  \\
&= - \langle \varphi, (\bM+ \bB^* + \bC )^{-1} (\bB - \bB^*) (\bM+ \bB^*+ \bC )^{-1} \varphi \rangle  \\
&= - \langle (\bM+ \bB^*+ \bC )^{-1} \varphi, (\bB - \bB^*) (\bM+ \bB^*+ \bC )^{-1} \varphi \rangle  \!=\! -\langle \bx^* (\bB^*), (\bB - \bB^*) \bx^* (\bB^*) \rangle,
\end{align*}
where $\bx_{BR}(\bB^*) = (\bM+ \bB^* + \bC )^{-1} \varphi$. Note that $\bx_{BR}(\bB^*)$ is the DM's best response to $\bB^*$ in the dual problem \eqref{eqn:Nature}. Thus, showing expression \eqref{eqn:KuhnTucker} is equivalent to prove 
\begin{align}\label{eqn:KuhnTucker2}
\langle \bx_{BR}(\bB^*), \bB \bx_{BR}(\bB^*) \rangle \le \langle \bx_{BR}(\bB^*), \bB^* \bx_{BR}(\bB^*) \rangle \ \text{ for all } \, \bB \in \calB.
\end{align}

Recall that Nature's objective is to maximize $f(\bx, \bB)$ in expression \eqref{eqn:plannercost}:
\begin{align}
f(\bx, \bB) = \frac{1}{2} \left( \langle \bx, \bM \bx \rangle + \langle \bx, \bB \bx \rangle + \langle \bx, \bC \bx \rangle - 2 \langle \varphi , \bx \rangle + \| \bz \|^2 + \| \bC^{\frac{1}{2}} \bx^0 \|^2  \right).
\end{align}
Note that the only term that depends on Nature's choice $\bB$ is $\langle \bx, \bB \bx \rangle$. Hence, the optimality condition \eqref{eqn:KuhnTucker2} means that $\bB^*$ is Nature's best response to $\bx^*(\bB^*)$. Since $\bx_{BR}(\bB^*)$ is the best response to $\bB^*$ by construction, we have established the following:
\begin{claim}\label{claim:equivalence1}
$\bB^*$ is optimal solution to Nature's dual problem \eqref{eqn:Nature} if and only if $(\bx_{BR}(\bB^*), \bB^*)$ is a Nash equilibrium.
\end{claim}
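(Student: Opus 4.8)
The plan is to obtain the equivalence by assembling the pieces already in place. Because $f(\cdot,\bB)$ is strictly convex by \textsf{Property A}, for every $\bB\in\cB$ the inner minimization in \eqref{eqn:Nature} is uniquely solved by $\bx_{BR}(\bB)=(\bM+\bB+\bC)^{-1}\varphi$, so by \eqref{eqn:nature_problem} maximizing the reduced dual objective over $\cB$ is the same as minimizing the map $\bB\mapsto\langle\varphi,(\bM+\bB+\bC)^{-1}\varphi\rangle$ over the convex, compact set $\cB$; this map is convex by \autoref{lem:convex}. Hence $\bB^*$ solves \eqref{eqn:Nature} if and only if the first-order variational inequality \eqref{eqn:KuhnTucker} holds for every $\bB\in\cB$: necessity is immediate, and sufficiency is the standard fact that a stationary point of a convex function over a convex set is a global minimizer.

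I would then invoke the derivative identity already computed: differentiating $\langle\varphi,(\bM+(1-t)\bB^*+t\bB+\bC)^{-1}\varphi\rangle$ at $t=0$ via \eqref{eqn:inversediff} shows that \eqref{eqn:KuhnTucker} is equivalent to \eqref{eqn:KuhnTucker2}, namely $\langle\bx_{BR}(\bB^*),\bB\,\bx_{BR}(\bB^*)\rangle\le\langle\bx_{BR}(\bB^*),\bB^*\bx_{BR}(\bB^*)\rangle$ for all $\bB\in\cB$. Since in \eqref{eqn:plannercost} the only term of $f(\bx,\bB)$ that depends on $\bB$ is $\langle\bx,\bB\bx\rangle$, inequality \eqref{eqn:KuhnTucker2} states precisely that $\bB^*$ maximizes $\bB\mapsto f(\bx_{BR}(\bB^*),\bB)$ over $\cB$, i.e.\ that $\bB^*$ is Nature's best response to $\bx_{BR}(\bB^*)$. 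Combining this with the definitional fact that $\bx_{BR}(\bB^*)$ is the DM's (unique) best response to $\bB^*$, the pair $(\bx_{BR}(\bB^*),\bB^*)$ is a mutual best response, hence a Nash equilibrium of the zero-sum game in which the DM minimizes and Nature maximizes $f$. For the converse, if $(\bx_{BR}(\bB^*),\bB^*)$ is a Nash equilibrium then in particular $\bB^*$ best-responds to $\bx_{BR}(\bB^*)$, which is exactly \eqref{eqn:KuhnTucker2}; running the equivalence backward yields \eqref{eqn:KuhnTucker}, and convexity (\autoref{lem:convex}) makes this first-order condition sufficient for $\bB^*$ to solve \eqref{eqn:Nature}.

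The only delicate point is the first step, the equivalence between optimality in \eqref{eqn:Nature} and the variational inequality \eqref{eqn:KuhnTucker}. It rests on three ingredients: that $\bx_{BR}(\bB)$ genuinely attains $\min_{\bx} f(\bx,\bB)$ for every $\bB$, so that the reduced objective is well defined and smooth in $\bB$ (guaranteed by strict convexity, i.e.\ \textsf{Property A}); convexity of $\bB\mapsto\langle\varphi,(\bM+\bB+\bC)^{-1}\varphi\rangle$, supplied by \autoref{lem:convex}; and convexity and compactness of $\cB$. Everything else is bookkeeping: rewriting the $t=0$ derivative as a quadratic form in $\bx_{BR}(\bB^*)$, and observing that the resulting inequality is exactly Nature's best-response condition given that $f$ enters through $\langle\bx,\bB\bx\rangle$ only.
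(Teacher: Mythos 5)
Your proposal is correct and follows essentially the same route as the paper: reduce the dual to minimizing the convex map $\bB\mapsto\langle\varphi,(\bM+\bB+\bC)^{-1}\varphi\rangle$ (\autoref{lem:convex}), show via the derivative identity \eqref{eqn:inversediff} that the first-order condition \eqref{eqn:KuhnTucker} is equivalent to the best-response inequality \eqref{eqn:KuhnTucker2}, and pair it with the DM's best response to obtain the Nash-equilibrium equivalence in both directions. No gaps; your explicit treatment of sufficiency of the variational inequality under convexity is exactly the step the paper relies on.
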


To complete the proof of the theorem, it remains to prove the following:
\begin{claim}\label{claim:DMoptimal}
$\bx^* = \bx_{BR} (\bB^*)$ for any optimal $\bB^*$ for the dual problem \eqref{eqn:Nature}.
\end{claim}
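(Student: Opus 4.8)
The plan is to combine the saddle-point characterization in \autoref{claim:equivalence1} with the strict convexity of the DM's reduced objective. First I would define $F(\bx) := \max_{\bB \in \cB} f(\bx, \bB)$, which is well-defined because $\cB$ is compact and $f$ is continuous. Since each $f(\cdot, \bB)$ is strictly convex in $\bx$ (its Hessian is $\bM + \bB + \bC$, positive definite because $\bM$ is positive definite by \textsf{Property A} and $\bB, \bC$ are positive semi-definite), the pointwise maximum $F$ is strictly convex; it is also coercive, since $F(\bx) \ge \tfrac12 \langle \bx, \bM \bx\rangle - \langle \varphi, \bx\rangle + \tfrac12(\|\bz\|^2 + \|\bC^{\frac{1}{2}}\bx^0\|^2) \to \infty$ as $\|\bx\| \to \infty$. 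Hence $F$ has a unique minimizer, which is by definition the unique solution $\bx^*$ of \eqref{eqn:DM}.

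Next I would record that \eqref{eqn:Nature} admits at least one optimal solution: by \eqref{eqn:nature_problem}, the map $\bB \mapsto \min_{\bx} f(\bx, \bB) = -\tfrac12 \langle \varphi, (\bM + \bB + \bC)^{-1} \varphi \rangle + \tfrac12(\|\bz\|^2 + \|\bC^{\frac{1}{2}}\bx^0\|^2)$ is continuous on the compact set $\cB$, hence attains its maximum. I then fix an arbitrary optimal $\bB^*$ and set $\bx' := \bx_{BR}(\bB^*) = (\bM + \bB^* + \bC)^{-1}\varphi$. By \autoref{claim:equivalence1}, $(\bx', \bB^*)$ is a Nash equilibrium of the game, i.e. a saddle point of $f$: $f(\bx', \bB) \le f(\bx', \bB^*) \le f(\bx, \bB^*)$ for all $\bx \in \R^n$ and all $\bB \in \cB$.

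From here the conclusion is immediate. The left inequality says $\bB^*$ maximizes $f(\bx', \cdot)$ over $\cB$, so $F(\bx') = f(\bx', \bB^*)$; feeding this into the right inequality, for every $\bx \in \R^n$ we obtain $F(\bx) = \max_{\bB\in\cB} f(\bx, \bB) \ge f(\bx, \bB^*) \ge f(\bx', \bB^*) = F(\bx')$, so $\bx'$ minimizes $F$. By the uniqueness of the minimizer of $F$ from the first step, $\bx_{BR}(\bB^*) = \bx' = \bx^*$; since $\bB^*$ was an arbitrary optimal solution of \eqref{eqn:Nature}, this holds for every optimal $\bB^*$, which is the claim (and completes the proof of \autoref{theorem:duality}).

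The argument needs no heavy computation — it is the standard fact that each coordinate of a saddle point solves its own one-sided optimization problem — so the only point to handle with care, and the step I would present most explicitly, is keeping the two facts logically separate: strict convexity plus coercivity of $F$ deliver existence and \emph{uniqueness} of $\bx^*$ without presupposing that $\bB^*$ is unique (its uniqueness is obtained only later, via \textsf{Property B}), while compactness of $\cB$ gives mere existence of \emph{some} optimal $\bB^*$; \autoref{claim:equivalence1} is then what forces every such $\bB^*$ to induce the same best response $\bx^*$.
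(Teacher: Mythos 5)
Your proposal is correct and follows essentially the paper's own route: both arguments rest on \autoref{claim:equivalence1} (an optimal $\bB^*$ together with $\bx_{BR}(\bB^*)$ forms a Nash equilibrium) combined with the strict convexity, hence uniqueness of the minimizer, of $F(\bx)=\max_{\bB\in\cB}f(\bx,\bB)$. The only difference is a streamlining: the paper concludes by chaining equalities through the duality \eqref{duality}, whereas you use the saddle-point inequalities directly to show $\bx_{BR}(\bB^*)$ minimizes $F$, which makes the appeal to the minimax theorem unnecessary for this particular step.
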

To see why \autoref{claim:DMoptimal} yields the second part of \autoref{theorem:duality}, suppose that $(\bx^*, \bB^*)$ is a Nash equilibrium. $\bB^*$ is a solution to the dual problem \eqref{eqn:Nature} by \autoref{claim:equivalence1}. Since $\bx^*$ is a best response to $\bB^*$, $\bx^* = \bx^*$ by \autoref{claim:DMoptimal}. Conversely, suppose that $\bB^*$ is an optimal solution to the dual problem \eqref{eqn:Nature} and let $\bx^* = \bx^*$. \autoref{claim:DMoptimal} states that $\bx^* = \bx_{BR} (\bB^*)$ is a best response to $\bB^*$. Also, the optimality of $\bB^*$ to the dual problem \eqref{eqn:Nature} implies that $\bB^*$ is a best response to $\bx^*$. Thus, $(\bx^*, \bB^*)$ is a Nash equilibrium by \autoref{claim:equivalence1}. 

Now to show \autoref{claim:DMoptimal}, it suffices to show the following claim:
\begin{claim}\label{claim:claim3}
Let $\xi: \calB \to \R$ be a function defined as $\xi(\bB) = f(\bx_{BR}^*(\bB^*), \bB)$, where $\bB^*$ is an optimal solution to problem \eqref{eqn:Nature}. Then, $\bB^*$ returns the maximum value of function $\xi$.
\end{claim}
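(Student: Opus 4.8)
The plan is to reduce \autoref{claim:claim3} to the optimality inequality \eqref{eqn:KuhnTucker2} that has already been derived, since essentially all the substantive work — the non-compact minimax theorem, the convexity \autoref{lem:convex}, and the first-order characterization of dual optimality — is behind us. Set $\hat{\bx} := \bx_{BR}(\bB^*) = (\bM + \bB^* + \bC)^{-1}(\psi^0 + \psi)$, which is well defined and unique because $f(\cdot, \bB^*)$ is strictly convex in $\bx$ by \textsf{Property A}.

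First, I would isolate the dependence of $f$ on $\bB$. Inspecting \eqref{eqn:plannercost}, among the terms $\langle \bx, \bM \bx\rangle$, $\langle \bx, \bB \bx\rangle$, $\langle \bx, \bC \bx\rangle$, $-2\langle \psi^0 + \psi, \bx\rangle$, $\|\bz\|^2$, and $\|\bC^{\frac{1}{2}}\bx^0\|^2$, only $\langle \bx, \bB \bx\rangle$ involves Nature's choice $\bB$. Hence
\begin{align*}
\xi(\bB) = f(\hat{\bx}, \bB) = \tfrac12 \langle \hat{\bx}, \bB \hat{\bx} \rangle + \kappa,
\end{align*}
where $\kappa$ gathers all the $\bB$-independent terms of $f$ evaluated at $\hat{\bx}$. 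Therefore maximizing $\xi$ over $\calB$ is the same as maximizing the affine functional $\bB \mapsto \langle \hat{\bx}, \bB \hat{\bx}\rangle$ over $\calB$.

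Second, I would invoke dual optimality of $\bB^*$. By hypothesis $\bB^*$ solves the dual problem \eqref{eqn:Nature}; by \autoref{lem:convex} the dual objective $\bB \mapsto \langle \psi^0 + \psi, (\bM + \bB + \bC)^{-1}(\psi^0 + \psi)\rangle$ is convex, so the first-order condition \eqref{eqn:KuhnTucker} is both necessary and sufficient for optimality, and the computation carried out above rewrites it as \eqref{eqn:KuhnTucker2}, namely
\begin{align*}
\langle \hat{\bx}, \bB \hat{\bx} \rangle \le \langle \hat{\bx}, \bB^* \hat{\bx} \rangle \qquad \text{for every } \bB \in \calB.
\end{align*}
Combining the two displays gives $\xi(\bB) \le \xi(\bB^*)$ for all $\bB \in \calB$; since $\calB$ is compact and $\xi$ is continuous, the maximum of $\xi$ is attained, and it is attained at $\bB^*$. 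This proves \autoref{claim:claim3}.

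I do not anticipate a genuine obstacle: the only step that needs care is the elementary bookkeeping in the first paragraph — that $f(\hat{\bx}, \cdot)$ and $\tfrac12\langle \hat{\bx}, \,\cdot\, \hat{\bx}\rangle$ differ by a constant — after which \autoref{claim:claim3} is a one-line consequence of \eqref{eqn:KuhnTucker2}. It is worth recording why this closes the argument: \autoref{claim:claim3} states $f(\hat{\bx}, \bB) \le f(\hat{\bx}, \bB^*)$ for all $\bB$, while $f(\hat{\bx}, \bB^*) \le f(\bx, \bB^*)$ for all $\bx$ by definition of the best response, so $(\hat{\bx}, \bB^*)$ is a saddle point of $f$; hence $\hat{\bx}$ minimizes $\max_{\bB} f(\cdot, \bB)$, i.e., it solves \eqref{eqn:DM}, and strict convexity of $f$ in $\bx$ makes this minimizer unique, yielding \autoref{claim:DMoptimal} and the stated relation $(\bM + \bB^* + \bC)\bx^* = \psi^0 + \psi$.
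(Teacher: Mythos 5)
Your proof is correct and follows essentially the same route as the paper: the paper disposes of \autoref{claim:claim3} by noting it is equivalent to $(\bx_{BR}(\bB^*),\bB^*)$ being a Nash equilibrium, which is exactly the content of \autoref{claim:equivalence1}, itself established through the convexity of \autoref{lem:convex} and the optimality condition \eqref{eqn:KuhnTucker2} that you invoke directly. Your only difference is bookkeeping---citing \eqref{eqn:KuhnTucker2} and the constant shift $f(\hat{\bx},\bB)=\tfrac12\langle\hat{\bx},\bB\hat{\bx}\rangle+\kappa$ explicitly rather than routing through the Nash-equilibrium formulation---so the substance coincides with the paper's argument.
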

Note that \autoref{claim:claim3} is equivalent to show that $(\bx_{BR}^*(\bB^*), \bB^*)$ is a Nash equilibrium, which is already established in \autoref{claim:equivalence1}. 

Finally, observe that 
\begin{align*}
\max_{\bB \in \cB} f(\bx^*, \bB) &= \min_{\bx \in \R^n} \, \max_{\bB \in \cB} f(\bx, \bB) \quad \text{by definition of $\bx^*$} \nn\\
    &= \max_{\bB \in \cB} \, \min_{\bx \in \R^n} f(\bx, \bB) \quad \text{by the first part of \autoref{theorem:duality}} \nn \\
    &= f(\bx_{BR}^*(\bB^*), \bB^*) \quad \text{by definition of $\bB^*$} \nn  \\
    &= \max_{\bB \in \cB} f(\bx_{BR}^* (\bB^*), \bB) \quad \text{by \autoref{claim:claim3}.}
\end{align*}
Consequently, the uniqueness of $\bx^*$ implies that $\bx^* = \bx_{BR}^*(\bB^*)$. 
\end{proof}


\subsection*{Proof of \autoref{thm:rank-1}}

\begin{proof}
We first show all entries of $\bB_i^*$ are extreme values. Since $|\bB_i| \leq \bv_{ij} \bv_{ik}$, 
\begin{align*}
\langle \bx^*, \bB_i \bx^* \rangle = \underbrace{\sum_{l=1}^n (\bB_i)_{ll} (x^*_l)^2}_{\text{constant}} + 2 \sum_{j<k}^n (\bB_i)_{jk} x^*_j x^*_k \leq  \sum_{l=1}^n (\bB_i)_{ll} (x^*_l)^2  + 2 \sum_{j<k}^n ( \bv_{ij} \bv_{ik} ) \, | x^*_j | \, | x^*_k |. 
\end{align*}
Hence, it suffices to find that $\bB_i^*$ achieves the upper bound of $\langle \bx^*, \bB_i \bx^* \rangle$. Define $\bB_i^*$:
\begin{align*}
(\bB_i^*)_{jk} = 
\begin{cases}
\bv_{ij}^2                 &\quad \text{if $j = k$,} \\
s(x^*_j) \, s(x^*_k) \, \bv_{ij} \bv_{ik} &\quad \text{if $j \neq i$,} 
\end{cases}
\end{align*}
Then, $\bB_i^*$ clearly achieves the upper bound value. 

We now show that $\bB_i^*$ is a rank-1 matrix. To show that $\bB^*$ is a rank-1 matrix, it suffices to show that any $k$th row of $\bB_i^*$ with $k \geq 2$ is a multiple of the first row of $\bB_i^*$. Without loss of generality, let us assume that $i = 1$. Then, the $k$th row $(\bB_1^*)_k$ is
\begin{align*} 
&(\bB_1^*)_k = (s(\bx^*_k) s(\bx^*_1) \bv_{1k} \bv_{11}, s(\bx^*_k) s(\bx^*_2) \bv_{1k} \bv_{12}, \dots, s(\bx^*_k) s(\bx^*_n) \bv_{1k} \bv_{1n}) \\
        &= \frac{s(\bx^*_1)}{s(\bx^*_1)} \frac{s(\bx^*_k)}{s(\bx^*_k)} \frac{\bv_{1k}}{\bv_{11}}  \frac{\bv_{11}}{\bv_{1k}} (s(\bx^*_k) s(\bx^*_1) \bv_{1k} \bv_{11}, s(\bx^*_k) s(\bx^*_2) \bv_{1k} \bv_{12}, \dots, s(\bx^*_k) s(\bx^*_n) \bv_{1k} \bv_{1n})    \\
        &= \underbrace{\left( \frac{s(\bx^*_k)}{s(\bx^*_1)} \frac{\bv_{1k}}{\bv_{11}} \right)}_{= \text{$\kappa$ with $\kappa \neq 0$}} (s(\bx^*_1) s(\bx^*_1) \bv_{11} \bv_{11}, s(\bx^*_1) s(\bx^*_2) \bv_{11} \bv_{12}, \dots, s(\bx^*_1) s(\bx^*_n) \bv_{11} \bv_{1n})= \kappa (\bB_1^*)_1.
\end{align*}
Therefore, $\bB_1$ is a rank-1 matrix. 
Furthermore, since the trace of $\bB_i^*$ is the sum of eigenvalues, its unique non-zero eigenvalue must be equal to the sum of variances, which is strictly positive. This also implies that $\bB_i^* \in \calB_i$ as $\bB_i^*$ is positive semi-definite.

Note that the construction of $\bB_i^*$ is independent of specific values of $\bx^*$. In addition, by construction for the off-diagonal entries of $\bB_i^*$, the eigenvector corresponding to the unique non-zero eigenvalue is $\bq = (q_1, \dots, q_n)^\trans$, where $q_k = s(y_k) \bv_{ik}$ for all $k$. Hence, $\bq$ is in the same orthant with $\bx^*$. 

We now show that Nature's optimal choices of rank-1 covariance matrix $\bB_i$ for agent $i \in N$ can be realized under the symmetry assumption on the network $\bG$. Note that for given $i,j \in N$, the symmetry assumption, $\bG_{ij} = \bG_{ji}$, implies that $\bm_{ij} = \bm_{ji}$, $\bv_{ij} = \bv_{ji}$, and $\bU_{ij} = \bU_{ji}$. Due to the uniqueness of $\bB_i^*$ for $i$, we first construct $\bB_i^*$ for $i \in N$ that achieves the upper bound of Nature's objective function. Let $\bX$ be a random variable of mean zero and variance one (e.g., the standard normal random variable). Define $\bU_{ij} = s(x_i^*) s(x_j^*) \bv_{ij} \bX$ for $i,j \in N$. With this definition, we find that $\Var(\bU_{ij}) = \bv_{ij}^2$ for all $i,j \in N$. Let $\bB_i^*$ be the matrix defined as $(\bB_i^*)_{jk} = \E[\bU_{ij} \bU_{ik}]$. Then, as in the proof without symmetry assumption, we see that $(\bB_i^*)$ is Nature's optimal choice agent $i$ under the variance constraint. 

Furthermore, we find that for each $i,j \in N$, $\bU_{ij} = s(x_i^*) s(x_j^*) \bv_{ij} \bX = s(x_j^*) s(x_i^*) \bv_{ji} \bX = \bU_{ji}$, which implies that $\bG_{ij} = \bm_{ij} + \bU_{ij} = \bm_{ji} + \bU_{ji} = \bG_{ji}$. Therefore, the theorem holds under the symmetric assumption about $\bG$.
\end{proof}


\subsection*{Proof of \autoref{prp:vofi}}

\begin{proof}
The proof follows from a direct calculation:
\begin{align*}
\tfrac{\partial f(\bx^*(\bB), \bB)}{\partial \bB}\Big\vert_{\bB = \bB^*} &= \tfrac{\partial f(\bx^*(\bB), \bB)}{\partial \bx} \tfrac{\partial \bx^*}{\partial \bB} \Big\vert_{\bB = \bB^*} + \underbrace{ \tfrac{\partial f(\bx^*(\bB), \bB)}{\partial \bB}\Big\vert_{\bB = \bB^*} }_{\text{$=0$ by the envelope theorem}}
    = \bx^*(\bB^*) \otimes \bx^*(\bB^*).
\end{align*}
Therefore, the proposition is proven.
\end{proof}


\subsection*{Proof of \autoref{prp:vofli}}

\begin{proof}
Since $f(\bx^*, \bB^*) = \langle \bx^*, \bB^* \bx^* \rangle$, where $\bB^* = \sum_{i=1}^n \bB_i^*$ and $(\bB_i^*)_{jk} = \bv_{ij} \bv_{ik} s(x_j) s(x_k)$, we have 
\begin{align*}
f(\bx^*, \bB^*) &= \langle \bx^*, \bB^* \bx^* \rangle = \sum_{i=1}^n \langle \bx^*, \bB_i^* \bx^* \rangle \\
&= \sum_{i=1}^n x_j \left( \sum_{1 \leq j,k \leq n} \bv_{ij} \bv_{ik} \, s(x_j) s(x_k) \right) x_k = \sum_{i=1}^n \left( \sum_{1 \leq j,k \leq n} \bv_{ij} \bv_{ik} \, |x_j| |x_k| \right).
\end{align*}
Consequently, the partial derivative of $f(\bx^*, \bB^*)$ with respect to $\bv_{ij}$ is
\begin{align*}
\frac{\partial f(\bx^*, \bB^*)}{\partial \bv_{ij}} = 2 \left( \sum_{k=1}^n \bv_{ik} \, |x_j| |x_k| \right).
\end{align*}
Therefore, the proposition is proven.
\end{proof}


\subsection*{Proof of \autoref{prp:threshold_v}}

\begin{proof}
We first find a closed-form solution of $\bx^*$ with the parametric assumptions. Then, we calculate the threshold $\overline{v}$. Under the assumption of the parameters, we have
\begin{align*}
\bM =  2
\begin{bmatrix}
m^2 & m \\
m & 1 
\end{bmatrix}
\quad \text{and} \quad \bB 
= 2
\begin{bmatrix}
v^2              & \frac{\rho_1 + \rho_2}{2} \\
\frac{\rho_1 + \rho_2}{2}  &      1
\end{bmatrix}.
\end{align*}
We first guess that $\bx^* \in \R_{++}^2$. Then, $\rho_1^* = \rho_2^* = v$, which implies that
\begin{align*}
\bx^* &= \left( \bM + \bB^* + c \bI \right)^{-1} \varphi 
=
\begin{bmatrix}
m^2 + v^2 + \frac{c}{2} & m + v \\
m+v     & 2 + \frac{c}{2} 
\end{bmatrix}^{-1}
\begin{bmatrix}
m \\
1
\end{bmatrix}
=
\frac{1}{\triangle}
\begin{bmatrix}
(1 + \frac{c}{2})m - v \\
v^2 + \frac{c}{2} - mv
\end{bmatrix},
\end{align*}
where $\triangle$ is the determinant of $(\bM + \bB^* + c \bI)$ that is strictly positive. $\bx_1^* \geq \bx_2^*$ if and only if
\begin{align*}
&\!-\!v^2 \!-\! (1\!-\!m) v + \left( m \!+\! \tfrac{c}{2} m \!-\! \tfrac{c}{2}  \right) \!\geq\! 0
\!\iff\! \!-\! \left( v \!-\! \left( \tfrac{(m-1)}{2} \right)  \right)^2 \!+\! \underbrace{\left( \tfrac{m-1}{2} \right)^2 \!+\! \left( m \!+\! \tfrac{c}{2} (m\!-\!1) \right)}_{> 0 \text{ if $m \geq 1$}} \!\geq\! 0.
\end{align*}
Note that the maximum value of the last expression is strictly greater than 0 if $m \geq 1$. Hence, it suffices to show that the last expression at $v = 0$ is strictly greater than 0 as the expression is strictly concave and quadratic in $v$. When $v = 0$, the expression has the value $\left( m + \frac{c}{2} (m-1) \right) > 0$ if $m \geq 1$. Consequently, there exists $\overline{v} > 0$ such that $x_1^* \geq x_2^*$ if and only if $v \leq \overline{v}(m)$ by the intermediate value theorem.

The closed-form expression of $\overline{v}(m)$ is
\begin{align*}
\overline{v}(m) = \tfrac{1}{2} \left( (m-1) + \sqrt{2 c (m - 1) + (m + 1)^2 } \right),
\end{align*}
and its first and second derivatives are
\begin{align*}
\tfrac{\partial \overline{v}(m)}{\partial m} = \tfrac{1}{2} + \tfrac{(m+c+1)}{2 \sqrt{2 c (m-1) + (m + 1)^2}} > 0,\;\; 
\tfrac{\partial^2 \overline{v}(m)}{\partial m^2} = 
- \tfrac{c (c + 4)}{2 \left( 2 c (m -1) + (m+1)^2 \right)^{3/2}} < 0,
\end{align*}
which prove that $\overline{v}(m)$ is strictly increasing and concave. 
\end{proof}


\subsection*{Proof of \autoref{thm:partial_info}}

We state and prove a generalized version. Recall that ${\rm PD}_{n}$ and ${\rm PSD}_{n}$ denote the sets of all $n \times n$ real symmetric positive definite and positive semi-definite matrices.

\begin{theorem}\label{thm:general_theorem}
Let $\bB \in \rm{PD}_n$, $I$ be a proper (possibly empty) subset of $N = \{1,...,n\}$, $b_i \in \R$ for all $i \in I$, and $b_{n+1} > 0$. Then, for any intervention $\xbar \in \R^{n+1}$ with $\overline{x}_{j} \overline{x}_{n+1} \neq 0$  for some $j \in \{1,...,n\} \setminus I$, the problem
\begin{align}\label{partiallyknownproblem}
\text{maximize} \, \langle \xbar, \overline{\bB} \xbar \rangle \ \text{ over } \  \{  \overline{\bB} \in \overline{\cB}_{\bB, b}^{\rm PSD} \, | \, \overline{\bB}_{i, n+1} = b_i \text{ for all } i \in I\}\end{align}
has a unique solution (unless the latter feasible set is empty). 
\end{theorem}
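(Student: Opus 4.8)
The plan is to reduce this matrix optimization to the problem of maximizing an \emph{affine} functional over a strictly convex compact set sitting inside the affine subspace of admissible last columns, and then to invoke the elementary fact that a non-constant affine functional has a unique maximizer over a strictly convex compact set. Set $J := N \setminus I$, which is nonempty since $I$ is proper.

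\textbf{Parametrization and reduction.} Every $\overline{\bB} \in \overline{\cB}_{\bB,b}^{\rm PSD}$ is pinned down by its $(n{+}1)$-st column above the corner, because its leading $n\times n$ block is fixed to $\bB$ and its corner entry is fixed to $b_{n+1}$; write $\overline{\bB} = \begin{bmatrix} \bB & \bc \\ \bc^\trans & b_{n+1}\end{bmatrix}$ with $\bc \in \R^n$, and split $\xbar = (\bx, \overline{x}_{n+1})$ with $\bx = (\overline{x}_1,\dots,\overline{x}_n)^\trans$. Expanding the quadratic form gives $\langle \xbar, \overline{\bB}\,\xbar\rangle = \langle \bx,\bB\bx\rangle + 2\overline{x}_{n+1}\langle\bc,\bx\rangle + b_{n+1}\overline{x}_{n+1}^2$, so the objective is affine in $\bc$ with slope vector $2\overline{x}_{n+1}\bx$. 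Since $\bB \in {\rm PD}_n$, the Schur complement criterion gives $\overline{\bB}\in{\rm PSD}_{n+1}$ iff $\langle\bc,\bB^{-1}\bc\rangle \le b_{n+1}$. Together with the linear equalities $\bc_i = b_i$ for $i\in I$, the feasible region for $\bc$ is $E\cap A$, where $E = \{\bc : \langle\bc,\bB^{-1}\bc\rangle \le b_{n+1}\}$ and $A = \{\bc : \bc_i = b_i \text{ for all } i\in I\}$ is an affine subspace whose direction space is $\mathrm{span}\{e_j : j\in J\}$.

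\textbf{Geometry of the feasible set and uniqueness.} Because $\bB^{-1}\succ 0$, the map $\bc\mapsto\langle\bc,\bB^{-1}\bc\rangle$ is strictly convex with bounded sublevel sets, so $E$ is a compact solid ellipsoid; restricting it to $A$ keeps it strictly convex, since its Hessian along the direction space is the $J$-indexed principal submatrix of $\bB^{-1}$, which is positive definite. Hence, in coordinates on $A$, the set $C := E\cap A$ is the sublevel set of a strictly convex quadratic with bounded sublevel sets: it is empty, a single point, or a compact solid ellipsoid. If it is empty there is nothing to prove; if it is a single point, uniqueness is immediate; otherwise $C$ is strictly convex, compact, with nonempty relative interior. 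On $C$ we maximize $\ell(\bc) = 2\overline{x}_{n+1}\langle\bc,\bx\rangle + \langle\bx,\bB\bx\rangle + b_{n+1}\overline{x}_{n+1}^2$. By hypothesis there is $j\in J$ with $\overline{x}_j\overline{x}_{n+1}\neq 0$, so the slope of $\ell$ along the admissible direction $e_j$ equals $2\overline{x}_{n+1}\overline{x}_j\neq 0$; thus $\ell$ is non-constant on $\mathrm{aff}(C)$. If $\bc',\bc''$ were distinct maximizers, their midpoint would lie in $\mathrm{relint}(C)$ while still attaining the maximum, yet an affine functional non-constant on $\mathrm{aff}(C)$ can be strictly increased by an arbitrarily small admissible perturbation of any relative-interior point — a contradiction. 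This gives the unique solution; setting $I=\emptyset$ recovers \autoref{thm:partial_info}.

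\textbf{Main obstacle.} The substantive point is the geometric step: one must verify that imposing the affine equality constraints $\bc_i = b_i$ does not destroy the strict convexity of the feasible ellipsoid (this is where positive definiteness of principal submatrices of $\bB^{-1}$ enters, i.e., where $\bB\in{\rm PD}_n$ is used rather than merely $\bB\in{\rm PSD}_n$), and one must dispose of the degenerate case in which $E\cap A$ collapses to a point. In parallel, the sign hypothesis $\overline{x}_j\overline{x}_{n+1}\neq 0$ must be matched to the directions that genuinely remain free after the constraints — i.e., to $\mathrm{span}\{e_j : j\in J\}$ — so that the objective is non-constant precisely along those directions. Once these are in place, uniqueness is a routine consequence of strict convexity, and the argument is robust to allowing nontrivial $I$ and arbitrary real $b_i$.
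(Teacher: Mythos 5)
Your proof is correct, and it reaches the conclusion by a somewhat different route from the paper's at the key step. Both arguments share the same architecture: parametrize the feasible matrices by the unknown entries of the last column, observe that the objective is affine in those entries with slope $2\overline{x}_{n+1}\bx$, and deduce uniqueness from maximizing a non-constant affine functional over a strictly convex compact set. The difference lies in how strict convexity of the feasible set is established. The paper's \autoref{strictconvexity} proves it abstractly: if the midpoint of two distinct feasible completions $\bBbar^1,\bBbar^2$ failed to be positive definite, then ${\rm Null}(\bBbar^1)\cap{\rm Null}(\bBbar^2)\neq\{\mathbf{0}\}$, and a common null vector is shown to force either equality of the two last columns or a contradiction with $\bB\in{\rm PD}_n$; the affine constraints $\overline{\bB}_{i,n+1}=b_i$ are then handled by the remark that a slice of a strictly convex set is strictly convex. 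You instead use the Schur complement under $\bB\in{\rm PD}_n$ to identify the feasible last columns explicitly as the solid ellipsoid $\{\bc:\langle\bc,\bB^{-1}\bc\rangle\le b_{n+1}\}$, and you get strict convexity of the sliced set directly from positive definiteness of the $J$-indexed principal submatrix of $\bB^{-1}$. Your route is more explicit and elementary, and it is cleaner on the degenerate cases (you treat the empty and singleton slices explicitly, whereas the paper's ``any slice of a strictly convex set is strictly convex'' glosses over them); the paper's null-space argument has the advantage of never inverting $\bB$ and is the form the authors indicate extends beyond a direct Schur-complement computation, but under the hypothesis $\bB\in{\rm PD}_n$ of \autoref{thm:general_theorem} both are fully valid. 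Your matching of the nondegeneracy hypothesis $\overline{x}_j\overline{x}_{n+1}\neq 0$ for some $j\in\{1,\dots,n\}\setminus I$ to the free directions ${\rm span}\{e_j: j\in J\}$ is exactly the role it plays in the paper's final step.
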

\autoref{thm:partial_info} is clearly a special case of \autoref{thm:general_theorem} with $I = \emptyset$. We shall prove \autoref{thm:general_theorem}.

\begin{proof}
For a given matrix $\bB \in \R^{n \times n}$, let $\overline{\bB} \in \R^{(n+1) \times (n+1)}$ denote a symmetric and extended matrix of $\bB$ having $\bB$ as its principal submatrix. For each $\bv = (\bv_1,...,\bv_n)^\trans \in \R^n$, let $\overline{\bv} = (\overline{\bv}_1, \dots, \overline{\bv}_n, \overline{\bv}_{n+1})^\trans \in \R^{n+1}$ denote the extended vector of $\bv$ having $\bv$ as its subvector; that is, $\overline{\bv}_k = \bv_k$ for all $1 \leq k \leq n$. By abusing notation, we denote $\overline{\bv}$ by $\overline{\bv} = (\bv^\trans, \bv_{n+1})^\trans$. For a given matrix $\bA$, we let $\text{Null}(\bA)$ denote the \text{Null} space of $\bA$; that is, $\text{Null}(\bA) = \{ \bx \, | \, \bA \bx = \mathbf{0} \}$.

Let $V$ be an affine subspace of $\R^d$ that inherits the usual topology of $\R^d$; that is, $A \subset V$ is open in $V$ if and only if there exists an open set $\overline{A}$ in $\R^d$ such that $A = \overline{A} \cap V$. We say that a compact convex set $A \subset V$ is strictly convex in $V$ if $A$ has nonempty interior in $V$, and for any distinct $\bx$ and $ \bx'$ in $ A$, $\frac{\bx+\bx'}{2}$ is also contained in the interior of $A$. 

Given $\bB \in {\rm PSD}_n$ and $b > 0$, we define $\overline{\cB}_{\bB,b}$ as the set of all extended matrices of $\bB$ having the $(n+1)$-th row and $(n+1)$-th column entry by $b$:
\begin{align*}
\overline{\cB}_{\bB,b} \!&=\! \{ \overline{\bB} \in \R^{(n+1) \times (n+1)} \, | \,\overline{\bB} \text{ is symmetric, }  \overline{\bB}_{ij} = \bB_{ij} \text{ for all } 1 \le i,j \le n, \overline{\bB}_{(n+1)(n+1)} = b \}.
\end{align*}
We define $\overline{\cB}_{\bB,b}^{\rm PSD} = \overline{\cB}_{\bB,b} \cap {\rm PSD}_{n+1}$ and $\overline{\cB}_{\bB,b}^{\rm PD} = \overline{\cB}_{\bB,b} \cap {\rm PD}_{n+1}$.

\autoref{thm:general_theorem} will be shown as a consequence of the following lemma.
\begin{lemma}\label{strictconvexity}
For any $\bB \in {\rm PD}_n$ and $b > 0$, the set 
\begin{align}
{\cal X}_{\bB,b}^{\rm PSD} = \{ \bb \in \R^n \, | \, \bb_i = \overline{\bB}_{i, n+1} \text{ for all } i=1,...,n \text{ for some } \overline{\bB} \in \overline{\cB}_{\bB,b}^{\rm PSD}  \}
\end{align}
is a strictly convex and compact subset of $\R^n$.
\end{lemma}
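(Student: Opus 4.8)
The plan is to describe $\mathcal{X}_{\bB,b}^{\rm PSD}$ explicitly as a solid ellipsoid, and then read off compactness and strict convexity directly from that description. The main observation is that a symmetric extended matrix with principal submatrix $\bB$ and lower-right entry $b$ is completely determined by its cross vector $\bb = (\overline{\bB}_{1,n+1},\dots,\overline{\bB}_{n,n+1})^\trans$, namely
\[
\overline{\bB} \;=\; \begin{pmatrix} \bB & \bb \\ \bb^\trans & b \end{pmatrix}, \qquad \bb \in \R^n .
\]
So $\bb \in \mathcal{X}_{\bB,b}^{\rm PSD}$ if and only if this block matrix is positive semi-definite. Since $\bB \in {\rm PD}_n$ is invertible, the Schur complement theorem \citep{Meyer:Book:2010} applies and gives $\overline{\bB} \succeq 0$ exactly when the Schur complement $b - \langle \bb, \bB^{-1}\bb\rangle$ is nonnegative. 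Hence
\[
\mathcal{X}_{\bB,b}^{\rm PSD} \;=\; \bigl\{\, \bb \in \R^n \;:\; \langle \bb, \bB^{-1}\bb\rangle \le b \,\bigr\},
\]
i.e.\ the sublevel set, at level $b>0$, of the positive definite quadratic form $q(\bb) := \langle \bb, \bB^{-1}\bb\rangle$.

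Next I would establish compactness. Closedness is immediate, since $\mathcal{X}_{\bB,b}^{\rm PSD}$ is the preimage of $(-\infty,b]$ under the continuous map $q$. Boundedness follows from the estimate $q(\bb) \ge \lambda_{\min}(\bB^{-1})\,\|\bb\|^2$, where $\lambda_{\min}(\bB^{-1})>0$ because $\bB \succ 0$; thus every $\bb$ in the set satisfies $\|\bb\|^2 \le b/\lambda_{\min}(\bB^{-1})$.

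Then I would verify strict convexity in the sense defined above, with the ambient affine space taken to be $\R^n$ itself. For the nonempty interior condition: $b>0$, so $\bzero$ lies in $\{\bb : q(\bb) < b\}$, which is open and contained in $\mathcal{X}_{\bB,b}^{\rm PSD}$. For the midpoint condition, take distinct $\bb, \bb' \in \mathcal{X}_{\bB,b}^{\rm PSD}$; since $\bB^{-1}$ is positive definite, $q$ is strictly convex, so
\[
q\!\left(\tfrac{\bb+\bb'}{2}\right) \;<\; \tfrac12 q(\bb) + \tfrac12 q(\bb') \;\le\; b ,
\]
which places $\tfrac{\bb+\bb'}{2}$ in the interior. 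Combining the three steps yields the lemma.

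I do not anticipate a serious obstacle here: the argument is essentially the identification of the feasible set with an ellipsoid. The two points that require care are (i) the correct direction of the Schur complement criterion, which is precisely where the hypothesis $\bB \in {\rm PD}_n$ (rather than merely ${\rm PSD}_n$) is used to guarantee invertibility and hence a genuinely full-dimensional, nondegenerate ellipsoid, and (ii) matching the paper's notion of ``strictly convex in $V$'' — for this lemma $V = \R^n$, so ``nonempty interior in $V$'' is just the usual nonempty interior, handled above; the more general affine-slice version of this property will be what feeds into the proof of \autoref{thm:general_theorem}.
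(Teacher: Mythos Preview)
Your proof is correct and takes a cleaner route than the paper's own argument. You invoke the Schur complement directly to identify $\mathcal{X}_{\bB,b}^{\rm PSD}$ with the solid ellipsoid $\{\bb:\langle\bb,\bB^{-1}\bb\rangle\le b\}$, after which compactness and strict convexity are read off from standard properties of positive definite quadratic forms. The paper instead characterizes membership via the determinant condition $\det(\overline{\bB})\ge 0$ (equivalent, since $\det(\overline{\bB})=\det(\bB)\cdot(b-\bb^\trans\bB^{-1}\bb)$), and then proves strict convexity by a null-space argument: if the midpoint $\tfrac12(\overline{\bB}^1+\overline{\bB}^2)$ were singular, a common null vector of $\overline{\bB}^1$ and $\overline{\bB}^2$ would force $\bb^1=\bb^2$. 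Your approach is shorter and exposes the geometry more transparently; indeed, the paper's own illustrative discussion in the main text (the two-agent example with the elliptical disks in \autoref{fig:partial_info}) uses exactly your Schur-complement description, and the accompanying footnote that ``strict convexity of the uncertainty set is not a direct consequence of the Schur Complement theorem'' appears to be overly cautious for the lemma as stated. The paper's null-space argument, while more intricate, has the minor virtue of avoiding any explicit inversion of $\bB$ and might generalize more readily to degenerate $\bB$, but that is not needed here.
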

\begin{proof}
It is clear that ${\cal X}_{\bB,b}^{\rm PSD}$ is compact and convex. For $\bB \in {\rm PD}_n$ and $b > 0$, by Sylvester's criterion \citep{Meyer:Book:2010}, $\overline{\bB} \in \overline{\cB}_{\bB,b}$ is positive semi-definite if and only if ${\rm det}(\overline{\bB}) \geq 0$. Moreover, $\overline{\bB} \in \overline{\cB}_{\bB,b}$ is positive definite if and only if ${\rm det}(\overline{\bB}) > 0$. Thus,
\begin{align*}
{\cal X}_{\bB,b}^{\rm PSD} = \{ \bb \in \R^n \, | \, \bb_i = \overline{\bB}_{i, n+1} \text{ for all } i \in N \text{ for some } \overline{\bB} \in \overline{\cB}_{\bB,b} \text{ with } {\rm det}(\overline{\bB}) \ge 0  \},
\end{align*}
and the interior of the set ${\cal X}_{\bB,b}^{\rm PSD}$ is given by
\begin{align*}
{\cal X}_{\bB,b}^{\rm PD} &= \{ \bb \in \R^n \, | \, \bb_i = \overline{\bB}_{i, n+1} \text{ for all } i \in N \text{ for some } \overline{\bB} \in \overline{\cB}_{\bB,b} \text{ with } {\rm det}(\overline{\bB}) > 0  \} \\
&= \{ \bb \in \R^n \, | \, \bb_i = \overline{\bB}_{i, n+1} \text{ for all } i=1,...,n \text{ for some } \overline{\bB} \in \overline{\cB}_{\bB,b}^{\rm PD} \},
\end{align*}
which is nonempty because $\mathbf{0} \in {\cal X}_{\bB,b}^{\rm PD}$. Let $\bb^1, \bb^2 \in {\cal X}_{\bB,b}^{\rm PSD} $ with $\bb^1 \ne \bb^2 $. We need to show $\frac{\bb^1 + \bb^2}{2} \in {\cal X}_{\bB,b}^{\rm PD}$. For $j=1,2$, let $\overline{\bB}^j \in  \overline{\cB}_{\bB,b}^{\rm PSD}$ such that $\bb^j_i = \overline{\bB}^j_{i, n+1}$ for all $ i=1,...,n$. Then, the claim $\frac{\bb^1 + \bb^2}{2} \in {\cal X}_{\bB,b}^{\rm PD}$ is equivalent to  ${\rm det}( \frac{\overline{\bB}^1 + \overline{\bB}^2}{2}) > 0$. By a way of contradiction, suppose that this is false. Then, ${\rm det}( \frac{\overline{\bB}^1 + \overline{\bB}^2}{2}) = 0$, i.e., $ \frac{\overline{\bB}^1 + \overline{\bB}^2}{2}$ is not positive definite, which is the case if and only if $\text{Null}(\bBbar^1) \cap \text{Null}(\bBbar^2) \neq \{ \mathbf{0} \}$. For any $\overline{\bz} \in \text{Null}(\bBbar^1) \cap \text{Null}(\bBbar^2)$ with $\overline{\bz} \neq \mathbf{0}$, we claim that $\overline{\bz}_{n+1} = 0$. To prove this, suppose $\overline{\bz}_{n+1} \neq 0$ and $\bBbar^j \overline{\bz} = \mathbf{0}$ for $j=1,2$. Since $\bBbar^j \overline{\bz} = \sum_{k=1}^{n+1} \overline{\bz}_k \bBbar^j_k$, where $\bBbar^j_k$ denotes the $k$th column of $\bBbar^j$, the equation $ \sum_{k=1}^{n+1} \overline{\bz}_k \bBbar^j_k= \mathbf{0} \in \R^{n+1}$ in particular yields, by ignoring the last $(n+1)$th row, $\sum_{k=1}^n \overline{\bz}_k  \bB_k + \overline{\bz}_{n+1}\bb^j = \mathbf{0} \in \R^n$, $j=1,2$. However, this yields $\bb^1 = -\sum_{k=1}^{n} \tfrac{\overline{\bz}_k}{\overline{\bz}_{n+1}} \bB_k = \bb^2$, a contradiction. Hence $\overline{\bz}_{n+1} = \mathbf{0}$, but then $\sum_{k=1}^n \overline{\bz}_k  \bB_k= \mathbf{0}$ implies $\overline{\bz} = \mathbf{0}$ due to the assumption $\bB \in {\rm PD}_n$. We conclude $\text{Null}(\bBbar^1) \cap \text{Null}(\bBbar^2) = \{ \mathbf{0} \}$, which yields ${\rm det}( \frac{\overline{\bB}^1 + \overline{\bB}^2}{2}) > 0$, as desired.
\end{proof}

We now prove the theorem. As the only unknown entries in $\overline{\bB}$ are $(\overline{\bB}_{i, n+1})_{i \in \{1,...,n\} \setminus I }$, the problem \eqref{partiallyknownproblem} is equivalent to
$\displaystyle \max_{\overline{\bB} \in \overline{\cB}_{\bB, b}^{\rm PSD}}  \sum_{i \in \{1,...,n\} \setminus I}\xbar_i \xbar_{n+1}  \overline{\bB}_{i, n+1}$. The set of feasible variables $(\overline{\bB}_{i, n+1})\in \R^{n - |I|}$ is equal to the projection of the slice set ${\cal X}_{\bB,b}^{\rm PSD} \cap \{\bx \in \R^n \, | \, \bx_i = b_i \text{ for all } i \in I\}$ onto $\{(\bx_i)_{i \in \{1,...,n\} \setminus I } \} \cong \R^{n - |I|}$. Since the objective function is nonzero and linear in the variable $(\overline{\bB}_{i, n+1})$, and any slice of a strictly convex set is also strictly convex, the theorem follows from \autoref{strictconvexity}.
\end{proof}


\subsection*{Proof of \autoref{prp:higher_order}}

\begin{proof}
For simplicity, we prove the proposition by assuming that the mean influence of the link $\bG_{ij}$ is zero for all $i,j \in N$. Then, we provide a general proof without the assumption.

For each $i \in N$, we find that
\begin{align}\label{eqn:highorder}
&| (\bI + \delta \bG + \delta^2 \bG^2)_i \bx - z_i |^2  = \Big| x_i + \delta \sum_{j=1}^n \bG_{ij} x_j + \delta^2 \sum_{1 \leq k,l \leq n} \bG_{ik} \bG_{kl} x_l - z_i \Big|^2 \nn \\
&= x_i^2 + \left( \delta \sum_{j=1}^n \bG_{ij} x_j \right)^2 + \left( \delta^2 \sum_{1 \leq k,l \leq n} \bG_{ik} \bG_{kl} x_l \right)^2 + z_i^2 + 2 x_i \left( \delta \sum_{j=1}^n \bG_{ij} x_j \right)\nn \\
&\quad  + 2 x_i \left( \delta^2 \sum_{1 \leq k,l \leq n} \bG_{ik} \bG_{kl} x_l \right) - 2 x_i z_i  + 2 \left( \delta \sum_{j=1}^n \bG_{ij} x_j \right) \left( \delta^2 \sum_{1 \leq k,l \leq n} \bG_{ik} \bG_{kl} x_l \right)\nn \\ 
&\quad - 2 z_i \left( \delta \sum_{j=1}^n \bG_{ij} x_j \right) - 2 z_i \left( \delta^2 \sum_{1 \leq k,l \leq n} \bG_{ik} \bG_{kl} x_l \right).
\end{align}
We investigate the expectation of each term in expression \eqref{eqn:highorder}. First, we find that 
\begin{align*}
\E \left[ \left( \sum_{1 \leq k,l \leq n} \bG_{ik} \bG_{kl} x_l \right) \right] &= 0, \quad 
\E \left[ \left( \sum_{j=1}^n \bG_{ij} x_j \right) \left( \sum_{1 \leq k,l \leq n} \bG_{ik} \bG_{kl} x_l \right) \right] = 0, \\
\E \left[ \left( \sum_{j=1}^n \bG_{ij} x_j \right) \right] &= 0, \quad
\E \left[ \left( \sum_{1 \leq k,l \leq n} \bG_{ik} \bG_{kl} x_l \right) \right] = 0.
\end{align*}
Second, $\E \left[ \left( \sum_{j=1}^n \bG_{ij} x_j \right)^2 \right] = \langle \bx, \bB_i \bx \rangle$ as in the main model. Third, we observe that
\begin{align*}
&\E \left[ \left( \sum_{1 \leq k,l \leq n} \bG_{ik} \bG_{kl} x_l \right)^2 \right] = \E \left[(\sum_{1 \leq k,l,s,t \leq n} \bG_{ik} \bG_{kl} \bG_{is} \bG_{st} x_l x_t) \right] \\
&\!=\! \sum_{k \neq i, s \neq i,l,t} \E \left[ \bG_{ik} \bG_{is} \right] \E \left[ \bG_{kl} \bG_{st} \right] x_l x_t 
\!=\! \sum_{k \neq i,l,t} \E \left[ \bG_{ik}^2 \right] \E \left[ \bG_{kl} \bG_{kt} \right] x_l x_t \!=\! \sum_{k=1}^n \bv_{ik}^2 \langle \bx, \bB_k \bx \rangle.
\end{align*}
Thus, we obtain expression \eqref{eqn:higher-order} in the main text:
\begin{align}\label{eqn:second_order_a}
\E \left[ | (\bI + \delta \bG + \delta^2 \bG^2)_i \bx - z_i |^2 \right] &= x_i^2 + \E \left[  \left( \delta \sum_{j=1}^n \bG_{ij} x_j \right)^2 + \left( \delta^2 \sum_{1 \leq k,l \leq n} \bG_{ik} \bG_{kl} x_l \right)^2 \right] + z_i^2 \nn \\
    &= x_i^2 + \delta \langle \bx, \bB_i \bx \rangle + \delta^2 \sum_{k=1}^n \bv_{ik}^2 \langle \bx, \bB_k \bx \rangle + z_i^2.
\end{align}

Finally, by summing the squared distances for all agents, we obtain 
\begin{align}\label{eqn:second_oder_b}
\E \left[ \| (\bI + \bG + \bG^2) \bx - \bz \|^2 \right] &= \sum_{i=1}^n \left( x_i^2 + \delta \langle \bx, \bB_i \bx \rangle + \delta^2 \sum_{k=1}^n \bv_{ik}^2 \langle \bx, \bB_k \bx \rangle + z_i^2 \right) \nn \\
    &= \langle \bx, \bx \rangle + \sum_{i=1}^n w_i \langle \bx, \bB_i \bx \rangle - 2 \langle \bz , \bx \rangle + \langle \bz, \bz \rangle, 
\end{align} 
where $w_i = \delta + \delta^2 \sum_{k=1}^n \bv_{ki}^2 > 0$ as in expression \eqref{eqn:second_order} in the main text. 

Since the DM's objective function \eqref{eqn:second_oder_b} is linear in each $\bB_i$, the rank-1 property and the uniqueness of $\bB_i^*$ are held by \autoref{theorem:duality} and \autoref{thm:rank-1}. Therefore, the worst-case scenario $\bB^* = \sum_{i=1}^n \bB_i^*$ is unique. 

We now prove the proposition without the assumption. To economize notation, without loss of generality, we let $\delta = 1$ because it does not affect the linearity of the DM's objective function. We let $\overline{\bG} = \E[\bG]$ and write $\bG = \overline{\bG} + \bU$. We denote by $\bm_i$ and $ \bm^i$ the $i$th row vector and column vector of $\overline{\bG}$, respectively. Recall that $\bM_i = \bm_i \otimes \bm_i$, and $\E[\bU_i \bU_i^T] = \bB_i$. Since we assume that $\bG_{ii} = 0$, we have $\bU_{ii} = \bm_{ii} = 0$. We calculate that for each $i \in N$,
\begin{align*}
\E \left[ \left( \sum_{j=1}^n \bG_{ij} x_j \right)^2 \right] &= \E \left[ \left((\bm_i + \bU_i)^\trans \bx\right)^2\right] = \langle \bx, \bB_i \bx \rangle + \langle \bx, \bM_i \bx \rangle.
\end{align*}
In addition, it follows that for each $i \in N$,
\begin{align*}
\E \left[ \left( \sum_{1 \leq k,l \leq n} \bG_{ik} \bG_{kl} x_l \right)^2 \right] &= \E \left[\sum_{1 \leq k,l,s,t \leq n} \bG_{ik} \bG_{kl} \bG_{is} \bG_{st} x_l x_t \right] \nn  \\
&= \sum_{1 \leq l,t \leq n} \sum_{k \neq i, s \neq i} \E \left[ \bG_{ik} \bG_{is} \right] \E \left[ \bG_{kl} \bG_{st} \right] x_l x_t  \nn \\
&=  \sum_{1 \leq l,t \leq n} \sum_{k \neq i, s \neq i} (\E [ \bU_{ik} \bU_{is}] + \bm_{ik}\bm_{is}) \E (\left[ \bU_{kl} \bU_{st} \right] + \bm_{kl} \bm_{st}) x_l x_t.
\end{align*}

We find the following useful expressions for each $i \in N$:
\begin{align*}
\sum_{1 \leq l,t \leq n} \sum_{k \neq i, s \neq i} \E [ \bU_{ik} \bU_{is}] \E \left[ \bU_{kl} \bU_{st} \right] x_l x_t &= \sum_{k \neq i} \sum_{l,t} \E [ \bU_{ik} \bU_{ik}] \E \left[ \bU_{kl} \bU_{kt} \right] x_l x_t = \sum_k \bv_{ik}^2 \langle \bx, \bB_k \bx \rangle, \\
\sum_{1 \leq l,t \leq n} \sum_{k \neq i, s \neq i} \bm_{ik}\bm_{is} \E \left[ \bU_{kl} \bU_{st} \right] x_l x_t &= \sum_{k}  \bm_{ik}^2 \langle \bx, \bB_k \bx \rangle, \\
\sum_{1 \leq l,t \leq n} \sum_{k \neq i, s \neq i}  \bm_{kl} \bm_{st} \E [ \bU_{ik} \bU_{is}] x_l x_t &= \sum_{l,t}  \langle x_l\bm^l, \bB_i (x_t\bm^t) \rangle = \langle \overline{\bG} \bx, \bB_i \overline{\bG} \bx \rangle, \\
\sum_{1 \leq l,t,k,s \leq n}  \bm_{ik}\bm_{is} \bm_{kl} \bm_{st}x_l x_t &= \sum_{l,t} (\overline{\bG}^2_{il} x_l) (\overline{\bG}^2_{it} x_t) = (\overline{\bG}^2 \bx)_i^2.
\end{align*}

Then, we obtain that for each $i \in N$,
\begin{align*}
\E \left[ \left( \sum_{1 \leq k,l \leq n} \bG_{ik} \bG_{kl} x_l \right)^2 \right] &= \sum_{k=1}^n (\bv_{ik}^2 + \bm_{ik}^2 ) \langle \bx, \bB_k \bx \rangle + \langle  \overline{\bG}\bx, \bB_i \overline{\bG}\bx \rangle + (\overline{\bG}^2 \bx)_i^2, \\
\E \left[ \sum_{j=1}^n \bG_{ij} x_j  \right] &=  \sum_{j=1}^n \bm_{ij} x_j = \langle \bm_i, \bx \rangle = (\overline{\bG} \bx)_i, \\
\E \left[ \sum_{1 \leq k,l \leq n} \bG_{ik} \bG_{kl} x_l \right] &= \sum_{k,l} \left( \bm_{ik} \bm_{kl} + \E \left[ \bU_{ik} \bU_{kl}\right] \right) x_l = (\overline{\bG}^2 \bx)_i + 0 = (\overline{\bG}^2 \bx)_i, \\
\E \left[  \sum_{1 \leq j,k,l \leq n} \bG_{ij}\bG_{ik} \bG_{kl} x_j x_l  \right] &= \sum_{1 \leq j,l \leq} \sum_{k \neq i} \E \left[\bG_{ij}\bG_{ik} \right] \E \left[ \bG_{kl} \right]  x_j x_l \\
    &= \sum_{1 \leq j,l,k \leq n} \left( \E \left[\bU_{ij}\bU_{ik}\right] + \bm_{ij}\bm_{ik} \right) \bm_{kl} x_j x_l \\
    &= \sum_{l=1}^n \left( \langle x_l \bm^l, \bB_i \bx \rangle + \langle x_l \bm^l, \bM_i \bx \rangle  \right) = \langle \overline{\bG}\bx, \bB_i \bx \rangle +  \langle \overline{\bG} \bx, \bM_i \bx \rangle.
\end{align*}
Thus, we combine the above expressions and obtain that for each $i \in N$,
\begin{align*}
& \E \left[ | (\bI + \bG + \bG^2)_i \bx - z_i |^2 \right]    \\
&=  \langle \bx, \bB_i \bx \rangle + \sum_{k=1}^n (\bv_{ik}^2 + \bm_{ik}^2 ) \langle \bx, \bB_k \bx \rangle + \langle  \overline{\bG}\bx, \bB_i \overline{\bG}\bx \rangle + 2\langle \overline{\bG}\bx, \bB_i \bx \rangle  + \langle \bx, \bM_i \bx \rangle + (\overline{\bG}^2 \bx)_i^2 \\
&\quad + 2 \langle \overline{\bG} \bx, \bM_i \bx \rangle +  2\langle x_i \bm_i, \bx \rangle + 2x_i(\overline{\bG}^2 \bx)_i +x_i^2   - 2 \langle z_i \bm_i, \bx \rangle - 2 z_i(\overline{\bG}^2 \bx)_i - 2z_i x_i + z_i^2.
\end{align*}

We now let $\alpha_k = \sum_{i=1}^n (\bv_{ik}^2 + \bm_{ik}^2)$. By summing over $i$, it follows that
\begin{align}\label{2ndobjective}
 \E \left[ \| (\bI \!+\! \bG \!+\! \bG^2) \bx \!-\! \bz \|^2 \right] \!=\! \sum_{i=1}^n \left(  \alpha_i \langle \bx,  \bB_i \bx \rangle \!+\!  \langle  (\overline{\bG}+ \bI) \bx, \bB_i (\overline{\bG} \!+\! \bI) \bx \rangle \right) + Q(\bx),
\end{align}
where $Q(\bx)$ is a quadratic function of $\bx$ not involving the uncertainty matrix $(\bB_i)_i$ for any $i \in N$. As such, we can consider $Q$ as an additional cost part of the DM's objective function. It is straightforward to see that the DM's objective function \eqref{2ndobjective} is still a linear function of $(\bB_i)_i$ as before. Consequently, we conclude that when $(\bB_1^*, \dots, \bB_n^*)$ is Nature's best response with respect to a given $\bx$ having no zero entry, its $i$th component $\bB_i^*$ is uniquely determined as a rank-1 matrix if and only if the following nondegeneracy condition holds:
\begin{align}
\alpha_i x_j x_k + \big((\overline{\bG} + \bI)\bx\big)_j  \big((\overline{\bG} + \bI)\bx\big)_k  \neq 0 \ \text{ for all } \, 1 \le j < k \le n.
\end{align}
Therefore, the proposition is proven.
\end{proof}

\end{document}